\providecommand{\U}[1]{\protect\rule{.1in}{.1in}}
\newtheorem{theorem}{Theorem}
\newtheorem{acknowledgement}[theorem]{Acknowledgement}
\newenvironment{proof}[1][Proof]{\noindent\textbf{#1.} }{\ \rule{0.5em}{0.5em}}
\begin{document}

\title{\textbf{Comparing two treatments in terms of the likelihood ratio order}}
\author{Martin, N.$^{1}$, Mata, R.$^{2}$ and Pardo, L.$^{2}$\\$^{1}${\small Department of Statistics, Carlos III University of Madrid, 28903
Getafe (Madrid), Spain}\\$^{2}${\small Department of Statistics and O.R., Complutense University of
Madrid, 28040 Madrid, Spain} }
\date{\today}
\maketitle

\begin{abstract}
In this paper new families of test statistics are introduced and studied for
the problem of comparing two treatments in terms of the likelihood ratio
order. The considered families are based on phi-divergence measures and arise
as natural extensions of the classical likelihood ratio test and Pearson test
statistics. It is proven that their asymptotic distribution is a common
chi-bar random variable. An illustrative example is presented and the
performance of these statistics is analysed through a simulation study.
Through a simulation study it is shown that, for most of the proposed
scenarios adjusted to be small or moderate, some members of this new family of
test-statistic display clearly better performance with respect to the power in
comparison to the classical likelihood ratio and the Pearson's chi-square test
while the exact size remains closed to the nominal size. In view of the exact
powers and significance levels, the study also shows that the Wilcoxon
test-statistic is not as good as the two classical test-statistics.

\end{abstract}

\bigskip

\noindent\emph{Keywords and phrases}\textbf{:} Divergence measure, Kullback
divergence measure, Inequality constrains, Likelihood ratio order, Loglinear models.

\section{Introduction}

In order to motivate the problem dealt in this paper, we have considered the
results of an experiment carried out by Doll and Pygott (1952) to assess the
factors influencing the rate of healing of gastric ulcers. Two treatments
groups were compared. Patients in group 2 were treated in bed in hospital for
four weeks. For the first two weeks they were given a moderate strict orthodox
diet and for the last two weeks a more liberal one. They were then reexamined
radiographically, discharged, recommended to continue on a convalescent diet
and advised return to work as soon as they felt fit enough. Patients in group
1 were discharged immediately. They were treated from the outset in the way
that group 2 patients were treated after their month's stay in hospital. In
Table \ref{tttt1}, we present the results showed by Doll and Pygott (1952,
Table IV) for three months after starting the treatments. This article
proposes new families of test-statistics when we are interested in studying
the possibility that the ulcer treatment (Treatment $2$) is better than the
control (Treatment $1$).%

\begin{table}[htbp]  \tabcolsep2.8pt  \centering
$%
\begin{tabular}
[c]{lcccc}\hline
& Larger & $<\frac{1}{3}$ Healed & $\geq\frac{2}{3}$ Healed & Healed\\\hline
Treatment $1$ & 11 & 8 & 8 & 5\\
Treatment $2$ & 6 & 4 & 10 & 12\\\hline
\end{tabular}
\ \ \ \ \ \ \ \ \ \ \ $\caption{Change in size of ulcer crater.\label{tttt1}}%
\end{table}%

Let $Y$ denote the ordinal response variable and $X$ denote an ordinal
explanatory variable with two categories. The variable $Y$ takes the values
$1$, $2$, $3$ and $4$, which represent different levels of healing, from less
to much capacity to heal the ulcer. The variable $X$ takes the values $1$ and
$2$ according as the treatment group, $1$ is control and $2$ is the treatment
group by itself. We shall initially focus on making statistical inference on
the theoretical probabilities displayed in Table \ref{ttt2}.%

\begin{table}[htbp]  \tabcolsep2.8pt  \centering
$%
\begin{tabular}
[c]{lcccc}\hline
& Larger & $<\frac{1}{3}$ Healed & $\geq\frac{2}{3}$ Healed & Healed\\\hline
Treatment $1$ & $\Pr(Y=1|X=1)$ & $\Pr(Y=2|X=1)$ & $\Pr(Y=3|X=1)$ &
$\Pr(Y=4|X=1)$\\
Treatment $2$ & $\Pr(Y=1|X=2)$ & $\Pr(Y=2|X=2)$ & $\Pr(Y=3|X=2)$ &
$\Pr(Y=4|X=2)$\\\hline
\end{tabular}
\ \ \ \ \ \ \ \ \ \ \ \ $%
\caption{Theoretical conditional probabilities.\label{ttt2}}%
\end{table}%

There are several ways of formulating the statement \textquotedblleft the
treatment is better than the control\textquotedblright. Initially, we shall
consider that Treatment $2$ is at least as good as Treatment $1$ if the ratio
$\frac{\Pr(Y=j|X=2)}{\Pr(Y=j|X=1)}$ increases as the response category, $j$,
increases, i.e.%
\begin{equation}
\tfrac{\Pr(Y=j|X=2)}{\Pr(Y=j|X=1)}\leq\tfrac{\Pr(Y=j+1|X=2)}{\Pr
(Y=j+1|X=1)}\qquad\text{for every }j\text{,} \label{eq1}%
\end{equation}
and Treatment 2 is better than the Treatment 1 if (\ref{eq1}) holds with at
least one strict inequality.

If we assume that Treatment 2 is at least as good as Treatment 1, i.e.,
(\ref{eq1}) holds, is there any evidence to support the claim that treatment
$2$ is better? In such a case null and alternative hypotheses may be%
\begin{subequations}
\begin{align}
&  H_{0}:\;\tfrac{\Pr(Y=j|X=2)}{\Pr(Y=j|X=1)}=\tfrac{\Pr(Y=j+1|X=2)}%
{\Pr(Y=j+1|X=1)}\quad\text{for every }j\text{,}\label{eq2}\\
&  H_{1}:\;\tfrac{\Pr(Y=j|X=2)}{\Pr(Y=j|X=1)}\leq\tfrac{\Pr(Y=j+1|X=2)}%
{\Pr(Y=j+1|X=1)}\quad\text{for every }j\quad\text{and}\quad\tfrac
{\Pr(Y=j|X=2)}{\Pr(Y=j|X=1)}<\tfrac{\Pr(Y=j+1|X=2)}{\Pr(Y=j+1|X=1)}%
\quad\text{for at least one }j\text{.} \label{eq3}%
\end{align}
The null hypothesis means that both treatments are equally effective, while
the alternative hypothesis means that Treatment 2 is more effective than
Treatment 1. Note that if we multiply on the left and right hand side of
(\ref{eq2}) and (\ref{eq3}) by $\left(  \tfrac{\Pr(Y=j|X=2)}{\Pr
(Y=j|X=1)}\right)  ^{-1}$ we obtain
\end{subequations}
\begin{subequations}
\begin{align}
&  H_{0}:\;\vartheta_{j}=1\quad\text{for every }j\in\{1,...,J-1\}\text{,}%
\label{eq2b}\\
&  H_{1}:\;\vartheta_{j}\geq1\quad\text{for every }j\in\{1,...,J-1\}\quad
\text{and}\quad\vartheta_{j}>1\quad\text{for at least one }j\in
\{1,...,J-1\}\text{,} \label{eq3b}%
\end{align}
where $J$ is the number of ordered categories for response variable $Y$,%
\end{subequations}
\begin{equation}
\vartheta_{j}=\dfrac{\pi_{1j}\pi_{2,j+1}}{\pi_{2j}\pi_{1,j+1}},\quad\forall
j\in\{1,...,J-1\}, \label{2}%
\end{equation}
are \textquotedblleft local odds ratios\textquotedblright\ associated with
response category $j$, and%
\begin{equation}
\pi_{ij}=\Pr(Y=j|X=i). \label{eq5}%
\end{equation}
In case of considering the opposite inequalities given in (\ref{eq3}) or
(\ref{eq3b}), the easiest way to carry out the test is to exchange the
observation of the two rows in the contingency table (in the example,
Treatment $2$ in the first row and Treatment $1$ in the second row). In this
way, the mathematical background is not changed but the interpretation of the
aim is changed. In the example however, there is no sense in considering that
the control ($1$) is better than the treatment ($2$), if the experiment is
carried out with humans and it is assumed that the treatment will not harm
these patients.

The non-parametric statistical inference associated with the likelihood ratio
ordering for two multinomial samples was introduced for the first time in
Dykstra et al. (1995) using the likelihood ratio test-statistic. In the
literature related to different types of orderings, in general there is not
very clear what is the most appropriate ordering to compare two treatments
according to a categorized ordinal variable. In the case of having two
independent multinomial samples, the likelihood ratio ordering is the most
restricted ordering type; for example, if the likelihood ratio ordering holds,
then the simple stochastic ordering also holds. Dardanoni and Forcina (1998)
proposed a new method for making statistical inference associated with
different types of orderings. For unifying and comparing different types of
orderings, they reparametrize the initial model. Different ordering types can
be considered to be nested models and the likelihood ratio ordering is the
most parsimonious one. The advantage of nested models is that the most
restricted models tend to be more powerful for the alternatives that belong to
the most restricted alternatives. In this setting, our proposal in this paper
is to introduce new test-statistics that provide substantially better power
for testing (\ref{eq2}) against (\ref{eq3}).

The structure of the paper is as follows. In Section \ref{Sec:LM}, we have
considered the likelihood ratio order associated with a non-parametric model,
as in Dardanoni and Forcina (1998), but the specification of the model through
a saturated loglinear model is substantially different. Section \ref{Sec:PD}
presents the phi-divergence test-statistics as extension of the likelihood
ratio and chi-square test-statistics.\ The applied methodology in Section
\ref{sec:Main results} for proving the asymptotic distribution of the
phi-divergence test-statistics, based on loglinear modeling, has been
developed by following a completely new and meaningful method even for the
likelihood ratio test. A numerical example is given in Section
\ref{sec:Numerical example}. The aim of Section \ref{sec:Simulation Study} is
to study through simulation the behaviour of the phi-divergence
test-statistics for small and moderate simple sizes. Finally, we present an
Appendix in which we establish the part of the proofs of the results not shown
in Section \ref{sec:Main results}.

\section{Loglinear modeling\label{Sec:LM}}

We display the whole distribution of $\pi_{ij}$, given in (\ref{eq5}), in a
rectangular table having $2$ rows for the categories of $X$ and $J$ columns
for the categories of $Y$ (for the initial example, Table \ref{ttt2}) and we
denote the $2\times J$ matrix $\boldsymbol{\Pi}=(\boldsymbol{\pi}%
_{1},\boldsymbol{\pi}_{2})^{T}$, with two rows of probability vectors,
$\boldsymbol{\pi}_{i}=(\pi_{i1},...,\pi_{iJ})^{T}$, $i=1,2$. We consider two
independent random samples $\boldsymbol{N}_{i}=(N_{i1},...,N_{iJ})^{T}%
\sim\mathcal{M}(n_{i},\boldsymbol{\pi}_{i})$, $i=1,2$, where sizes $n_{i}$ are
prefixed and $\boldsymbol{\pi}_{i}>\boldsymbol{0}_{J}$, that is the
probability distribution of r.v. $\boldsymbol{N}=(\boldsymbol{N}_{1}%
^{T},\boldsymbol{N}_{2}^{T})^{T}$ is product-multinomial. Let%
\begin{equation}
p_{ij}=\Pr(X=i,Y=j), \label{eq5b}%
\end{equation}
be the joint probability distribution. Since $\Pr(X=i,Y=j)=\Pr(Y=j|X=i)\Pr
(X=i)$, i.e. $p_{ij}=\pi_{ij}\frac{n_{i}}{n}$, $i=1,2$, where $n=n_{1}+n_{2}$,
we can express (\ref{2}) also in terms of the joint probabilities%
\begin{equation}
\vartheta_{j}=\dfrac{p_{1j}p_{2,j+1}}{p_{2j}p_{1,j+1}},\quad\forall
j\in\{1,...,J-1\}. \label{2b}%
\end{equation}
Let $\boldsymbol{P}=(\mathbf{p}_{1},\mathbf{p}_{2})^{T}$, with $\mathbf{p}%
_{i}=(p_{i1},...,p_{iJ})^{T}$, $i=1,2$, be the $2\times J$ probability matrix
and
\begin{equation}
\boldsymbol{p}=\mathrm{vec}(\boldsymbol{P}^{T})=(\mathbf{p}_{1}^{T}%
,\mathbf{p}_{2}^{T})^{T} \label{0}%
\end{equation}
a probability vector obtained by stacking the columns of $\boldsymbol{P}^{T}%
$\ (i.e., the rows of matrix $\boldsymbol{P}$). Note that the components of
$\boldsymbol{P}$ are ordered in lexicographical order in $\boldsymbol{p}$. The
likelihood function of $\boldsymbol{N}$ is $\mathcal{L}(\boldsymbol{N}%
;\boldsymbol{p})=k%
{\textstyle\prod\nolimits_{j=1}^{J}}
p_{1j}^{N_{1j}}p_{2j}^{N_{2j}}$, where $k$ is a constant which does not depend
on $\boldsymbol{p}$ and the kernel of the loglikelihood function%
\begin{equation}
\ell(\boldsymbol{N};\boldsymbol{p})=%
{\displaystyle\sum\limits_{j=1}^{J}}
(N_{1j}\log p_{1j}+N_{2j}\log p_{2j}). \label{0b}%
\end{equation}

In matrix notation, we are interested in testing
\begin{equation}
H_{0}:\boldsymbol{\vartheta}=\boldsymbol{1}_{J-1}\text{ versus }%
H_{1}:\boldsymbol{\vartheta}\gneqq\boldsymbol{1}_{J-1}\text{,} \label{4}%
\end{equation}
where $\boldsymbol{1}_{a}$\ is the $a$-vector of $1$-s, $\boldsymbol{\vartheta
}=(\vartheta_{1},...,\vartheta_{J-1})^{T}$. Note that (\ref{4}) involves $J-1$
non-linear constraints on $\boldsymbol{p}$, defined by (\ref{0}). In this
article the hypothesis testing problem is formulated making a
reparametrization of $\boldsymbol{p}$ using the saturated loglinear model, so
that some linear restrictions are considered with respect to the new
parameters. This fact is important and interesting.

Focussed on $\boldsymbol{p}$, the saturated loglinear model with canonical
parametrization is defined by
\begin{equation}
\log p_{ij}=u+u_{1(i)}+\theta_{2(j)}+\theta_{12(ij)}, \label{3}%
\end{equation}
with the identifiabilty restrictions%
\begin{equation}
u_{1(2)}=0,\quad\theta_{2(J)}=0,\quad\theta_{12(1J)}=0,\quad\theta
_{12(2j)}=0,\quad j=1,...,J. \label{ident}%
\end{equation}
It is important to clarify that we have used the identifiability constraints
(\ref{ident}) in order to make easier the calculations and this model
formulation for making statistical inference with inequality restrictions with
local odds-ratios has been given in this paper for the first time. Similar
conditions have been used for instance in Lang (1996, examples of Section 7)
and Silvapulle and Sen (2005, exercise 6.25 in page 345). Let
$\boldsymbol{\theta}_{12}=(\theta_{12(11)},...,\theta_{12(1,J-1)})^{T}$,
$\boldsymbol{\theta}_{2}=(\theta_{2(1)},...,\theta_{2(J-1)})^{T}$ denote
subvectors of the unknown parameters $\boldsymbol{\theta}=(\boldsymbol{\theta
}_{2}^{T},\boldsymbol{\theta}_{12}^{T})^{T}$. The components of
$\boldsymbol{u}=(u,u_{1(1)})^{T}$ are redundant parameters since the term $u$
can be expressed in function of $\boldsymbol{\theta}$ using the fact that $%
{\textstyle\sum\nolimits_{j=1}^{J}}
p_{2j}=\frac{n_{2}}{n}$, i.e.%
\begin{equation}
u=u(\boldsymbol{\theta})=\log n_{2}-\log n-\log\left(  1+%
{\displaystyle\sum\limits_{j=1}^{J-1}}
\exp\{\theta_{2(j)}\}\right)  , \label{u}%
\end{equation}
and $u_{1(1)}$ taking into account that $%
{\textstyle\sum\nolimits_{j=1}^{J}}
p_{1j}=\frac{n_{1}}{n}$, i.e.%
\begin{equation}
u_{1(1)}=u_{1(1)}(\boldsymbol{\theta})=\log\frac{n_{1}}{n_{2}}+\log\frac{1+%
{\textstyle\sum\nolimits_{j=1}^{J-1}}
\exp\{\theta_{2(j)}\}}{1+%
{\displaystyle\sum\limits_{j=1}^{J-1}}
\exp\{\theta_{2(j)}+\theta_{12(1j)}\}}. \label{u1}%
\end{equation}
In matrix notation (\ref{3}) is given by%
\begin{equation}
\log\boldsymbol{p}(\boldsymbol{\theta})=\boldsymbol{W}_{0}\boldsymbol{u}%
+\boldsymbol{W\theta}, \label{loglin}%
\end{equation}
where $\boldsymbol{p}(\boldsymbol{\theta})$ is $\boldsymbol{p}$\ such that the
components are defined by (\ref{3}),
\[
\boldsymbol{W}_{0}=%
\begin{pmatrix}
1 & 1\\
1 & 0
\end{pmatrix}
\otimes\boldsymbol{1}_{J}%
\]
is a $2J\times2$ matrix with $\boldsymbol{1}_{a}$\ being the $a$-vector of
ones, $\boldsymbol{0}_{a}$\ the $a$-vector of zeros, $\otimes$ the Kronecker
product; $\boldsymbol{W}$ the full rank design matrix of size $2J\times
2(J-1)$, such that%
\begin{equation}
\boldsymbol{W}=%
\begin{pmatrix}
1 & 1\\
1 & 0
\end{pmatrix}
\otimes%
\begin{pmatrix}
\boldsymbol{I}_{J-1}\\
\boldsymbol{0}_{J-1}^{T}%
\end{pmatrix}
, \label{W}%
\end{equation}
with $\boldsymbol{I}_{a}$\ being the identity matrix of order $a$,
$\boldsymbol{0}_{a\times b}$\ the matrix of size $a\times b$ with zeros. The
condition (\ref{eq1}) can be expressed by the linear constraint
\begin{equation}
\theta_{12(1j)}-\theta_{12(2j)}-\theta_{12(1,j+1)}+\theta_{12(2,j+1)}%
\geq0,\text{ }\forall j\in\{1,...,J-1\}, \label{6}%
\end{equation}
since%
\[
\log\vartheta_{j}=\log p_{1j}-\log p_{2j}-\log p_{1,j+1}+\log p_{2,j+1}%
=\theta_{12(1j)}-\theta_{12(2j)}-\theta_{12(1,j+1)}+\theta_{12(2,j+1)}.
\]
Condition (\ref{6}) in matrix notation is given by $\boldsymbol{R\theta}%
\geq\boldsymbol{0}_{J-1}$, with $\boldsymbol{R}=\boldsymbol{e}_{2}^{T}%
\otimes\boldsymbol{G}_{J-1}=(\boldsymbol{0}_{(J-1)\times(J-1)},\boldsymbol{G}%
_{J-1})$, $\boldsymbol{e}_{a}$ is the $a$-th unit vector and $\boldsymbol{G}%
_{h}$ is a $h\times h$\ matrix with $1$-s in the main diagonal and $-1$-s in
the upper superdiagonal. Observe that the restrictions can be expressed also
as $\boldsymbol{G}_{J-1}\boldsymbol{\theta}_{12}\geq\boldsymbol{0}_{J-1}$, and
$\theta_{1(1)}$\ are $\boldsymbol{\theta}_{2}$\ are nuisance parameters
because they do not take part actively in the restrictions.

The kernel of the likelihood function with the new parametrization is obtained
replacing $\boldsymbol{p}$\ by $\boldsymbol{p}(\boldsymbol{\theta})$ in
(\ref{0b}), i.e.
\[
\ell(\boldsymbol{N};\boldsymbol{\theta})=\boldsymbol{N}^{T}\log\boldsymbol{p}%
(\boldsymbol{\theta})=\boldsymbol{N}^{T}(\boldsymbol{W}_{0}\boldsymbol{u}%
+\boldsymbol{W\theta})=nu(\boldsymbol{\theta})+n_{1}u_{1(1)}%
(\boldsymbol{\theta})+\boldsymbol{N}^{T}\boldsymbol{W\theta}.
\]
Hypotheses (\ref{4}) can be now formulated as%
\begin{equation}
H_{0}:\boldsymbol{R\theta}=\boldsymbol{0}_{J-1}\text{ versus }H_{1}%
:\boldsymbol{R\theta}\geq\boldsymbol{0}_{J-1}\text{ and }\boldsymbol{R\theta
}\neq\boldsymbol{0}_{J-1}\text{.} \label{4b}%
\end{equation}
Under $H_{0}$, the parameter space is $\Theta_{0}=\left\{  \boldsymbol{\theta
}\in%
\mathbb{R}
^{2(J-1)}:\boldsymbol{R\theta}=\boldsymbol{0}_{J-1}\right\}  $ and the maximum
likelihood estimator (MLE) of $\boldsymbol{\theta}$ in $\Theta_{0}$ is
$\widehat{\boldsymbol{\theta}}=\arg\max_{\boldsymbol{\theta\in}\Theta_{0}}%
\ell(\boldsymbol{N};\boldsymbol{\theta})$. The overall parameter space is
$\Theta=\left\{  \boldsymbol{\theta}\in%
\mathbb{R}
^{^{2(J-1)}}:\boldsymbol{R\theta}\geq\boldsymbol{0}_{J-1}\right\}  $ and the
MLE of $\boldsymbol{\theta}$ in $\Theta$ is $\widetilde{\boldsymbol{\theta}%
}=\arg\max_{\boldsymbol{\theta\in}\Theta}\ell(\boldsymbol{N}%
;\boldsymbol{\theta})$. It is worthwhile to mention that the probability
vectors for both parametric spaces, $\boldsymbol{p}%
(\widehat{\boldsymbol{\theta}})$ and $\boldsymbol{p}%
(\widetilde{\boldsymbol{\theta}})$\ can be obtained by following the
invariance property of the MLEs first estimating $\boldsymbol{\theta}$\ and
later plugging it into $\boldsymbol{p}(\boldsymbol{\theta})$, however
$\boldsymbol{p}(\widehat{\boldsymbol{\theta}})$ has an explicit expression,%
\begin{equation}
p_{ij}(\widehat{\boldsymbol{\theta}})=\frac{n_{i}(N_{1j}+N_{2j})}{n^{2}},
\label{ind}%
\end{equation}
where $n_{i}=%
{\textstyle\sum_{j=1}^{J}}
N_{ij}$\ (see Christensen (1997), Section 2.3, for more details).

\section{Phi-divergence test-statistics\label{Sec:PD}}

The likelihood ratio statistic for testing (\ref{4}), equivalent to one given
by Dykstra et al. (1995) but adapted for loglinear modeling, is%
\begin{equation}
G^{2}=2(\ell(\boldsymbol{N};\widetilde{\boldsymbol{\theta}})-\ell
(\boldsymbol{N};\widehat{\boldsymbol{\theta}}))=2n%
{\displaystyle\sum\limits_{i=1}^{2}}
{\displaystyle\sum\limits_{j=1}^{J}}
\overline{p}_{ij}\log\frac{p_{ij}(\widetilde{\boldsymbol{\theta}})}%
{p_{ij}(\widehat{\boldsymbol{\theta}})}, \label{LRT}%
\end{equation}
where $\overline{p}_{ij}=N_{ij}/n$, $i=1,2$, $j=1,...,J$. Taking into account
the identifiability constraints (\ref{ident}) and\ $\widehat{u}%
=u(\widehat{\boldsymbol{\theta}})$, $\widetilde{u}%
=u(\widetilde{\boldsymbol{\theta}})$, $\widehat{u}_{1(1)}=u_{1(1)}%
(\widehat{\boldsymbol{\theta}})$, $\widetilde{u}_{1(1)}=u_{1(1)}%
(\widetilde{\boldsymbol{\theta}})$\ (see formulas (\ref{u})-(\ref{u1})),
(\ref{LRT}) can also be expressed as%
\[
G^{2}=2n(\widetilde{u}-\widehat{u})+2n_{1}(\widetilde{u}_{1(1)}-\widehat{u}%
_{1(1)})+2\boldsymbol{N}^{T}\boldsymbol{W}%
(\boldsymbol{\widetilde{\boldsymbol{\theta}}-}\widehat{\boldsymbol{\theta}}).
\]
The chi-square statistic for testing (\ref{4}) is%
\begin{equation}
X^{2}=n%
{\displaystyle\sum\limits_{i=1}^{2}}
{\displaystyle\sum\limits_{j=1}^{J}}
\frac{(p_{ij}(\widehat{\boldsymbol{\theta}})-p_{ij}%
(\widetilde{\boldsymbol{\theta}}))^{2}}{p_{ij}(\widehat{\boldsymbol{\theta}}%
)}. \label{CS}%
\end{equation}

The Kullback-Leibler divergence measure between two $2J$-dimensional
probability vectors $\boldsymbol{p}$ and $\boldsymbol{q}$ is defined as%
\[
d_{Kull}(\boldsymbol{p},\boldsymbol{q})=%
{\displaystyle\sum\limits_{i=1}^{2}}
\sum_{j=1}^{J}p_{ij}\log\frac{p_{ij}}{q_{ij}}%
\]
and the Pearson divergence measure
\[
d_{Pearson}(\boldsymbol{p},\boldsymbol{q})=\frac{1}{2}%
{\displaystyle\sum\limits_{i=1}^{2}}
\sum_{j=1}^{J}\frac{\left(  p_{ij}-q_{ij}\right)  ^{2}}{q_{ij}}.
\]
It is not difficult to check that
\begin{equation}
G^{2}=2n(d_{Kull}(\overline{\boldsymbol{p}},\boldsymbol{p}%
(\widehat{\boldsymbol{\theta}}))-d_{Kull}(\overline{\boldsymbol{p}%
},\boldsymbol{p}(\widetilde{\boldsymbol{\theta}}))) \label{N1}%
\end{equation}
and
\begin{equation}
X^{2}=2nd_{Pearson}(\boldsymbol{p}(\widetilde{\boldsymbol{\theta}%
}),\boldsymbol{p}(\widehat{\boldsymbol{\theta}})), \label{N2}%
\end{equation}
being $\overline{\boldsymbol{p}}=\boldsymbol{N}/n=(\overline{p}_{11}%
,...,\overline{p}_{1J},\overline{p}_{21},....,\overline{p}_{2J})^{T}$ the
vector of relative frequencies.

More general than the Kullback-Leibler divergence and Pearson divergence
measures are $\phi$-divergence measures, defined as
\[
d_{\phi}(\boldsymbol{p},\boldsymbol{q})=%
{\displaystyle\sum\limits_{i=1}^{2}}
\sum_{j=1}^{J}q_{ij}\phi\left(  \frac{p_{ij}}{q_{ij}}\right)  ,
\]
where $\phi:%
\mathbb{R}
_{+}\longrightarrow%
\mathbb{R}
$ is a convex function such that%
\[
\phi(1)=\phi^{\prime}(1)=0\text{, }\phi^{\prime\prime}(1)>0\text{, }%
0\phi(\tfrac{0}{0})=0\text{, }0\phi(\tfrac{p}{0})=p\lim_{u\rightarrow\infty
}\tfrac{\phi(u)}{u}\text{, for }p\neq0.
\]
From a statistical point of view, the first asymptotic statistical results
based on divergence measures in multinomial populations were obtained in
Zografos et al. (1990). For more details about $\phi$-divergence measures see
Pardo (2006) and Cressie and Pardo (2002).

Apart from the likelihood ratio statistic (\ref{LRT}) and the chi-square
(\ref{CS}) statistic, we shall consider two new families of test-statistics
based on $\phi$-divergence measures. The first new family is obtained by
replacing in (\ref{N1}) the Kullback divergence measure by a $\phi$-divergence
measure,%
\begin{equation}
T_{\phi}(\overline{\boldsymbol{p}},\boldsymbol{p}%
(\widetilde{\boldsymbol{\theta}}),\boldsymbol{p}(\widehat{\boldsymbol{\theta}%
}))=\frac{2n}{\phi^{\prime\prime}(1)}(d_{\phi}(\overline{\boldsymbol{p}%
},\boldsymbol{p}(\widehat{\boldsymbol{\theta}}))-d_{\phi}(\overline
{\boldsymbol{p}},\boldsymbol{p}(\widetilde{\boldsymbol{\theta}}))). \label{5a}%
\end{equation}
The second new family is obtained by replacing in (\ref{N2}) the Pearson
divergence measure by a $\phi$-divergence measure,%
\begin{equation}
S_{\phi}(\boldsymbol{p}(\widetilde{\boldsymbol{\theta}}),\boldsymbol{p}%
(\widehat{\boldsymbol{\theta}}))=\frac{2n}{\phi^{\prime\prime}(1)}d_{\phi
}(\boldsymbol{p}(\widetilde{\boldsymbol{\theta}}),\boldsymbol{p}%
(\widehat{\boldsymbol{\theta}})). \label{5b}%
\end{equation}
If we consider $\phi(x)=x\log x-x+1$ in (\ref{5a}), we get $G^{2}$, and if we
consider $\phi(x)=\tfrac{1}{2}(x-1)^{2}$ in (\ref{5a}), we get $X^{2}$.
Test-statistics based on $\phi$-divergence measures have been used in the
framework of loglinear models for some authors, see Cressie and Pardo (2000,
2002, 2003), Mart\'{\i}n and Pardo (2006, 2008b, 2011).

\section{Asymptotic results\label{sec:Main results}}

As starting point, we shall establish the observed Fisher information matrix
associated with $\boldsymbol{\theta}$, $\mathcal{I}_{F}^{(n_{1},n_{2}%
)}(\boldsymbol{\theta})$, for a loglinear model with product-multinomial
sampling as%
\begin{equation}
\mathcal{I}_{F}^{(n_{1},n_{2})}(\boldsymbol{\theta})=\frac{1}{n}%
\boldsymbol{W}^{T}%
\begin{pmatrix}
n_{1}(\boldsymbol{D}_{\boldsymbol{\pi}_{1}(\boldsymbol{\theta})}%
-\boldsymbol{\pi}_{1}(\boldsymbol{\theta})\boldsymbol{\pi}_{1}^{T}%
(\boldsymbol{\theta})) & \boldsymbol{0}_{J\times J}\\
\boldsymbol{0}_{J\times J} & n_{2}(\boldsymbol{D}_{\boldsymbol{\pi}%
_{2}(\boldsymbol{\theta})}-\boldsymbol{\pi}_{2}(\boldsymbol{\theta
})\boldsymbol{\pi}_{2}^{T}(\boldsymbol{\theta}))
\end{pmatrix}
\boldsymbol{W}, \label{FIM1}%
\end{equation}
where $\boldsymbol{D}_{\boldsymbol{a}}$\ is the diagonal matrix of vector
$\boldsymbol{a}$. To proof (\ref{FIM1}), we take into account that the overall
observed Fisher information matrix for product multinomial sampling is the
weighted observed Fisher information matrix associated with each multinomial
sample, $\mathcal{I}_{F,i}^{(n_{1},n_{2})}(\boldsymbol{\theta})$, $i=1,2$,
i.e.%
\begin{align*}
\mathcal{I}_{F}^{(n_{1},n_{2})}(\boldsymbol{\theta})  &  =\frac{n_{1}}%
{n}\mathcal{I}_{F,1}^{(n_{1},n_{2})}(\boldsymbol{\theta})+\frac{n_{2}}%
{n}\mathcal{I}_{F,2}^{(n_{1},n_{2})}(\boldsymbol{\theta}),\\
\mathcal{I}_{F,i}^{(n_{1},n_{2})}(\boldsymbol{\theta})  &  =\boldsymbol{W}%
_{i}^{T}(\boldsymbol{D}_{\boldsymbol{\pi}_{i}(\boldsymbol{\theta}%
)}-\boldsymbol{\pi}_{i}(\boldsymbol{\theta})\boldsymbol{\pi}_{i}%
^{T}(\boldsymbol{\theta}))\boldsymbol{W}_{i},\quad i=1,2,
\end{align*}
such that $\boldsymbol{W}^{T}=(\boldsymbol{W}_{1}^{T},\boldsymbol{W}_{2}^{T}%
)$, $\log\boldsymbol{p}_{1}(\boldsymbol{\theta})=u\boldsymbol{1}_{J}%
+u_{1(1)}\boldsymbol{1}_{J}+\boldsymbol{W}_{1}\boldsymbol{\theta}$ and
$\log\boldsymbol{p}_{2}(\boldsymbol{\theta})=u\boldsymbol{1}_{J}%
+\boldsymbol{W}_{2}\boldsymbol{\theta}$.

When $\boldsymbol{\theta}\in\Theta_{0}$, we shall denote $\boldsymbol{\theta
}_{0}$\ to be the true value of the unknown parameter under $H_{0}$, and in
such a case it holds $\boldsymbol{\pi}_{1}(\boldsymbol{\theta}_{0}%
)=\boldsymbol{\pi}_{2}(\boldsymbol{\theta}_{0})=\boldsymbol{\pi}%
(\boldsymbol{\theta}_{0})=(\pi_{1}(\boldsymbol{\theta}_{0}),...,\pi
_{J}(\boldsymbol{\theta}_{0}))^{T}$, where $\boldsymbol{\pi}_{i}%
(\boldsymbol{\theta}_{0})$ is defined as the probability vector with the terms
given in (\ref{eq5}) and related to the loglinear model through
$\boldsymbol{p}_{i}(\boldsymbol{\theta}_{0})=\frac{n_{i}}{n}\boldsymbol{\pi
}_{i}(\boldsymbol{\theta}_{0})$, $i=1,2$. Notice that $\boldsymbol{\pi}%
_{i}(\boldsymbol{\theta}_{0})$ is fixed as $n_{1},n_{2}\rightarrow\infty$ and
we shall assume that%
\[
\nu_{i}=\lim_{n_{i}\rightarrow\infty}\frac{n_{i}}{n},\quad i=1,2,
\]
is fixed but unknown, i.e. $\lim_{n_{i}\rightarrow\infty}\boldsymbol{p}%
_{i}(\boldsymbol{\theta})=\nu_{i}\boldsymbol{\pi}_{i}(\boldsymbol{\theta}%
_{0})$, $i=1,2$. We shall also denote%
\[
\boldsymbol{\pi}^{\ast}(\boldsymbol{\theta}_{0})=(\pi_{1}(\boldsymbol{\theta
}_{0}),...,\pi_{J-1}(\boldsymbol{\theta}_{0}))^{T},\quad i=1,2.
\]
the $(J-1)$-dimensional vector obtained removing from $\boldsymbol{\pi
}(\boldsymbol{\theta}_{0})$\ the last element. Focussing on the parameter
structure $\boldsymbol{\theta}=(\boldsymbol{\theta}_{12}^{T}%
,\boldsymbol{\theta}_{2}^{T})^{T}$, with $\boldsymbol{\theta}_{12}%
=(\theta_{12(11)},...,\theta_{12(1,J-1)})^{T}$, $\boldsymbol{\theta}%
_{2}=(\theta_{2(1)},...,\theta_{2(J-1)})^{T}$ and the specific structure of
$\boldsymbol{W}$, see (\ref{W}), we shall establish asymptotically the
specific shape of (\ref{FIM1}), a fundamental result for the posterior theorems.

\begin{theorem}
The asymptotic Fisher information matrix of $\boldsymbol{\theta}$,
$\mathcal{I}_{F}(\boldsymbol{\theta})=\lim_{n_{1},n_{2}\rightarrow\infty
}\mathcal{I}_{F}^{(n_{1},n_{2})}(\boldsymbol{\theta})$ when
$\boldsymbol{\theta}\in\Theta_{0}$ is given by%
\begin{equation}
\mathcal{I}_{F}(\boldsymbol{\theta}_{0})=%
\begin{pmatrix}
\boldsymbol{D}_{\boldsymbol{\pi}^{\ast}(\boldsymbol{\theta}_{0})}%
-\boldsymbol{\pi}^{\ast}(\boldsymbol{\theta}_{0})\boldsymbol{\pi}^{\ast
T}(\boldsymbol{\theta}_{0}) & \nu_{1}\left(  \boldsymbol{D}_{\boldsymbol{\pi
}^{\ast}(\boldsymbol{\theta}_{0})}-\boldsymbol{\pi}^{\ast}(\boldsymbol{\theta
}_{0})\boldsymbol{\pi}^{\ast T}(\boldsymbol{\theta}_{0})\right) \\
\nu_{1}\left(  \boldsymbol{D}_{\boldsymbol{\pi}^{\ast}(\boldsymbol{\theta}%
_{0})}-\boldsymbol{\pi}^{\ast}(\boldsymbol{\theta}_{0})\boldsymbol{\pi}^{\ast
T}(\boldsymbol{\theta}_{0})\right)  & \nu_{1}\left(  \boldsymbol{D}%
_{\boldsymbol{\pi}^{\ast}(\boldsymbol{\theta}_{0})}-\boldsymbol{\pi}^{\ast
}(\boldsymbol{\theta}_{0})\boldsymbol{\pi}^{\ast T}(\boldsymbol{\theta}%
_{0})\right)
\end{pmatrix}
. \label{FIM2}%
\end{equation}

\end{theorem}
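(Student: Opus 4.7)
The plan is to substitute the Kronecker structure of $\boldsymbol{W}$ from (\ref{W}) directly into (\ref{FIM1}), use the identity $\boldsymbol{\pi}_{1}(\boldsymbol{\theta}_{0}) = \boldsymbol{\pi}_{2}(\boldsymbol{\theta}_{0}) = \boldsymbol{\pi}(\boldsymbol{\theta}_{0})$ (valid for every $\boldsymbol{\theta}_{0} \in \Theta_{0}$) to collapse the two diagonal $J\times J$ blocks in (\ref{FIM1}) into a common matrix $\boldsymbol{V} := \boldsymbol{D}_{\boldsymbol{\pi}(\boldsymbol{\theta}_{0})} - \boldsymbol{\pi}(\boldsymbol{\theta}_{0})\boldsymbol{\pi}^{T}(\boldsymbol{\theta}_{0})$, and then unfold the result as a $2\times 2$ block matrix on the parameter side.

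Concretely, splitting $\boldsymbol{W}^{T} = (\boldsymbol{W}_{1}^{T}, \boldsymbol{W}_{2}^{T})$ row-wise, (\ref{FIM1}) reduces to $\mathcal{I}_{F}^{(n_{1},n_{2})}(\boldsymbol{\theta}_{0}) = \tfrac{n_{1}}{n}\boldsymbol{W}_{1}^{T} \boldsymbol{V} \boldsymbol{W}_{1} + \tfrac{n_{2}}{n}\boldsymbol{W}_{2}^{T} \boldsymbol{V} \boldsymbol{W}_{2}$. From (\ref{W}) one reads off $\boldsymbol{W}_{1} = (\boldsymbol{K}, \boldsymbol{K})$ and $\boldsymbol{W}_{2} = (\boldsymbol{K}, \boldsymbol{0}_{J\times(J-1)})$, where $\boldsymbol{K}$ denotes the $J\times(J-1)$ matrix obtained by appending a zero row below $\boldsymbol{I}_{J-1}$. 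Since left and right multiplication by $\boldsymbol{K}^{T}$ and $\boldsymbol{K}$ simply extracts the leading $(J-1)\times(J-1)$ submatrix of any $J\times J$ matrix, one has $\boldsymbol{K}^{T} \boldsymbol{V} \boldsymbol{K} = \boldsymbol{D}_{\boldsymbol{\pi}^{\ast}(\boldsymbol{\theta}_{0})} - \boldsymbol{\pi}^{\ast}(\boldsymbol{\theta}_{0})\boldsymbol{\pi}^{\ast T}(\boldsymbol{\theta}_{0}) =: \boldsymbol{V}^{\ast}$. A routine $2\times 2$ block computation then yields
\begin{equation*}
\boldsymbol{W}_{1}^{T} \boldsymbol{V} \boldsymbol{W}_{1} = \begin{pmatrix} \boldsymbol{V}^{\ast} & \boldsymbol{V}^{\ast} \\ \boldsymbol{V}^{\ast} & \boldsymbol{V}^{\ast} \end{pmatrix}, \qquad \boldsymbol{W}_{2}^{T} \boldsymbol{V} \boldsymbol{W}_{2} = \begin{pmatrix} \boldsymbol{V}^{\ast} & \boldsymbol{0} \\ \boldsymbol{0} & \boldsymbol{0} \end{pmatrix}.
\end{equation*}

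To close the argument I would pass to the limit $n_{i}/n \to \nu_{i}$ and use $\nu_{1} + \nu_{2} = 1$. The $(1,1)$ block of the weighted sum becomes $(\nu_{1} + \nu_{2})\boldsymbol{V}^{\ast} = \boldsymbol{V}^{\ast}$, while each of the three remaining blocks picks up only the $\nu_{1}$ factor coming from the first summand, producing exactly the matrix displayed in (\ref{FIM2}). There is no serious obstacle in this plan, since every step is a direct linear-algebraic manipulation once $\boldsymbol{\pi}_{1} = \boldsymbol{\pi}_{2}$ has been invoked; the only point I would double-check is the ordering convention for the subvectors of $\boldsymbol{\theta}$, since the text briefly exhibits both $(\boldsymbol{\theta}_{2}^{T}, \boldsymbol{\theta}_{12}^{T})^{T}$ and the reverse, and this ordering dictates which block of the final matrix appears unweighted and which carries the factor $\nu_{1}$.
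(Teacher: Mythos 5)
Your proposal is correct and follows essentially the same route as the paper: both substitute the explicit Kronecker form of $\boldsymbol{W}$ into (\ref{FIM1}), use $\boldsymbol{\pi}_{1}(\boldsymbol{\theta}_{0})=\boldsymbol{\pi}_{2}(\boldsymbol{\theta}_{0})$ under $H_{0}$, and arrive at $\begin{pmatrix}1&\nu_{1}\\ \nu_{1}&\nu_{1}\end{pmatrix}\otimes\left(\boldsymbol{D}_{\boldsymbol{\pi}^{\ast}(\boldsymbol{\theta}_{0})}-\boldsymbol{\pi}^{\ast}(\boldsymbol{\theta}_{0})\boldsymbol{\pi}^{\ast T}(\boldsymbol{\theta}_{0})\right)$; the only cosmetic difference is that the paper invokes the Kronecker mixed-product identity from Harville while you carry out the same $2\times2$ block multiplication by hand. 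Your closing caveat about the ordering of $(\boldsymbol{\theta}_{12},\boldsymbol{\theta}_{2})$ is well taken, since the paper does display both orderings, but it does not affect the validity of the computation given $\boldsymbol{W}$ as in (\ref{W}).
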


\begin{proof}
Replacing $\boldsymbol{\theta}$ by $\boldsymbol{\theta}_{0}$ and the explicit
expression of $\boldsymbol{W}$\ in the general expression of the finite sample
size Fisher information matrix for two independent multinomial samples,
(\ref{FIM1}), we obtain through the property of the Kronecker product given in
(1.22) of Harville (2008, page 341) that%
\begin{align*}
\mathcal{I}_{F}^{(n_{1},n_{2})}(\boldsymbol{\theta}_{0})  &  =\left(
\begin{pmatrix}
1 & 1\\
1 & 0
\end{pmatrix}
\otimes%
\begin{pmatrix}
\boldsymbol{I}_{J-1}\\
\boldsymbol{0}_{J-1}^{T}%
\end{pmatrix}
^{T}\right)  \left(  diag\{\tfrac{n_{i}}{n}\}_{i=1}^{2}\otimes(\boldsymbol{D}%
_{\boldsymbol{\pi}(\boldsymbol{\theta}_{0})}-\boldsymbol{\pi}%
(\boldsymbol{\theta}_{0})\boldsymbol{\pi}^{T}(\boldsymbol{\theta}%
_{0}))\right)  \left(
\begin{pmatrix}
1 & 1\\
1 & 0
\end{pmatrix}
\otimes%
\begin{pmatrix}
\boldsymbol{I}_{J-1}\\
\boldsymbol{0}_{J-1}^{T}%
\end{pmatrix}
\right) \\
&  =\left(
\begin{pmatrix}
1 & 1\\
1 & 0
\end{pmatrix}
diag\{\tfrac{n_{i}}{n}\}_{i=1}^{2}%
\begin{pmatrix}
1 & 1\\
1 & 0
\end{pmatrix}
\right)  \otimes\left(
\begin{pmatrix}
\boldsymbol{I}_{J-1}\\
\boldsymbol{0}_{J-1}^{T}%
\end{pmatrix}
^{T}(\boldsymbol{D}_{\boldsymbol{\pi}(\boldsymbol{\theta}_{0})}%
-\boldsymbol{\pi}(\boldsymbol{\theta}_{0})\boldsymbol{\pi}^{T}%
(\boldsymbol{\theta}_{0}))%
\begin{pmatrix}
\boldsymbol{I}_{J-1}\\
\boldsymbol{0}_{J-1}^{T}%
\end{pmatrix}
\right) \\
&  =%
\begin{pmatrix}
1 & \tfrac{n_{1}}{n}\\
\tfrac{n_{1}}{n} & \tfrac{n_{1}}{n}%
\end{pmatrix}
\otimes\left(  \boldsymbol{D}_{\boldsymbol{\pi}^{\ast}(\boldsymbol{\theta}%
_{0})}-\boldsymbol{\pi}^{\ast}(\boldsymbol{\theta}_{0})\boldsymbol{\pi}^{\ast
T}(\boldsymbol{\theta}_{0})\right)  ,
\end{align*}
and then%
\begin{equation}
\mathcal{I}_{F}(\boldsymbol{\theta}_{0})=%
\begin{pmatrix}
1 & \nu_{1}\\
\nu_{1} & \nu_{1}%
\end{pmatrix}
\otimes\left(  \boldsymbol{D}_{\boldsymbol{\pi}^{\ast}(\boldsymbol{\theta}%
_{0})}-\boldsymbol{\pi}^{\ast}(\boldsymbol{\theta}_{0})\boldsymbol{\pi}^{\ast
T}(\boldsymbol{\theta}_{0})\right)  . \label{iF}%
\end{equation}

\end{proof}

The following theorem establishes that the asymptotic distribution of the
families of test statistics (\ref{5a}) and (\ref{5b}) corresponds to a
$J$-dimensional chi-bar squared random variable, a mixture of $J$ chi-squared
distributions. Let $E=\{1,...,J-1\}$ be the whole set of all row-indices of
matrix $\boldsymbol{R}$, $\mathcal{F}(E)$ the family of all possible subsets
of $E$, and $\boldsymbol{R}(S\mathbf{)}$ is a submatrix of $\boldsymbol{R}%
$\ with row-\'{\i}ndices belonging to $S\in\mathcal{F}(E)$. We must not forget
that $\boldsymbol{R}=(\boldsymbol{0}_{(J-1)\times(J-1)},\boldsymbol{G}_{J-1})$
and therefore $\boldsymbol{R}(S)=(\boldsymbol{0}_{card(S)\times(J-1)}%
,\boldsymbol{G}_{J-1}(S))$.

We denote by $\boldsymbol{H}(\boldsymbol{\theta})$\ the following
$(J-1)\times(J-1)$ tridiagonal matrix%
\begin{equation}
\boldsymbol{H}(\boldsymbol{\theta})=\frac{1}{\nu_{1}\nu_{2}}%
\begin{pmatrix}
\frac{\pi_{1}(\boldsymbol{\theta})+\pi_{2}(\boldsymbol{\theta})}{\pi
_{1}(\boldsymbol{\theta})\pi_{2}(\boldsymbol{\theta})} & -\frac{1}{\pi
_{2}(\boldsymbol{\theta})} &  &  & \\
-\frac{1}{\pi_{1}(\boldsymbol{\theta})} & \frac{\pi_{2}(\boldsymbol{\theta
})+\pi_{3}(\boldsymbol{\theta})}{\pi_{2}(\boldsymbol{\theta})\pi
_{3}(\boldsymbol{\theta})} & -\frac{1}{\pi_{3}(\boldsymbol{\theta})} &  & \\
& -\frac{1}{\pi_{3}(\boldsymbol{\theta})} & \frac{\pi_{3}(\boldsymbol{\theta
})+\pi_{4}(\boldsymbol{\theta})}{\pi_{3}(\boldsymbol{\theta})\pi
_{4}(\boldsymbol{\theta})} & \ddots & \\
&  & \ddots & \ddots & -\frac{1}{\pi_{J-1}(\boldsymbol{\theta})}\\
&  &  & -\frac{1}{\pi_{J-1}(\boldsymbol{\theta})} & \frac{\pi_{J-1}%
(\boldsymbol{\theta})+\pi_{J}(\boldsymbol{\theta})}{\pi_{J-1}%
(\boldsymbol{\theta})\pi_{J}(\boldsymbol{\theta})}%
\end{pmatrix}
, \label{H}%
\end{equation}
and by $\boldsymbol{H}(S_{1},S_{2},\boldsymbol{\theta})$ the submatrix of
$\boldsymbol{H}(\boldsymbol{\theta})$\ obtained by deleting from it the
row-indices contained in the set $S_{1}$ and column-indices contained in the
set $S_{2}$.

\begin{theorem}
\label{Th1}Under $H_{0}$, the asymptotic distribution of $S_{\phi
}(\boldsymbol{p}(\widetilde{\boldsymbol{\theta}}),\boldsymbol{p}%
(\widehat{\boldsymbol{\theta}}))$\ and $T_{\phi}(\overline{\boldsymbol{p}%
},\boldsymbol{p}(\widetilde{\boldsymbol{\theta}}),\boldsymbol{p}%
(\widehat{\boldsymbol{\theta}}))$\ is%
\[
\lim_{n_{1},n_{2}\rightarrow\infty}\Pr\left(  S_{\phi}(\boldsymbol{p}%
(\widetilde{\boldsymbol{\theta}}),\boldsymbol{p}(\widehat{\boldsymbol{\theta}%
}))\leq x\right)  =\lim_{n_{1},n_{2}\rightarrow\infty}\Pr\left(  T_{\phi
}(\overline{\boldsymbol{p}},\boldsymbol{p}(\widetilde{\boldsymbol{\theta}%
}),\boldsymbol{p}(\widehat{\boldsymbol{\theta}}))\leq x\right)  =\sum
_{j=0}^{J-1}w_{j}(\boldsymbol{\theta}_{0})\Pr\left(  \chi_{(J-1)-j}^{2}\leq
x\right)
\]
where $\chi_{0}^{2}=0$ a.s. and $\{w_{j}(\boldsymbol{\theta}_{0}%
)\}_{j=0}^{J-1}$ is the set of weights such that $\sum_{j=0}^{J-1}%
w_{j}(\boldsymbol{\theta}_{0})=1$ and%
\begin{equation}
w_{j}(\boldsymbol{\theta}_{0})=\sum_{S\in\mathcal{F}(E),\mathrm{card}(S)=j}%
\Pr\left(  \boldsymbol{Z}_{1}(S)\geq\boldsymbol{0}_{j}\right)  \Pr\left(
\boldsymbol{Z}_{2}(S)\geq\boldsymbol{0}_{(J-1)-j}\right)  , \label{eqw}%
\end{equation}
where%
\begin{align*}
\boldsymbol{Z}_{1}(S)  &  \sim\mathcal{N}\left(  \boldsymbol{0}_{\mathrm{card}%
(S)},\boldsymbol{H}^{-1}(S,S,\boldsymbol{\theta}_{0})\right)  ,\\
\boldsymbol{Z}_{2}(S)  &  \sim\mathcal{N}\left(  \boldsymbol{0}%
_{(J-1)-\mathrm{card}(S)},\boldsymbol{H}(S^{C},S^{C},\boldsymbol{\theta}%
_{0})-\boldsymbol{H}(S^{C},S,\boldsymbol{\theta}_{0})\boldsymbol{H}%
^{-1}(S,S,\boldsymbol{\theta}_{0})\boldsymbol{H}^{T}(S^{C}%
,S,\boldsymbol{\theta}_{0})\right)  ,
\end{align*}
$S^{C}=E-S$ and $\mathrm{card}(S)$ denotes the cardinal of the set $S$.
\end{theorem}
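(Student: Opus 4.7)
The plan has four stages, each drawing on a classical tool. In broad strokes, I would: (i) use a second-order Taylor expansion of $\phi$ at $1$ to reduce $T_{\phi}$ and $S_{\phi}$ to the classical likelihood ratio statistic $G^{2}$; (ii) expand the log-likelihood about $\boldsymbol{\theta}_{0}$ to recast $G^{2}$ as the squared Mahalanobis distance between a linear and a conic projection of an asymptotically Gaussian vector; (iii) invoke Kudo's chi-bar-squared representation for the projection of a centered Gaussian onto the positive orthant; and (iv) compute the relevant covariance explicitly, identifying it with $\boldsymbol{H}(\boldsymbol{\theta}_{0})$.

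For (i), the expansion $\phi(u)=\tfrac{\phi''(1)}{2}(u-1)^{2}+o((u-1)^{2})$ yields $d_{\phi}(\boldsymbol{p},\boldsymbol{q})=\tfrac{\phi''(1)}{2}\sum_{i,j}(p_{ij}-q_{ij})^{2}/q_{ij}+o(\|\boldsymbol{p}-\boldsymbol{q}\|^{2})$. Under $H_{0}$, each of $\overline{\boldsymbol{p}}$, $\boldsymbol{p}(\widehat{\boldsymbol{\theta}})$ and $\boldsymbol{p}(\widetilde{\boldsymbol{\theta}})$ converges to the common population vector at rate $O_{P}(n^{-1/2})$, so substitution into (\ref{5a})--(\ref{5b}) gives $T_{\phi}=G^{2}+o_{P}(1)$ and $S_{\phi}=X^{2}+o_{P}(1)$; with the classical identity $X^{2}-G^{2}=o_{P}(1)$, the problem reduces to the limiting law of $G^{2}$. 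For (ii), a quadratic expansion of $\ell(\boldsymbol{N};\boldsymbol{\theta})$ using Theorem 1 expresses $G^{2}$ as the difference of two squared norms in the $\mathcal{I}_{F}(\boldsymbol{\theta}_{0})$-metric. Since the nuisance block $\boldsymbol{\theta}_{2}$ is unconstrained under both hypotheses, the profile-likelihood argument eliminates it, and the invertible change of variables $\boldsymbol{\psi}=\boldsymbol{G}_{J-1}\boldsymbol{\theta}_{12}$ reduces the problem to testing $\boldsymbol{\psi}=\boldsymbol{0}$ versus $\boldsymbol{\psi}\geq\boldsymbol{0}$ for a Gaussian shift with covariance $\boldsymbol{V}=\boldsymbol{R}\,\mathcal{I}_{F}^{-1}(\boldsymbol{\theta}_{0})\,\boldsymbol{R}^{T}$. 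A short KKT calculation then gives $G^{2}=\|\widehat{\boldsymbol{\mu}}\|^{2}_{\boldsymbol{V}^{-1}}+o_{P}(1)$, where $\widehat{\boldsymbol{\mu}}$ is the $\boldsymbol{V}^{-1}$-Mahalanobis projection of $\boldsymbol{Y}\sim\mathcal{N}(\boldsymbol{0},\boldsymbol{V})$ onto the positive orthant.

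For (iii), the KKT system for $\widehat{\boldsymbol{\mu}}$ shows that the active set equals a given $S$ precisely when the Lagrange multipliers $\boldsymbol{W}_{S}=-\boldsymbol{V}_{S,S}^{-1}\boldsymbol{Y}_{S}\sim\mathcal{N}(\boldsymbol{0},\boldsymbol{V}_{S,S}^{-1})$ are positive and the regression residual $\boldsymbol{T}_{S^{C}}=\boldsymbol{Y}_{S^{C}}-\boldsymbol{V}_{S^{C},S}\boldsymbol{V}_{S,S}^{-1}\boldsymbol{Y}_{S}\sim\mathcal{N}(\boldsymbol{0},\boldsymbol{V}_{S^{C}|S})$ is positive; these two events are independent by Gaussian regression. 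On that face $\|\widehat{\boldsymbol{\mu}}\|^{2}_{\boldsymbol{V}^{-1}}=\boldsymbol{T}_{S^{C}}^{T}\boldsymbol{V}_{S^{C}|S}^{-1}\boldsymbol{T}_{S^{C}}$, and whitening $\boldsymbol{U}=\boldsymbol{V}_{S^{C}|S}^{-1/2}\boldsymbol{T}_{S^{C}}\sim\mathcal{N}(\boldsymbol{0},\boldsymbol{I})$ converts the face event into a cone in $\boldsymbol{U}$-space; by rotational symmetry of the standard normal, $\|\boldsymbol{U}\|^{2}\sim\chi^{2}_{(J-1)-|S|}$ is independent of whether that cone is hit. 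Summing over $S$ produces the chi-bar-squared mixture with weights $\sum_{|S|=j}\Pr(\boldsymbol{W}_{S}>\boldsymbol{0})\Pr(\boldsymbol{T}_{S^{C}}>\boldsymbol{0})$.

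For (iv), Sherman--Morrison gives $(\boldsymbol{D}_{\boldsymbol{\pi}^{\ast}}-\boldsymbol{\pi}^{\ast}\boldsymbol{\pi}^{\ast T})^{-1}=\boldsymbol{D}_{\boldsymbol{\pi}^{\ast}}^{-1}+\pi_{J}^{-1}\boldsymbol{1}\boldsymbol{1}^{T}$, the $2\times 2$ Kronecker factor in (\ref{iF}) inverts to $(\nu_{1}\nu_{2})^{-1}\begin{pmatrix}\nu_{1} & -\nu_{1}\\ -\nu_{1} & 1\end{pmatrix}$, and $\boldsymbol{R}=\boldsymbol{e}_{2}^{T}\otimes\boldsymbol{G}_{J-1}$ picks out the lower-right block. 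The Kronecker identity $(A\otimes B)(C\otimes D)=(AC)\otimes(BD)$ collapses the composition to $\boldsymbol{V}=(\nu_{1}\nu_{2})^{-1}\boldsymbol{G}_{J-1}(\boldsymbol{D}_{\boldsymbol{\pi}^{\ast}}^{-1}+\pi_{J}^{-1}\boldsymbol{1}\boldsymbol{1}^{T})\boldsymbol{G}_{J-1}^{T}$; a direct entry-by-entry expansion matches this to the tridiagonal matrix $\boldsymbol{H}(\boldsymbol{\theta}_{0})$ in (\ref{H}). Setting $\boldsymbol{Z}_{1}(S)=\boldsymbol{W}_{S}$, $\boldsymbol{Z}_{2}(S)=\boldsymbol{T}_{S^{C}}$ and $\boldsymbol{V}_{S,S}=\boldsymbol{H}(S,S,\boldsymbol{\theta}_{0})$ reproduces (\ref{eqw}). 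The principal obstacle is tracking the block structure of $\mathcal{I}_{F}(\boldsymbol{\theta}_{0})$ through both the profile-likelihood reduction of step (ii) and the Schur-complement bookkeeping of Kudo's formula in step (iii), so that the covariances of $\boldsymbol{Z}_{1}(S)$ and $\boldsymbol{Z}_{2}(S)$ match the expressions in (\ref{eqw}) exactly, rather than some other symmetric positive-definite form on the constraint directions.
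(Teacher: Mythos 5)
Your proposal is correct and follows essentially the same route as the paper: a second-order Taylor reduction of all $\phi$-divergence statistics to a common quadratic form (the paper's Appendix Proposition), a decomposition over active constraint sets characterized by the Karush--Kuhn--Tucker conditions, a Kud\^{o}/Barlow/Shapiro-type lemma giving the conditional $\chi^{2}_{(J-1)-\mathrm{card}(S)}$ law together with the independence of the Lagrange-multiplier and slack orthant events, and the Kronecker-product/Sherman--Morrison computation identifying $\boldsymbol{R}\mathcal{I}_{F}^{-1}(\boldsymbol{\theta}_{0})\boldsymbol{R}^{T}$ with $\boldsymbol{H}(\boldsymbol{\theta}_{0})$. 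The only cosmetic difference is that you reduce to the canonical Gaussian orthant-projection problem via the change of variables $\boldsymbol{\psi}=\boldsymbol{G}_{J-1}\boldsymbol{\theta}_{12}$, whereas the paper works directly with the face-restricted estimators $\widehat{\boldsymbol{\theta}}(S)$ and their asymptotically linear representations in the score.
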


\begin{proof}
By following similar arguments of Mart\'{\i}n and Balakrishnan we obtain
$\boldsymbol{H}(S,S,\boldsymbol{\theta}_{0})=\boldsymbol{R}(S\mathbf{)}%
\mathcal{I}_{F}^{-1}(\boldsymbol{\theta}_{0})\boldsymbol{R}^{T}(S\mathbf{)}$
(see Appendix \ref{ProofTh1ContrA}, for the details). In particular,
$\boldsymbol{H}(\boldsymbol{\theta}_{0})=\boldsymbol{H}(S,S,\boldsymbol{\theta
}_{0})$ with $S=E$, i.e.%
\begin{align*}
\boldsymbol{H}(\boldsymbol{\theta}_{0})  &  =\boldsymbol{R}(E\mathbf{)}%
\mathcal{I}_{F}^{-1}(\boldsymbol{\theta}_{0})\boldsymbol{R}^{T}(E\mathbf{)}\\
&  \mathbf{=}(\boldsymbol{0}_{(J-1)\times(J-1)},\boldsymbol{G}_{J-1}%
)\mathcal{I}_{F}^{-1}(\boldsymbol{\theta}_{0})(\boldsymbol{0}_{(J-1)\times
(J-1)},\boldsymbol{G}_{J-1})^{T},
\end{align*}
where $\mathcal{I}_{F}(\boldsymbol{\theta}_{0})$ is (\ref{iF}). By following
the properties of the inverse of the Kronecker product for calculating the
inverse of (\ref{iF}),%
\begin{align*}
\mathcal{I}_{F}^{-1}(\boldsymbol{\theta}_{0})  &  =%
\begin{pmatrix}
1 & \nu_{1}\\
\nu_{1} & \nu_{1}%
\end{pmatrix}
^{-1}\otimes\left(  \boldsymbol{D}_{\boldsymbol{\pi}^{\ast}(\boldsymbol{\theta
}_{0})}-\boldsymbol{\pi}^{\ast}(\boldsymbol{\theta}_{0})\boldsymbol{\pi}^{\ast
T}(\boldsymbol{\theta})\right)  ^{-1}\\
&  =%
\begin{pmatrix}
\frac{1}{\nu_{2}} & -\frac{1}{\nu_{2}}\\
-\frac{1}{\nu_{2}} & \frac{1}{\nu_{1}\nu_{2}}%
\end{pmatrix}
\otimes\left(  \boldsymbol{D}_{\boldsymbol{\pi}^{\ast}(\boldsymbol{\theta}%
_{0})}^{-1}+\frac{1}{\pi_{J}(\boldsymbol{\theta}_{0})}\boldsymbol{1}%
_{J-1}\boldsymbol{1}_{J-1}^{T}\right)  ,
\end{align*}
and replacing it in the previous expression of $\boldsymbol{H}%
(\boldsymbol{\theta}_{0})$,%
\begin{align*}
\boldsymbol{H}(\boldsymbol{\theta}_{0})  &  =\frac{1}{\nu_{1}\nu_{2}%
}\boldsymbol{G}_{J-1}\left(  \boldsymbol{D}_{\boldsymbol{\pi}^{\ast
}(\boldsymbol{\theta}_{0})}^{-1}+\frac{1}{\pi_{J}(\boldsymbol{\theta}_{0}%
)}\boldsymbol{1}_{J-1}\boldsymbol{1}_{J-1}^{T}\right)  \boldsymbol{G}%
_{J-1}^{T}\\
&  =\frac{1}{\nu_{1}\nu_{2}}\left(  \boldsymbol{G}_{J-1}\boldsymbol{D}%
_{\boldsymbol{\pi}^{\ast}(\boldsymbol{\theta}_{0})}^{-1}\boldsymbol{G}%
_{J-1}^{T}+\frac{1}{\pi_{J}(\boldsymbol{\theta}_{0})}\boldsymbol{e}%
_{J-1}\boldsymbol{e}_{J-1}^{T}\right)  ,
\end{align*}
which is equal to (\ref{H}).\medskip
\end{proof}

Even though there is an equality in (\ref{4b}), $\boldsymbol{\theta}$ is not a
fixed vector under the null hypothesis since such an equality is effective
only for $\boldsymbol{\theta}_{12}$,\ and thus $\boldsymbol{\theta}_{2}$ is a
vector of nuisance parameters. This means that we have a composite null
hypothesis which requires estimation of $\boldsymbol{\theta}\in\Theta_{0}$,
through $\widehat{\boldsymbol{\theta}}$ and we cannot use directly the results
based on Theorem \ref{Th1}. The tests performed replacing the parameter
$\boldsymbol{\theta}_{0}$\ of the asymptotic distribution by
$\widehat{\boldsymbol{\theta}}$ are called \textquotedblleft local
tests\textquotedblright\ (see Dardanoni and Forcina (1998)) and they are
usually considered to be good approximations of the theoretical tests.

In relation to the weights, $\{w_{j}(\boldsymbol{\theta}_{0})\}_{j=1,...,J}$,
there are explicit expressions when $J\in\{2,3,4\}$ based on the matrix given
in (\ref{H}) and formulas (3.24), (3.25) and (3.26) in Silvapulle and Sen
(2005, page 80). When $J=2$, $w_{0}(\boldsymbol{\theta}_{0})=w_{1}%
(\boldsymbol{\theta}_{0})=\frac{1}{2}$. When $J=3$, the estimators of the
weights are%
\begin{equation}
\left\{
\begin{array}
[c]{l}%
w_{0}(\widehat{\boldsymbol{\theta}})=\tfrac{1}{2}-w_{2}%
(\widehat{\boldsymbol{\theta}}),\\
w_{1}(\widehat{\boldsymbol{\theta}})=\frac{1}{2},\\
w_{2}(\widehat{\boldsymbol{\theta}})=\tfrac{1}{2\pi}\arccos\widehat{\rho}%
_{12},
\end{array}
\right.  \label{weightsJ=3}%
\end{equation}
where%
\begin{equation}
\widehat{\rho}_{ij}=\tfrac{\widehat{\sigma}_{ij}}{\sqrt{\widehat{\sigma}%
_{ii}\widehat{\sigma}_{jj}}}=-\sqrt{\frac{(N_{1i}+N_{2i})(N_{1,j+1}%
+N_{2,j+1})}{(N_{1i}+N_{2i}+N_{1j}+N_{2j})(N_{1j}+N_{2j}+N_{1,j+1}+N_{2j+1})}%
}, \label{cor}%
\end{equation}
is the correlation associated with the $i$-th and $j$-th variable of a central
random variable with variance-covariance matrix
\[
\boldsymbol{H}(\widehat{\boldsymbol{\theta}})=\frac{1}{\widehat{\nu}%
_{1}\widehat{\nu}_{2}}%
\begin{pmatrix}
\frac{\pi_{1}(\widehat{\boldsymbol{\theta}})+\pi_{2}%
(\widehat{\boldsymbol{\theta}})}{\pi_{1}(\widehat{\boldsymbol{\theta}})\pi
_{2}(\widehat{\boldsymbol{\theta}})} & -\frac{1}{\pi_{2}%
(\widehat{\boldsymbol{\theta}})}\\
-\frac{1}{\pi_{2}(\widehat{\boldsymbol{\theta}})} & \frac{\pi_{2}%
(\widehat{\boldsymbol{\theta}})+\pi_{3}(\widehat{\boldsymbol{\theta}})}%
{\pi_{2}(\widehat{\boldsymbol{\theta}})\pi_{3}(\widehat{\boldsymbol{\theta}})}%
\end{pmatrix}
,
\]
where $\pi_{j}(\widehat{\boldsymbol{\theta}})=\frac{N_{1j}+N_{2j}}{n}$. When
$J=4$,%
\begin{equation}
\left\{
\begin{array}
[c]{l}%
w_{0}(\widehat{\boldsymbol{\theta}})=\tfrac{1}{4\pi}\left(  2\pi
-\arccos\widehat{\rho}_{12}-\arccos\widehat{\rho}_{13}-\arccos\widehat{\rho
}_{23}\right)  ,\\
w_{1}(\widehat{\boldsymbol{\theta}})=\tfrac{1}{4\pi}\left(  3\pi
-\arccos\widehat{\rho}_{12\cdot3}-\arccos\widehat{\rho}_{13\cdot2}%
-\arccos\widehat{\rho}_{23\cdot1}\right)  ,\\
w_{2}(\widehat{\boldsymbol{\theta}})=\tfrac{1}{2}-w_{0}%
(\widehat{\boldsymbol{\theta}}),\\
w_{3}(\widehat{\boldsymbol{\theta}})=\tfrac{1}{2}-w_{1}%
(\widehat{\boldsymbol{\theta}}),
\end{array}
\right.  \label{weightsJ=4}%
\end{equation}
which depend on the estimation of the marginal (\ref{cor}) and conditional
correlations%
\[
\widehat{\rho}_{ij\cdot k}=\tfrac{\widehat{\rho}_{ij}-\widehat{\rho}%
_{ik}\widehat{\rho}_{kj}}{\sqrt{(1-\widehat{\rho}_{ik}^{2})(1-\widehat{\rho
}_{kj}^{2})}},
\]
associated with the $i$-th and $j$-th variable, given a value of the $k$-th
variable, of a central random variable with variance-covariance matrix%
\[
\boldsymbol{H}(\widehat{\boldsymbol{\theta}})=\frac{1}{\widehat{\nu}%
_{1}\widehat{\nu}_{2}}%
\begin{pmatrix}
\frac{\pi_{1}(\widehat{\boldsymbol{\theta}})+\pi_{2}%
(\widehat{\boldsymbol{\theta}})}{\pi_{1}(\widehat{\boldsymbol{\theta}})\pi
_{2}(\widehat{\boldsymbol{\theta}})} & -\frac{1}{\pi_{2}%
(\widehat{\boldsymbol{\theta}})} & 0\\
-\frac{1}{\pi_{2}(\widehat{\boldsymbol{\theta}})} & \frac{\pi_{2}%
(\widehat{\boldsymbol{\theta}})+\pi_{3}(\widehat{\boldsymbol{\theta}})}%
{\pi_{2}(\widehat{\boldsymbol{\theta}})\pi_{3}(\widehat{\boldsymbol{\theta}})}
& -\frac{1}{\pi_{3}(\widehat{\boldsymbol{\theta}})}\\
0 & -\frac{1}{\pi_{3}(\widehat{\boldsymbol{\theta}})} & \frac{\pi
_{3}(\widehat{\boldsymbol{\theta}})+\pi_{4}(\widehat{\boldsymbol{\theta}}%
)}{\pi_{3}(\widehat{\boldsymbol{\theta}})\pi_{4}(\widehat{\boldsymbol{\theta}%
})}%
\end{pmatrix}
.
\]
It is interesting to point out that the factor related to the sample size in
each multinomial sample, $\frac{1}{\widehat{\nu}_{1}\widehat{\nu}_{2}}$, have
no effect in the expression of estimator for the weights of the chi-bar
squared distribution These formulas will be considered in the forthcoming
sections. It is worthwhile to mention that the normal orthant probabilities
for the weights given in (\ref{eqw}), can also be computed for any value of
$J$ using the \texttt{mvtnorm} R package (see
\hyperref{http://CRAN.R-project.org/package=mvtnorm}{}{}%
{http://CRAN.R-project.org/package=mvtnorm}%
, for details).

\section{Numerical example\label{sec:Numerical example}}

In this section the data set of the introduction (Table \ref{tttt1}), where
$J=4$, is analyzed. The sample, a realization of $\boldsymbol{N}$, is
summarized in the following vector%
\[
\boldsymbol{n}=(n_{11},n_{12},n_{13},n_{14},n_{21},n_{22},n_{23},n_{24}%
)^{T}=(11,8,8,5,6,4,10,12)^{T}.
\]
The order restricted MLE under likelihood ratio order, obtained through the
\texttt{E04UCF} subroutine of\ \texttt{NAG} Fortran library (%
\hyperref{http://www.nag.co.uk/numeric/fl/FLdescription.asp}{}{}%
{http://www.nag.co.uk/numeric/fl/FLdescription.asp}%
), is%
\[
\widetilde{\boldsymbol{\theta}}%
=(-0.7164,-1.0647,-0.1823,1.5173,1.5173,0.6523)^{T}.
\]
The estimation of the probability vectors of interest is%
\begin{align*}
\overline{\boldsymbol{p}}  &
=(0.1719,0.1250,0.1250,0.0781,0.0938,0.0625,0.1563,0.1875)^{T},\\
\boldsymbol{p}(\widetilde{\boldsymbol{\theta}})  &
=(0.1740,0.1228,0.1250,0.0781,0.0916,0.0647,0.1563,0.1875)^{T},\\
\boldsymbol{p}(\widehat{\boldsymbol{\theta}})  &
=(0.1328,0.0938,0.1406,0.1328,0.1328,0.0938,0.1406,0.1328)^{T},
\end{align*}
and the estimation of the weights, based on (\ref{weightsJ=4}), are%
\[
w_{0}(\widehat{\boldsymbol{\theta}})=0.0381,\quad w_{1}%
(\widehat{\boldsymbol{\theta}})=0.2420,\quad w_{2}(\widehat{\boldsymbol{\theta
}})=0.461\,8,\quad w_{3}(\widehat{\boldsymbol{\theta}})=0.2580.
\]
In order to solve analytically the example we shall consider a particular
function $\phi$ in (\ref{5a}) and (\ref{5b}). Taking%
\[
\phi_{\lambda}(x)=\frac{x^{\lambda+1}-x-\lambda(x-1)}{\lambda(\lambda+1)},
\]
we get the \textquotedblleft the power divergence family\textquotedblright%
\[
d_{\phi_{\lambda}}(\boldsymbol{p},\boldsymbol{q})=\frac{1}{\lambda(\lambda
+1)}\left(
{\displaystyle\sum\limits_{i=1}^{2}}
{\displaystyle\sum\limits_{j=1}^{J}}
\tfrac{p_{ij}^{\lambda+1}}{q_{ij}^{\lambda}(\widehat{\boldsymbol{\theta}}%
)}-1\right)
\]
in such a way that for each $\lambda\in%
\mathbb{R}
-\{-1,0\}$\ a different divergence measure is obtained, and thus%
\begin{align}
T_{\lambda}  &  =T_{\phi_{\lambda}}(\overline{\boldsymbol{p}},\boldsymbol{p}%
(\widetilde{\boldsymbol{\theta}}),\boldsymbol{p}(\widehat{\boldsymbol{\theta}%
}))=\frac{2n}{\lambda(\lambda+1)}\left(
{\displaystyle\sum\limits_{i=1}^{2}}
{\displaystyle\sum\limits_{j=1}^{J}}
\frac{\overline{p}_{ij}^{\lambda+1}}{p_{ij}^{\lambda}%
(\widehat{\boldsymbol{\theta}})}-%
{\displaystyle\sum\limits_{i=1}^{2}}
{\displaystyle\sum\limits_{j=1}^{J}}
\frac{\overline{p}_{ij}^{\lambda+1}}{p_{ij}^{\lambda}%
(\widetilde{\boldsymbol{\theta}})}\right)  ,\label{PD1}\\
S_{\lambda}  &  =S_{\phi_{\lambda}}(\boldsymbol{p}%
(\widetilde{\boldsymbol{\theta}}),\boldsymbol{p}(\widehat{\boldsymbol{\theta}%
}))=\frac{2n}{\lambda(\lambda+1)}\left(
{\displaystyle\sum\limits_{i=1}^{2}}
{\displaystyle\sum\limits_{j=1}^{J}}
\frac{p_{ij}^{\lambda+1}(\widetilde{\boldsymbol{\theta}})}{p_{ij}^{\lambda
}(\widehat{\boldsymbol{\theta}})}-1\right)  . \label{PD2}%
\end{align}
It is also possible to cover the real line for $\lambda$, by defining
\[
d_{\phi_{\lambda}}(\boldsymbol{p},\boldsymbol{q})=\lim_{\ell\rightarrow
\lambda}d_{\phi_{\ell}}(\boldsymbol{p},\boldsymbol{q}),\quad\lambda
\in\{-1,0\},
\]
and by considering $T_{\lambda}=\lim_{\lambda\rightarrow\ell}T_{\ell}$,
$S_{\lambda}=\lim_{\lambda\rightarrow\ell}S_{\ell}$, for $\lambda\in\{0,-1\}$,
i.e.
\begin{align}
T_{0}  &  =T_{\phi_{0}}(\overline{\boldsymbol{p}},\boldsymbol{p}%
(\widetilde{\boldsymbol{\theta}}),\boldsymbol{p}(\widehat{\boldsymbol{\theta}%
}))=G^{2}=2n%
{\displaystyle\sum\limits_{i=1}^{2}}
{\displaystyle\sum\limits_{j=1}^{J}}
\overline{p}_{ij}\log\frac{p_{ij}(\widetilde{\boldsymbol{\theta}})}%
{p_{ij}(\widehat{\boldsymbol{\theta}})},\label{PD3}\\
T_{-1}  &  =T_{\phi_{-1}}(\overline{\boldsymbol{p}},\boldsymbol{p}%
(\widetilde{\boldsymbol{\theta}}),\boldsymbol{p}(\widehat{\boldsymbol{\theta}%
}))=2n\left(
{\displaystyle\sum\limits_{i=1}^{2}}
{\displaystyle\sum\limits_{j=1}^{J}}
p_{ij}(\widehat{\boldsymbol{\theta}})\log\frac{p_{ij}%
(\widehat{\boldsymbol{\theta}})}{\overline{p}_{ij}}-%
{\displaystyle\sum\limits_{i=1}^{2}}
{\displaystyle\sum\limits_{j=1}^{J}}
p_{ij}(\widetilde{\boldsymbol{\theta}})\log\frac{p_{ij}%
(\widetilde{\boldsymbol{\theta}})}{\overline{p}_{ij}}\right)  \label{PD4}%
\end{align}
and
\begin{align}
S_{0}  &  =S_{\phi_{0}}(\boldsymbol{p}(\widetilde{\boldsymbol{\theta}%
}),\boldsymbol{p}(\widehat{\boldsymbol{\theta}}))=2nd_{Kull}(\boldsymbol{p}%
(\widetilde{\boldsymbol{\theta}}),\boldsymbol{p}(\widehat{\boldsymbol{\theta}%
}))=2n%
{\displaystyle\sum\limits_{i=1}^{2}}
{\displaystyle\sum\limits_{j=1}^{J}}
p_{ij}(\widetilde{\boldsymbol{\theta}})\log\frac{p_{ij}%
(\widetilde{\boldsymbol{\theta}})}{p_{ij}(\widehat{\boldsymbol{\theta}}%
)},\label{PD5}\\
S_{-1}  &  =S_{\phi_{-1}}(\boldsymbol{p}(\widetilde{\boldsymbol{\theta}%
}),\boldsymbol{p}(\widehat{\boldsymbol{\theta}}))=2nd_{Kull}(\boldsymbol{p}%
(\widehat{\boldsymbol{\theta}}),\boldsymbol{p}(\widetilde{\boldsymbol{\theta}%
}))=2n%
{\displaystyle\sum\limits_{j=1}^{J}}
p_{ij}(\widehat{\boldsymbol{\theta}})\log\frac{p_{ij}%
(\widehat{\boldsymbol{\theta}})}{p_{ij}(\widetilde{\boldsymbol{\theta}})}.
\label{PD6}%
\end{align}
It is well known that $d_{\phi_{0}}(\boldsymbol{p},\boldsymbol{q}%
)=d_{Kull}(\boldsymbol{p},\boldsymbol{q})$ and $d_{\phi_{1}}(\boldsymbol{p}%
,\boldsymbol{q})=d_{Pearson}(\boldsymbol{p},\boldsymbol{q})$, which is very
interesting since $G^{2}$ and $X^{2}$ are members of the power divergence
based test-statistics. It is also worthwhile to mention that $d_{\phi_{-1}%
}(\boldsymbol{p},\boldsymbol{q})=d_{Kull}(\boldsymbol{q},\boldsymbol{p})$.

In Table \ref{t1}, the power divergence based test-statistics for some values
of $\lambda$ in $\Lambda=\{-1.5,-1,-\frac{1}{2},0,\frac{2}{3},1,1.5,2,3\}$,
and their corresponding asymptotic $p$-values are shown. In all of them it is
concluded, with a significance level equal to $0.05$, that an equal effect of
both treatments is rejected and hence the treatment is more effective than the
control to heal the ulcer.
\begin{table}[htbp]  \tabcolsep2.8pt  \centering
$%
\begin{tabular}
[c]{cccccccccc}\hline\hline
test-statistic & $\lambda=-1.5$ & $\lambda=-1$ & $\lambda=-\frac{1}{2}$ &
$\lambda=0$ & $\lambda=\frac{2}{3}$ & $\lambda=1$ & $\lambda=1.5$ &
$\lambda=2$ & $\lambda=3$\\\hline
\multicolumn{1}{l}{$\overset{}{T_{\lambda}}$} & 6.5323 & 6.3215 & 6.1562 &
\textbf{6.0323} & 5.9261 & 5.8965 & 5.8803 & 5.8965 & 6.0244\\
$p$\textrm{-}$\mathrm{value}(T_{\lambda})$ & 0.0175 & 0.0194 & 0.0211 &
\textbf{0.0225} & 0.0238 & 0.0241 & 0.0243 & 0.0241 & 0.0226\\
\multicolumn{1}{l}{$S_{\lambda}$} & 6.5277 & 6.3189 & 6.1551 & 6.0323 &
5.9270 & \textbf{5.8977} & 5.8815 & 5.8977 & 6.0244\\
$p$\textrm{-}$\mathrm{value}(S_{\lambda})$ & 0.0175 & 0.0195 & 0.0212 &
0.0225 & 0.0238 & \textbf{0.0241} & 0.0243 & 0.0241 & 0.0226\\\hline\hline
\end{tabular}
\ \ \ \ \ \ \ \ \ \ \ \ $%
\caption{Power divergence based test-statistics and asymptotic p-values for the data given Table \ref{tttt1}.\label{t1}}%
\end{table}%
\bigskip

The $p$-values given in Table \ref{t1} were obtained by the following
algorithm:\newline Let $T\in\{{T_{\lambda},S_{\lambda}}\}_{\lambda\in\Lambda}$
be the test-statistic associated with (\ref{4}). In the following steps the
corresponding asymptotic $p$-value, based on the asymptotic distribution of
Theorem \ref{Th1}, is calculated once it is suppose we have $\{w_{j}%
(\widehat{\boldsymbol{\theta}})\}_{j=0}^{J-1}$:

\noindent\texttt{STEP 1: Using }$\boldsymbol{n}$\texttt{\ calculate
}$\boldsymbol{p}(\widehat{\boldsymbol{\theta}})$\texttt{\ taking into account
(\ref{ind}).}\newline\texttt{STEP 2: Using }$\boldsymbol{p}%
(\widehat{\boldsymbol{\theta}})$\texttt{\ calculate value }$t$\texttt{ of
test-statistic }$T$\texttt{ using the corresponding expression in
(\ref{PD1})-(\ref{PD6}).}\newline\texttt{STEP 3: If }$T\leq0$ \texttt{then
compute }$p$\textrm{-}$\mathrm{value}(T):=1$ \texttt{and STOP, otherwise
compute }$p$\textrm{-}$\mathrm{value}(T):=0$.\newline\texttt{STEP 4: }For
$j=0,...,J-2$\texttt{, do }$p$\textrm{-}$\mathrm{value}(T):=p$\textrm{-}%
$\mathrm{value}(T)+w_{j}(\widehat{\boldsymbol{\theta}})\Pr\left(
\chi_{(J-1)-j}^{2}>t\right)  $.\texttt{\newline\hspace*{1.6cm}E.g., the NAG
Fortran library subroutine G01ECF can be useful.}

Recently, Shan and Ma (2014) have studied a similar problem as (2a)-(2b), but
considering different alternative hypotheses, since they consider odds ratios
based on cumulative probabilities. Focussed on probabilities rather than
cumulative probabilities, we are going to include the asymptotic version of
their test-statistic in our numerical study as well as later, in the
simulation study: the two sample Wilcoxon test-statistic for discrete data
(ties), also known as Wilcoxon mid-rank test-statistic. Metha et al. (1984)
proposed such a test-statistic for solving exactly the same alternative
hypothesis studied in this paper either as a permutation or as asymptotic
test. Our null and alternative hypotheses are a particular case of their
hypotheses, taking in their Section 4 $\phi^{\ast}=1$. The expression of the
Wilcoxon mid-rank test-statistic is%
\begin{equation}
W=%
{\textstyle\sum\limits_{j=1}^{J}}
r_{j}n_{1j}, \label{wilc}%
\end{equation}
where $r_{1}=(n_{\bullet1}+2)/2$ and $r_{j}=%
{\textstyle\sum\nolimits_{\ell=1}^{j-1}}
n_{\bullet\ell}+\left.  \left(  n_{\bullet j}+1\right)  \right/  2$,
$j=2,...,J$, $n_{\bullet j}=n_{1j}+n_{2j}$, and the corresponding asymptotic
distribution is normal with mean $\mu_{W}=\left.  n_{1}\left(  n+1\right)
\right/  2$ and variance
\[
\sigma_{W}^{2}=n_{1}n_{2}\frac{n+1-\frac{1}{n(n-1)}%
{\textstyle\sum\nolimits_{j=1}^{J}}
(n_{\bullet j}^{3}-n_{\bullet j})}{12}.
\]
The Wilcoxon mid-rank test-statistic for the data of Table \ref{tttt1} is
$W=875$\ and with the corresponding $p$-value, $0.01094$, the same conclusion
is obtained, i.e. rejecting the hypothesis of equal effect of both treatments
with $5\%$ significance level.

\section{Simulation study\label{sec:Simulation Study}}

\subsection{2x2 table: one sided in comparison with the two sided test}

In this section we illustrate in what sense the likelihood ratio test given in
(\ref{LRT}),%
\begin{equation}
G^{2}=2n%
{\displaystyle\sum\limits_{i=1}^{2}}
{\displaystyle\sum\limits_{j=1}^{2}}
\overline{p}_{ij}\log\frac{p_{ij}(\widetilde{\boldsymbol{\theta}})}%
{p_{ij}(\widehat{\boldsymbol{\theta}})}=2%
{\displaystyle\sum\limits_{i=1}^{2}}
{\displaystyle\sum\limits_{j=1}^{2}}
n_{ij}\log\frac{\pi_{ij}(\widetilde{\boldsymbol{\theta}})}{\pi_{ij}%
(\widehat{\boldsymbol{\theta}})}, \label{G 1}%
\end{equation}
is different from the one for the non order restricted alternative hypothesis
(two sided test, in $2\times2$ tables)%
\begin{equation}
\bar{G}^{2}=2n%
{\displaystyle\sum\limits_{i=1}^{2}}
{\displaystyle\sum\limits_{j=1}^{2}}
\overline{p}_{ij}\log\frac{\overline{p}_{ij}}{p_{ij}%
(\widehat{\boldsymbol{\theta}})}=2%
{\displaystyle\sum\limits_{i=1}^{2}}
{\displaystyle\sum\limits_{j=1}^{2}}
n_{ij}\log\frac{n_{ij}/n_{i}}{\pi_{ij}(\widehat{\boldsymbol{\theta}})}=2%
{\displaystyle\sum\limits_{i=1}^{2}}
{\displaystyle\sum\limits_{j=1}^{2}}
n_{ij}\log\frac{n_{ij}/n_{i}}{n_{\bullet j}/n}. \label{G bar}%
\end{equation}
For simplicity the case of $J=2$ is taken into account, where the (simple
null) one sided test%
\begin{equation}
H_{0}:\;\vartheta_{1}=1\text{,\qquad vs.}\qquad H_{1}:\;\vartheta
_{1}>1\text{,} \label{tt1}%
\end{equation}
with $\vartheta_{1}=\pi_{11}\pi_{22}/\pi_{21}\pi_{12}=\pi_{11}(1-\pi_{21}%
)/\pi_{21}(1-\pi_{11})$, or
\begin{subequations}
\[
H_{0}:\;\pi_{11}=\pi_{21}\text{,\qquad vs.}\qquad H_{1}:\;\pi_{11}>\pi
_{21}\text{,}%
\]
is tested with (\ref{G 1}), and on the other hand the two sided test
\end{subequations}
\begin{equation}
H_{0}:\;\vartheta_{1}=1\text{,\qquad vs.}\qquad H_{1}:\;\vartheta_{1}%
\neq1\text{,} \label{tt2b}%
\end{equation}
or
\begin{subequations}
\[
H_{0}:\;\pi_{11}=\pi_{21}\text{,\qquad vs.}\qquad H_{1}:\;\pi_{11}\neq\pi
_{21}\text{,}%
\]
is carried out with (\ref{G bar}). The same procedure would be possible to
perform for any $\phi$-divergence based test considered in this paper. We also
consider the mid-rank Wilcoxon test for both version of the alternative
hypothesis. To clarify the parameter space in both tests, we shall rewrite
(\ref{tt1}) and (\ref{tt2b}) as follows
\end{subequations}
\begin{subequations}
\[
H_{0}:\;\vartheta_{1}\in\Psi_{0}\text{,\qquad vs.}\qquad H_{1}:\;\vartheta
_{1}\in\Psi_{1}\text{,}%
\]
where $\Psi_{0}=\{1\}$, $\Psi_{1}=(1,+\infty)$,
\end{subequations}
\begin{subequations}
\[
H_{0}:\;\vartheta_{1}\in\Psi_{0}\text{,\qquad vs.}\qquad H_{1}^{\prime
}:\;\vartheta_{1}\in\Psi_{1}^{\prime}\text{,}%
\]
where $\Psi_{1}^{\prime}=(-\infty,1)\cup(1,+\infty)$. The parameter spaces for
(\ref{tt1}) and (\ref{tt2b})\ are $\Psi=\Psi_{0}\cup\Psi_{1}=[1,+\infty)$ and
$\Psi^{\prime}=\Psi_{0}\cup\Psi_{1}^{\prime}=%
\mathbb{R}
$, respectively. The same hypotheses in term of probabilities are given by
\end{subequations}
\begin{subequations}
\[
H_{0}:\;(\pi_{11},\pi_{21})\in\Lambda_{0}\text{,\qquad vs.}\qquad H_{1}%
:\;(\pi_{11},\pi_{21})\in\Lambda_{1}\text{,}%
\]
where $\Lambda_{0}=\left\{  (\pi_{11},\pi_{21})\in(0,1)\times(0,1):\pi
_{11}=\pi_{21}\right\}  $, $\Lambda_{1}=\left\{  (\pi_{11},\pi_{21}%
)\in(0,1)\times(0,1):\pi_{11}>\pi_{21}\right\}  $,
\end{subequations}
\begin{subequations}
\[
H_{0}:\;(\pi_{11},\pi_{21})\in\Lambda_{0}\text{,\qquad vs.}\qquad H_{1}%
:\;(\pi_{11},\pi_{21})\in\Lambda_{1}^{\prime}\text{,}%
\]
where $\Lambda_{1}^{\prime}=\left\{  (\pi_{11},\pi_{21})\in(0,1)\times
(0,1):\pi_{11}\neq\pi_{21}\right\}  $. The corresponding parameter spaces in
term of probabilities are given by
\end{subequations}
\begin{align*}
\Lambda &  =\Lambda_{0}\cup\Lambda_{1}=\left\{  (\pi_{11},\pi_{21}%
)\in(0,1)\times(0,1):\pi_{11}\geq\pi_{21}\right\}  ,\\
\Lambda^{\prime}  &  =\Lambda_{0}\cup\Lambda_{1}^{\prime}=(0,1)\times(0,1).
\end{align*}

The likelihood ratio test-statistics for (\ref{tt1}) and (\ref{tt2b}) are
different since in the numerator of (\ref{G 1}), $\pi_{ij}%
(\widetilde{\boldsymbol{\theta}})$, is obtained maximizing the likelihood
function in $\Lambda$, while the numerator of (\ref{G bar}), $n_{ij}/n$, is
maximized in $\Lambda^{\prime}$.\ Even though both estimators are different,
in practice they require a similar computation:\newline$\bullet$ If $\bar{\pi
}_{11}=\frac{n_{11}}{n_{1}}>\bar{\pi}_{21}=\frac{n_{21}}{n_{2}}$, then
$\pi_{11}(\widetilde{\boldsymbol{\theta}})=\frac{n_{11}}{n_{1}}>\pi
_{21}(\widetilde{\boldsymbol{\theta}})=\frac{n_{21}}{n_{2}}$ and $G^{2}%
=\bar{G}^{2}=2%
{\textstyle\sum\nolimits_{i=1}^{2}}
{\textstyle\sum\nolimits_{j=1}^{2}}
n_{ij}\log\frac{n_{ij}/n_{i}}{n_{\bullet j}/n}$;\newline$\bullet$ If $\bar
{\pi}_{11}=\frac{n_{11}}{n_{1}}\leq\bar{\pi}_{21}=\frac{n_{21}}{n_{2}}$, then
$\pi_{11}(\widetilde{\boldsymbol{\theta}})=\pi_{11}%
(\widehat{\boldsymbol{\theta}})=\frac{n_{\bullet1}}{n}\leq\pi_{21}%
(\widetilde{\boldsymbol{\theta}})=\pi_{21}(\widehat{\boldsymbol{\theta}%
})=\frac{n_{\bullet1}}{n}$ and $G^{2}=0$.\newline Hence, taking into account
the asymptotic distributions, i.e. $\frac{1}{2}\chi_{0}^{2}+\frac{1}{2}%
\chi_{1}^{2}$ for (\ref{tt1}) and $\chi_{1}^{2}$\ for (\ref{tt2b}), we obtain%
\[
p\mathrm{-}value(G^{2})=\left\{
\begin{array}
[c]{ll}%
\frac{1}{2}\Pr\left(  \chi_{1}^{2}>2%
{\displaystyle\sum\limits_{i=1}^{2}}
{\displaystyle\sum\limits_{j=1}^{2}}
n_{ij}\log\frac{n_{ij}/n_{i}}{n_{\bullet j}/n}\right)  , & \text{if }%
\frac{n_{11}}{n_{1}}>\frac{n_{21}}{n_{2}},\\
1, & \text{if }\frac{n_{11}}{n_{1}}\leq\frac{n_{21}}{n_{2}},
\end{array}
\right.
\]
and
\[
p\mathrm{-}value(\bar{G}^{2})=\Pr\left(  \chi_{1}^{2}>2%
{\displaystyle\sum\limits_{i=1}^{2}}
{\displaystyle\sum\limits_{j=1}^{2}}
n_{ij}\log\frac{n_{ij}}{n_{\bullet j}}\right)  .
\]
A third test is the composite null one sided test%
\begin{align}
H_{0}  &  :\;\vartheta_{1}\leq1\text{,\quad(}\vartheta_{1}\in\Psi_{0}^{\prime
}\text{)\qquad vs.}\qquad H_{1}:\;\vartheta_{1}>1\text{,}\quad\text{(}%
\vartheta_{1}\in\Psi_{1}\text{)}\label{tt3}\\
H_{0}  &  :\;\pi_{11}\leq\pi_{21}\text{,}\quad\text{(}(\pi_{11},\pi_{21}%
)\in\Lambda_{0}^{\prime}\text{)\qquad vs.}\qquad H_{1}:\;\pi_{11}>\pi
_{21}\text{,}\quad\text{(}(\pi_{11},\pi_{21})\in\Lambda_{1}\text{),}\nonumber
\end{align}
with $\Psi_{0}^{\prime}=(-\infty,1]$ and $\Lambda_{0}^{\prime}=\left\{
(\pi_{11},\pi_{21})\in(0,1)\times(0,1):\pi_{11}\leq\pi_{21}\right\}  $. For
the corresponding test-statistic,%
\begin{equation}
\widetilde{G}^{2}=2n%
{\displaystyle\sum\limits_{i=1}^{2}}
{\displaystyle\sum\limits_{j=1}^{2}}
\overline{p}_{ij}\log\frac{\overline{p}_{ij}}{p_{ij}%
(\widetilde{\boldsymbol{\theta}})}=2%
{\displaystyle\sum\limits_{i=1}^{2}}
{\displaystyle\sum\limits_{j=1}^{2}}
n_{ij}\log\frac{n_{ij}/n_{i}}{\pi_{ij}(\widetilde{\boldsymbol{\theta}})}:
\label{G tilde}%
\end{equation}
$\bullet$ If $\bar{\pi}_{11}=\frac{n_{11}}{n_{1}}\geq\bar{\pi}_{21}%
=\frac{n_{21}}{n_{2}}$, then $\pi_{11}(\widetilde{\boldsymbol{\theta}}%
)=\frac{n_{\bullet1}}{n}\geq\pi_{12}(\widetilde{\boldsymbol{\theta}}%
)=\frac{n_{\bullet1}}{n}$ and $\widetilde{G}^{2}=2%
{\textstyle\sum\nolimits_{i=1}^{2}}
{\textstyle\sum\nolimits_{j=1}^{2}}
n_{ij}\log\frac{n_{ij}/n_{i}}{n_{\bullet j}/n}$;\newline$\bullet$ If $\bar
{\pi}_{11}=\frac{n_{11}}{n_{1}}<\bar{\pi}_{21}=\frac{n_{21}}{n_{2}}$, then
$\pi_{11}(\widetilde{\boldsymbol{\theta}})=\frac{n_{11}}{n_{1}}<\pi
_{21}(\widetilde{\boldsymbol{\theta}})=\frac{n_{21}}{n_{2}}$ and
$\widetilde{G}^{2}=0$.\newline Hence, both one sided test-statistics, the
composite null one, $\widetilde{G}^{2}$, and the simple null one, $G^{2}$, are
almost equal and
\[
p\mathrm{-}value(\widetilde{G}^{2})=\left\{
\begin{array}
[c]{ll}%
\frac{1}{2}\Pr\left(  \chi_{1}^{2}>2%
{\displaystyle\sum\limits_{i=1}^{2}}
{\displaystyle\sum\limits_{j=1}^{2}}
n_{ij}\log\frac{n_{ij}/n_{i}}{n_{\bullet j}/n}\right)  , & \text{if }%
\frac{n_{11}}{n_{1}}\geq\frac{n_{21}}{n_{2}},\\
1, & \text{if }\frac{n_{11}}{n_{1}}<\frac{n_{21}}{n_{2}}.
\end{array}
\right.
\]

The mid-rank $W$ test-statistic for (\ref{tt1}) and (\ref{tt2b}) is the same,
(\ref{wilc}), as well as the distribution under the null, but%
\[
p\mathrm{-}value(W)=\Pr\left(  \mathcal{N}(0,1)<-\frac{\left(  r_{1}%
n_{11}+r_{2}n_{12}\right)  -\left.  n_{1}\left(  n+1\right)  \right/  2}%
{\sqrt{n_{1}n_{2}\frac{n+1-\frac{1}{n(n-1)}\left[  \left(  (n_{\bullet1}%
^{3}-n_{\bullet1})\right)  +\left(  (n_{\bullet2}^{3}-n_{\bullet2})\right)
\right]  }{12}}}\right)
\]
for (\ref{tt1}) and
\[
p\mathrm{-}value(W)=2\Pr\left(  \mathcal{N}(0,1)>\frac{\left\vert \left(
r_{1}n_{11}+r_{2}n_{12}\right)  -\left.  n_{1}\left(  n+1\right)  \right/
2\right\vert }{\sqrt{n_{1}n_{2}\frac{n+1-\frac{1}{n(n-1)}\left[  \left(
(n_{\bullet1}^{3}-n_{\bullet1})\right)  +\left(  (n_{\bullet2}^{3}%
-n_{\bullet2})\right)  \right]  }{12}}}\right)
\]
\ for (\ref{tt2b}).%

\begin{figure}[htbp]  \tabcolsep2.8pt  \centering
\begin{tabular}
[c]{c}%
{\includegraphics[
height=2.8764in,
width=5.6273in
]%
{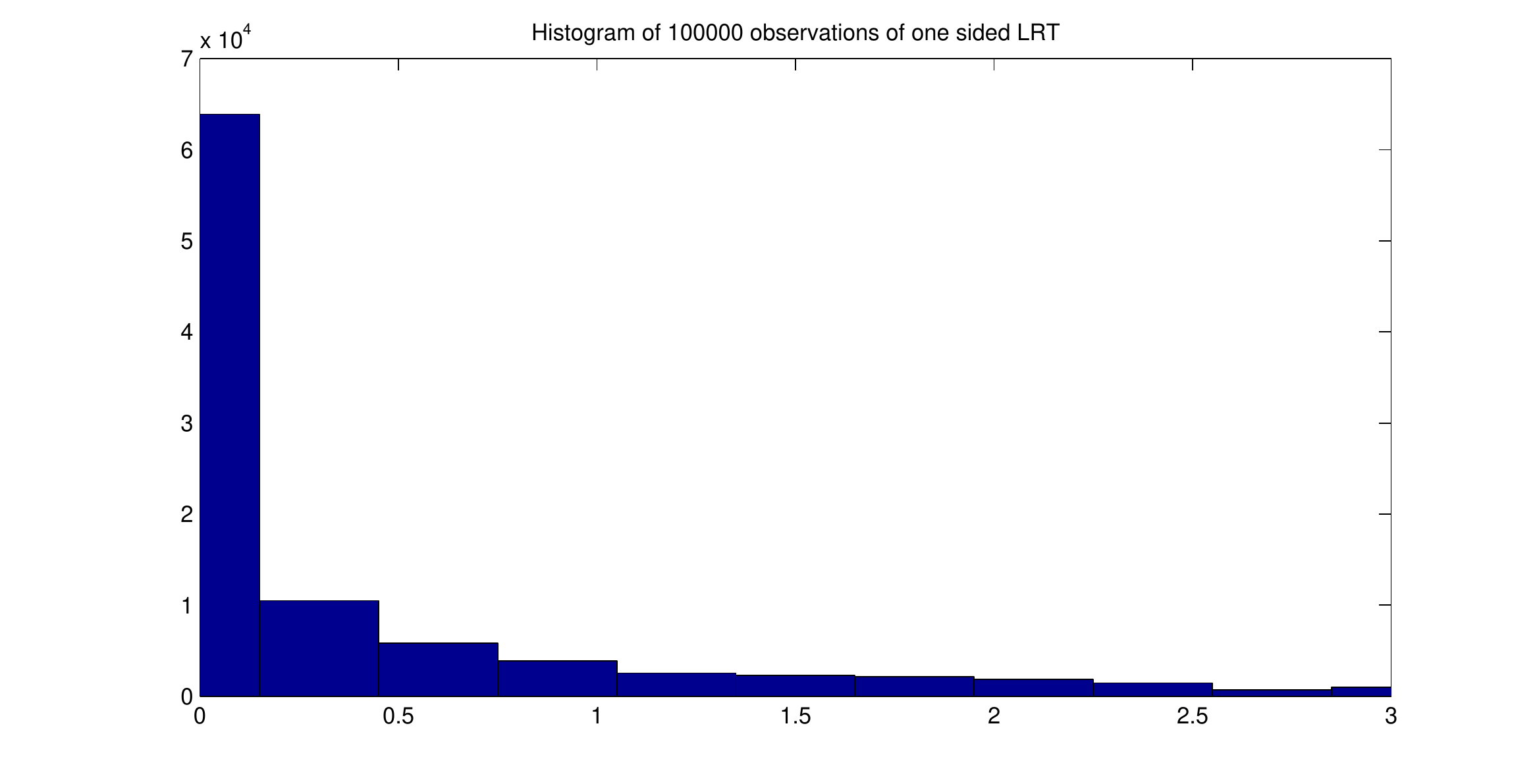}%
}
\\%
{\includegraphics[
height=2.8764in,
width=5.6273in
]%
{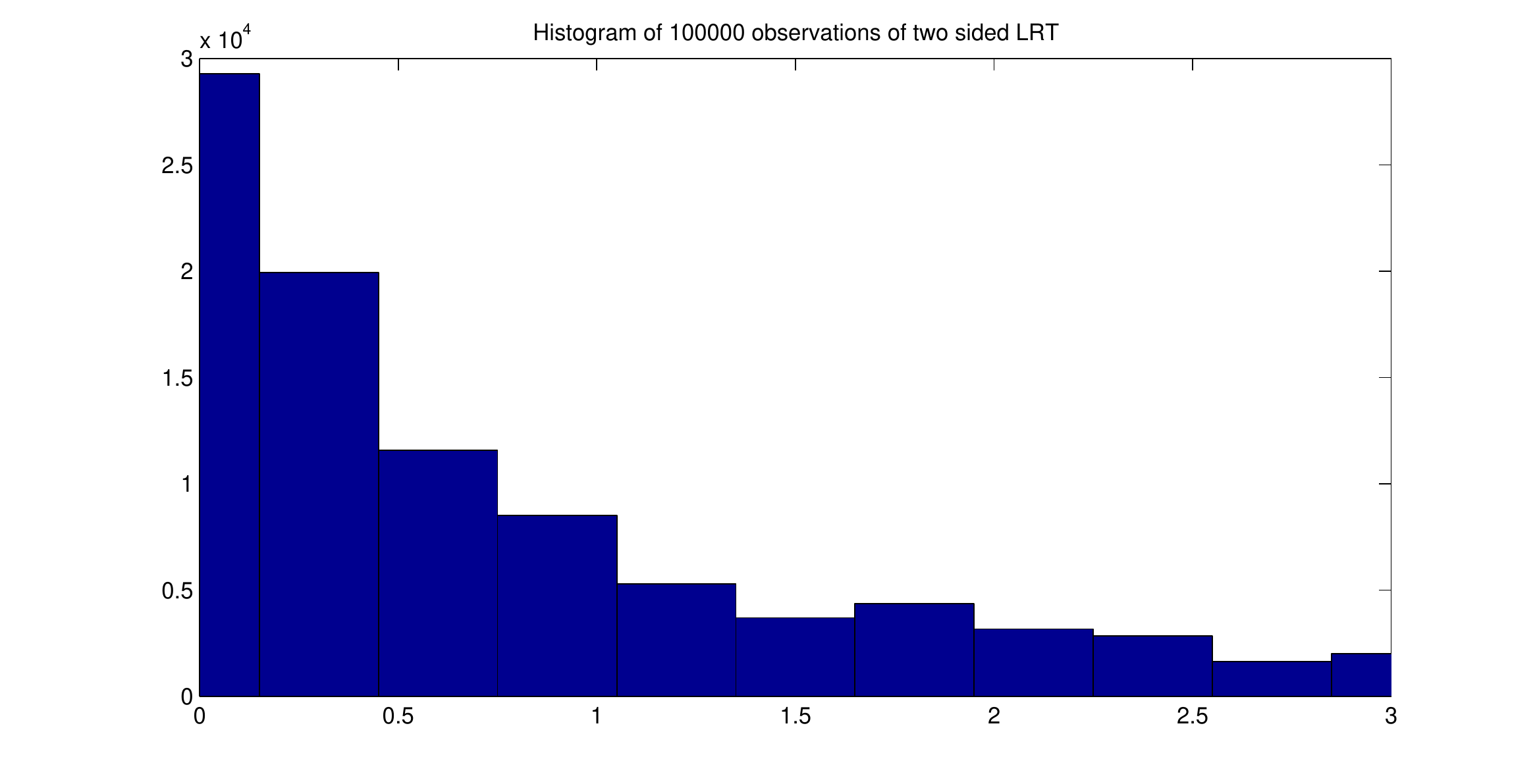}%
}
\\%
{\includegraphics[
height=2.8764in,
width=5.6273in
]%
{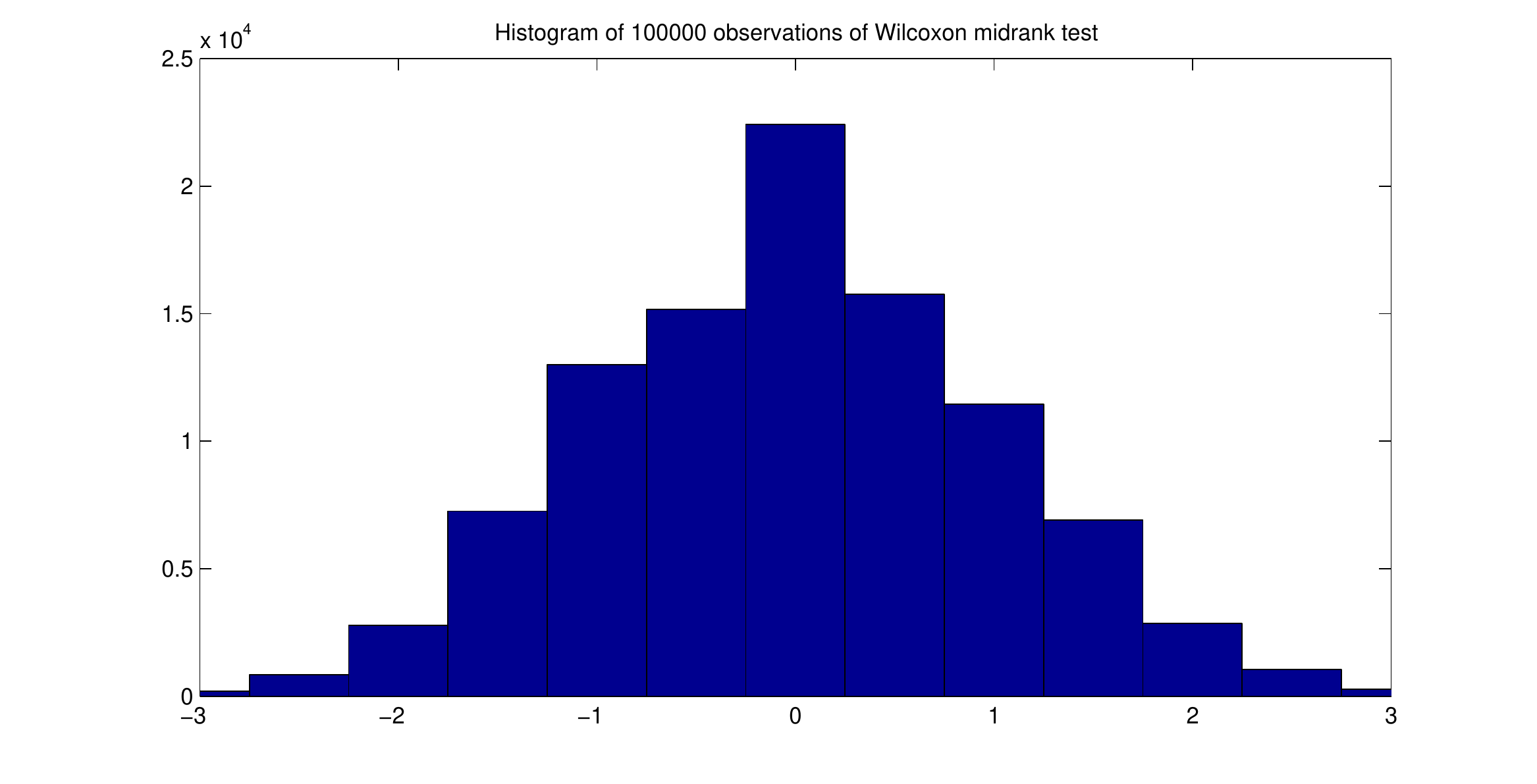}%
}
\end{tabular}
\caption{Histograms of $G^{2}$, $\bar{G}^{2}$ and $W$ with $n_1=40$, $n_2=20$ and $\pi _{i1}=0.35$, $i=1,2$. \label{figHH}}%
\end{figure}%

The following short simulation study considers $R=100,000$\ realizations,
$n_{i1}^{(h)}$, $i=1,2$, $h=1,...,R$, of%
\[
N_{i1}\overset{ind}{\sim}\mathcal{B}in(n_{i},\pi_{i1}),\qquad i=1,2,
\]
with $\pi_{11}=\pi_{21}=0.35$ and $n_{1}=40$ and $n_{2}=20$. In Figure
\ref{figHH} a histogram of $G^{2}$, $\bar{G}^{2}$ and $W$\ is shown where the
shape of the density function of each can be recognized. In Table \ref{ttHH},
the simulated significance levels ($\widehat{\alpha}$) and powers
($\widehat{\beta}$) are calculated as the proportion of statistics with
$p$-values smaller than the nominal level $\alpha=0.05$. The test-statistic
based on the Hellinger distance $S_{-1/2}$, given in (\ref{hel}), is also
included. From this simulation study it is concluded that the $G^{2}$
likelihood ratio test-statistic and the $W$\ Wilcoxon mid-rank test for
$2\times2$ contingency tables, are specific procedures for the one sided test
(\ref{tt1}) since the parameter spaces are different, but are strongly related
with the two sided test (\ref{tt2b}) in the way of calculating the value of
the test-statistic and the corresponding $p$-value. It is remarkable that the
simulated significance level for the one-sided $W$\ Wilcoxon mid-rank test for
$2\times2$ contingency tables exhibits a slightly better approximation of the
nominal level\ in comparison with the likelihood ratio test $G^{2}$ for the
one sided test (\ref{tt1}), and the likelihood ratio test $G^{2}$ slightly
better than the test-statistic based on the Hellinger distance $S_{-1/2}$. The
powers of the test-statistics are calculated for $\pi_{11}=0.45>\pi_{21}%
=0.35$. The test-statistic based on the Hellinger distance $S_{-1/2}$ has the
greatest power and the $W$\ Wilcoxon mid-rank test the smallest power for the
one sided test (\ref{tt1}). In Section \ref{Sim} a more extensive simulation
study is considered with a criterion to select the best test-statistic within
a broader class of power divergence based test-statistics. Finally, the two
sided test-statistics, $\bar{G}^{2}$ and $W$, exhibit a worse power than the
one sided test-statistics. This behaviour was obviously expected, since being
$\Psi\subset\Psi^{\prime}$ or equivalently $\Lambda\subset\Lambda^{\prime}$,
the one sided tests have always a better power than the two sided tests.%

\begin{table}[htbp]  \tabcolsep2.8pt  \centering
\begin{tabular}
[c]{cccccccccccc}\hline
&  & $S_{-1/2}$ (one sided) &  & $G^{2}$ (one sided) &  & $\bar{G}^{2}$ (two
sided) &  & one sided $W$ &  & two sided $W$ & \\\hline
$\widehat{\alpha}$ &  & $0.0567$ &  & $0.0559$ &  & $0.0533$ &  & $0.0495$ &
& $0.0489$ & \\
$\widehat{\beta}$ &  & $0.2027$ &  & $0.2025$ &  & $0.1186$ &  & $0.1865$ &  &
$0.1149$ & \\\hline
\end{tabular}
\caption{Simulated significance levels ($\pi _{11}=\pi _{21}=0.35$), $\widehat{\alpha
}$, and powers ($\pi _{11}=0.45$, $\pi _{21}=0.35$), $\widehat{\beta }$, for
$S_{-1/2}$, $G^{2}$, $\bar{G}^{2}$ and $W$ test-statistics with $n_{1}=40>n_{2}=20$.\label{ttHH}}%
\end{table}%

\subsection{Power divergence test-statistics: simulated size and
powers\label{Sim}}

In this Section the performance of the power divergence test statistics
(\ref{PD1})-(\ref{PD6}) is studied in terms of the simulated exact size and
simulated power of the test, based on small and moderate sample sizes. A
simulation experiment with seven scenarios is designed in Table \ref{tt},
taking into account the sample sizes of the two independent samples. The pairs
of scenarios (A,G), (B,F) and (C,E) should have very similar exact
significance levels, since the sample sizes of the two samples are symmetrical
(the ratio of one sample is the inverse of the other one). With respect to the
choice of $\lambda$, the parameters for the power divergence test statistics,
the interest is focused on the interval $[-1.5,3]$. Note that the
test-statistics applied in the numerical example are covered as particular cases.%

\begin{table}[htbp]   \centering
$%
\begin{tabular}
[c]{cccccccc}\hline
scenarios & sc. A & sc. B & sc. C & sc. D & sc. E & sc. F & sc. G\\\hline
$n_{1}$ & $20$ & $20$ & $20$ & $20$ & $16$ & $10$ & $4$\\
$n_{2}$ & $4$ & $10$ & $16$ & $20$ & $20$ & $20$ & $20$\\\hline
ratio & $5$ & $2$ & $1.25$ & $1$ & $0.8$ & $0.5$ & $0.2$\\\hline
\end{tabular}
\ \ \ \ \ \ \ \ \ $%
\caption{Scenarios,  based on sample sizes, for the simulation stydy in a contingency table $2\times 3$.\label{tt}}
\end{table}%
\bigskip

The algorithm described in Section \ref{sec:Numerical example} is taken into
account to calculate the $p$-value of each test-statistic ${T\in\{T_{\lambda
},S_{\lambda}\}}_{\lambda\in\lbrack-1.5,3]}$, with a sample $\boldsymbol{N}$,
and this is repeated independently $R=25\,000$ times. The simulated exact
power was computed as%
\[
{\widehat{\beta}}_{T}={\widehat{\beta}}_{T}(\delta)=\frac{\text{number of
replications of }T\,\text{for which the }p\text{-value is less than }\alpha
}{R},
\]
for the probability vectors%
\begin{align*}
\boldsymbol{\pi}_{i}(\boldsymbol{\theta}(\delta))  &  =(\pi_{i1}%
(\boldsymbol{\theta}(\delta)),\pi_{i2}(\boldsymbol{\theta}(\delta)),\pi
_{i3}(\boldsymbol{\theta}(\delta)))^{T}\\
\pi_{ij}(\boldsymbol{\theta}(\delta))  &  =\frac{1}{3}\frac{1+i(j-1)\delta
}{1+i\delta},\quad i=1,2,\quad j=1,2,3,
\end{align*}
for $\delta\in\Xi=\{0.1,0.5,1.0,1.5\}$. The simulated exact size was computed
as%
\[
{\widehat{\alpha}}_{T}=\frac{\text{number of replications of }T\,\text{for
which the }p\text{-value is less than }\alpha}{R},
\]
for the probability vectors
\begin{align*}
\boldsymbol{\pi}_{i}(\boldsymbol{\theta}_{0})  &  =(\pi_{i1}%
(\boldsymbol{\theta}_{0}),\pi_{i2}(\boldsymbol{\theta}_{0}),\pi_{i3}%
(\boldsymbol{\theta}_{0}))^{T}\\
\pi_{ij}(\boldsymbol{\theta}_{0})  &  =\frac{1}{3},\quad i=1,2,\quad j=1,2,3,
\end{align*}
which corresponds to the case of $\delta=0$ for $\boldsymbol{\pi}%
_{i}(\boldsymbol{\theta}(\delta))$.

In Table \ref{tt2} the local odds ratios,%
\[
\vartheta_{j}=\vartheta_{j}(\delta)=\frac{1+(j-1)\delta}{1+2(j-1)\delta}%
\frac{1+2j\delta}{1+j\delta},
\]
$j=1,2$, are shown for $\delta\in\{0\}\cup\Xi$. Notice that in
$\boldsymbol{\vartheta}=\boldsymbol{\vartheta}(\delta)=(\vartheta_{1}%
(\delta),\vartheta_{2}(\delta))^{T}$\ some of the components are further from
$\boldsymbol{\vartheta}(0)=\boldsymbol{1}_{2}$ (null hypothesis), as the value
of $\delta>0$ is further from $0$. This means that a greater value of the
estimation of the power function might be obtained, as $\delta>0$ is greater.
This claim is supported by the fact that some values of the components of
$\boldsymbol{\vartheta}=\boldsymbol{\vartheta}(\delta)$ decrease as $\delta>0$
increases but more slowly than the others increase. In addition, for a fixed
value of $\delta>0$, it is expected a greater value of $\widehat{\beta}%
_{T}(\delta)$, as $n$ is greater (the worst powers in Scenario A and the best
powers in Scenario D). We have also added in Table \ref{tt2}\ the last three
rows for two reasons, first, to show that for any fixed value of $\delta$,
$\pi_{2j}(\boldsymbol{\theta}(\delta))/\pi_{1j}(\boldsymbol{\theta}(\delta))$
is non-decreasing as $j$, the ordinal category, increases and second, to
clarify the meaning of the two asterisks contained in the table. It is clear
that for a big value of $\delta$, $\pi_{i1}(\boldsymbol{\theta}(\delta))>0$
goes to zero on the right for $i=1,2$, but in the practice, due to the empty
cells in the contingency table, the estimator of the ratio $\pi_{21}%
(\boldsymbol{\theta}(\delta))/\pi_{11}(\boldsymbol{\theta}(\delta))$ becomes
$1$ rather than $\frac{1}{2}$ (and $\vartheta_{1}(\delta)$ becomes $1$). This
was our experience when we used values of $\delta$ bigger than $1.5$, i.e. the
power becomes quite little in the practice.%

\begin{table}[htbp]  \tabcolsep2.8pt  \centering
\begin{tabular}
[c]{ccccccccccccc}\hline
&  & $\delta=0$ &  & $\delta=0.1$ &  & $\delta=0.5$ &  & $\delta=1$ &  &
$\delta=1.5$ &  & $\delta=\infty$\\\hline
$\vartheta_{1}=\vartheta_{1}(\delta)$ &  & $1.000$ &  & $1.091$ &  & $1.333$ &
& $1.500$ &  & $1.600$ &  & $2.00^{\ast}$\\
$\vartheta_{2}=\vartheta_{2}(\delta)$ &  & $1.000$ &  & $1.069$ &  & $1.125$ &
& $1.111$ &  & $1.094$ &  & $1.00$\\\hline
$\pi_{21}(\boldsymbol{\theta}(\delta))/\pi_{11}(\boldsymbol{\theta}(\delta))$
&  & $0.33/0.33$ &  & $0.28/0.30$ &  & $0.17/0.22$ &  & $0.11/0.17$ &  &
$0.08/0.13$ &  & $0.50^{\ast}$\\
$\pi_{22}(\boldsymbol{\theta}(\delta))/\pi_{12}(\boldsymbol{\theta}(\delta))$
&  & $0.33/0.33$ &  & $0.33/0.33$ &  & $0.33/0.33$ &  & $0.33/0.33$ &  &
$0.33/0.33$ &  & $1.00$\\\hline
\end{tabular}
\caption{Theoretical local odd ratios for the Monte Carlo study.\label{tt2}}%
\end{table}%

Once a nominal size $\alpha=0.05$ is established, Table \ref{alfas} summarizes
the simulated exact sizes in all the scenarios for the test-statistic
${T\in\{T_{\lambda},S_{\lambda},W\}}_{\lambda\in\Lambda}$, with $\Lambda
=\{-1.5,-1,-\frac{1}{2},0,\frac{2}{3},1,1.5,2,3\}$. We have plotted $3\times2$
graphs in Figures \ref{fig2}-\ref{fig7} and we refer them as plots in three
rows. In the first row of Figures \ref{fig1}-\ref{fig7} we can see on the left
the exact power in all the scenarios for the test-statistic $\{{T_{\lambda
},W\}}_{\lambda\in\lbrack-1.5,3]}$ and on the right for the test-statistic
$\{{S_{\lambda},W\}}_{\lambda\in\lbrack-1.5,3]}$. In order to make a
comparison of exact powers, we cannot directly proceed without considering the
exact sizes. For this reason we are going to give a procedure based on two
steps, for scenarios B-G.

\noindent\textit{Step 1}: We are going to check for all the power divergence
based test-statistics the criterion given by Dale (1986), i.e.,
\begin{equation}
|\,\text{logit}(1-{\widehat{\alpha}}_{T})-\text{logit}(1-\alpha)\,|\leq e
\label{con1}%
\end{equation}
with $\mathrm{logit}\left(  p\right)  =\log\left(  \frac{p}{1-p}\right)  $. We
only consider the values of $\lambda$\ such that ${\widehat{\alpha}}_{T}%
$\ satisfies (\ref{con1}) with $e=0.35$, then we shall only consider the
test-statistics such that ${\widehat{\alpha}}_{T}\in\left[
0.0357,0.0695\right]  $, in all the scenarios. This criterion has been
considered for some authors, see for instance Cressie et al. (2003) and
Mart\'{\i}n and Pardo (2012). The cases satisfying the criterion are marked in
bold in Table \ref{alfas}, and comprise those values in the abscissa of the
plot between the dashed band (the dashed line in the middle represents the
nominal size), and we can conclude that we must not consider in our study
${T\in\{T_{\lambda},S_{\lambda},W\}}_{\lambda\in\lbrack-1.5,-0.4)}$.

\noindent\textit{Step 2}: We compare all the test statistics obtained in Step
1 with the classical likelihood ratio test ($G^{2}=T_{0}$) as well as the
classical Pearson test statistic ($X^{2}=S_{1}$). To do so, we have calculated
the relative local efficiencies%
\[
\widehat{\rho}_{T}=\widehat{\rho}_{T}(\delta)=\frac{({\widehat{\beta}}%
_{T}(\delta)-{\widehat{\alpha}}_{T})-({\widehat{\beta}}_{T_{0}}(\delta
)-{\widehat{\alpha}}_{T_{0}})}{{\widehat{\beta}}_{T_{0}}(\delta
)-{\widehat{\alpha}}_{T_{0}}},\qquad\widehat{\rho}_{T}^{\ast}=\widehat{\rho
}_{T}^{\ast}(\delta)=\frac{({\widehat{\beta}}_{T}(\delta)-{\widehat{\alpha}%
}_{T})-({\widehat{\beta}}_{S_{1}}(\delta)-{\widehat{\alpha}}_{S_{1}}%
)}{{\widehat{\beta}}_{S_{1}}(\delta)-{\widehat{\alpha}}_{S_{1}}}.
\]
In Figures \ref{fig2}-\ref{fig7} the powers and the relative local
efficiencies are summarized. The second rows of the figures represent
$\widehat{\rho}_{T}$, while in the third row is plotted $\widehat{\rho}%
_{T}^{\ast}$, on the left it is considered ${T\in}\{{T_{\lambda},W\}}%
_{\lambda\in\lbrack-1.5,3]}$ and ${T\in}\{{S_{\lambda},W\}}_{\lambda\in
\lbrack-1.5,3]}$ on the right. In Figure \ref{fig1} we show only one row since
it represents the atypical case in which the exact powers are less that the
exact significance level for the values of $\lambda$ satisfying the Dale's
criterion and so, it does not make sense to compare the powers.%

\begin{table}[htbp]  \tabcolsep2.8pt  \centering
\begin{tabular}
[c]{l}%
$%
\begin{tabular}
[c]{ccccccccccc}\hline\hline
sc & ${\widehat{\alpha}}_{T_{-1.5}}$ & ${\widehat{\alpha}}_{T_{-1}}$ &
${\widehat{\alpha}}_{T_{-1/2}}$ & ${\widehat{\alpha}}_{T_{0}}$ &
${\widehat{\alpha}}_{T_{2/3}}$ & ${\widehat{\alpha}}_{T_{1}}$ &
${\widehat{\alpha}}_{T_{1.5}}$ & ${\widehat{\alpha}}_{T_{2}}$ &
${\widehat{\alpha}}_{T_{3}}$ & ${\widehat{\alpha}}_{W}$\\\hline
$A$ & 0.0013 & 0.0359 & 0.1725 & 0.0745 & \textbf{0.0468} & \textbf{0.0460} &
\textbf{0.0517} & \textbf{0.0586} & 0.0949 & \textbf{0.0509}\\
$B$ & \textbf{0.0670} & \textbf{0.0612} & \textbf{0.0664} & \textbf{0.0597} &
\textbf{0.0541} & \textbf{0.0503} & \textbf{0.0511} & \textbf{0.0536} &
\textbf{0.0619} & \textbf{0.0509}\\
$C$ & 0.0747 & \textbf{0.0686} & \textbf{0.0608} & \textbf{0.0537} &
\textbf{0.0494} & \textbf{0.0485} & \textbf{0.0478} & \textbf{0.0492} &
\textbf{0.0573} & \textbf{0.0485}\\
$D$ & \textbf{0.0688} & \textbf{0.0653} & \textbf{0.0631} & \textbf{0.0577} &
\textbf{0.0538} & \textbf{0.0528} & \textbf{0.0522} & \textbf{0.0530} &
\textbf{0.0572} & \textbf{0.0495}\\
$E$ & 0.0751 & \textbf{0.0691} & \textbf{0.0610} & \textbf{0.0548} &
\textbf{0.0511} & \textbf{0.0502} & \textbf{0.0494} & \textbf{0.0509} &
\textbf{0.0591} & \textbf{0.0512}\\
$F$ & \textbf{0.0665} & \textbf{0.0614} & \textbf{0.0681} & \textbf{0.0616} &
\textbf{0.0554} & \textbf{0.0518} & \textbf{0.0517} & \textbf{0.0539} &
\textbf{0.0615} & \textbf{0.0506}\\
$G$ & 0.0013 & 0.0363 & 0.1802 & 0.0775 & \textbf{0.0477} & \textbf{0.0466} &
\textbf{0.0526} & \textbf{0.0602} & 0.0965 & \textbf{0.0541}\\\hline\hline
\end{tabular}
\ \ $\\
$%
\begin{tabular}
[c]{ccccccccccc}\hline\hline
sc & ${\widehat{\alpha}}_{S_{-1.5}}$ & ${\widehat{\alpha}}_{S_{-1}}$ &
${\widehat{\alpha}}_{S_{-1/2}}$ & ${\widehat{\alpha}}_{S_{0}}$ &
${\widehat{\alpha}}_{S_{2/3}}$ & ${\widehat{\alpha}}_{S_{1}}$ &
${\widehat{\alpha}}_{S_{1.5}}$ & ${\widehat{\alpha}}_{S_{2}}$ &
${\widehat{\alpha}}_{S_{3}}$ & ${\widehat{\alpha}}_{W}$\\\hline
$A$ & 0.2106 & 0.2055 & 0.1572 & 0.0745 & \textbf{0.0429} & \textbf{0.0430} &
\textbf{0.0499} & \textbf{0.0507} & 0.0752 & \textbf{0.0509}\\
$B$ & 0.0799 & 0.0762 & \textbf{0.0638} & \textbf{0.0596} & \textbf{0.0543} &
\textbf{0.0497} & \textbf{0.0509} & \textbf{0.0524} & \textbf{0.0584} &
\textbf{0.0509}\\
$C$ & 0.0729 & \textbf{0.0676} & \textbf{0.0581} & \textbf{0.0537} &
\textbf{0.0505} & \textbf{0.0492} & \textbf{0.0491} & \textbf{0.0501} &
\textbf{0.0583} & \textbf{0.0485}\\
$D$ & \textbf{0.0675} & \textbf{0.0656} & \textbf{0.0620} & \textbf{0.0577} &
\textbf{0.0552} & \textbf{0.0543} & \textbf{0.0541} & \textbf{0.0543} &
\textbf{0.0577} & \textbf{0.0495}\\
$E$ & 0.0745 & \textbf{0.0683} & \textbf{0.0584} & \textbf{0.0547} &
\textbf{0.0518} & \textbf{0.0507} & \textbf{0.0504} & \textbf{0.0515} &
\textbf{0.0598} & \textbf{0.0512}\\
$F$ & 0.0814 & 0.0780 & \textbf{0.0656} & \textbf{0.0616} & \textbf{0.0551} &
\textbf{0.0509} & \textbf{0.0516} & \textbf{0.0528} & \textbf{0.0572} &
\textbf{0.0506}\\
$G$ & 0.2170 & 0.2123 & 0.1653 & 0.0775 & \textbf{0.0446} & \textbf{0.0450} &
\textbf{0.0510} & \textbf{0.0516} & 0.0782 & \textbf{0.0541}\\\hline\hline
\end{tabular}
\ \ \ $%
\end{tabular}
\caption{${\widehat{\alpha}}_{T}$, for ${T\in\{T_{\lambda},S_{\lambda},W\}}_{\lambda\in\Lambda}$ in scenarios of Table \ref{tt}. \label{alfas}}%
\end{table}%
%

\begin{figure}[htbp]  \tabcolsep2.8pt  \centering
\begin{tabular}
[c]{cc}%
${T_{\lambda}}$ & ${S_{\lambda}}$\\%
{\includegraphics[
height=2.463in,
width=3.3667in
]%
{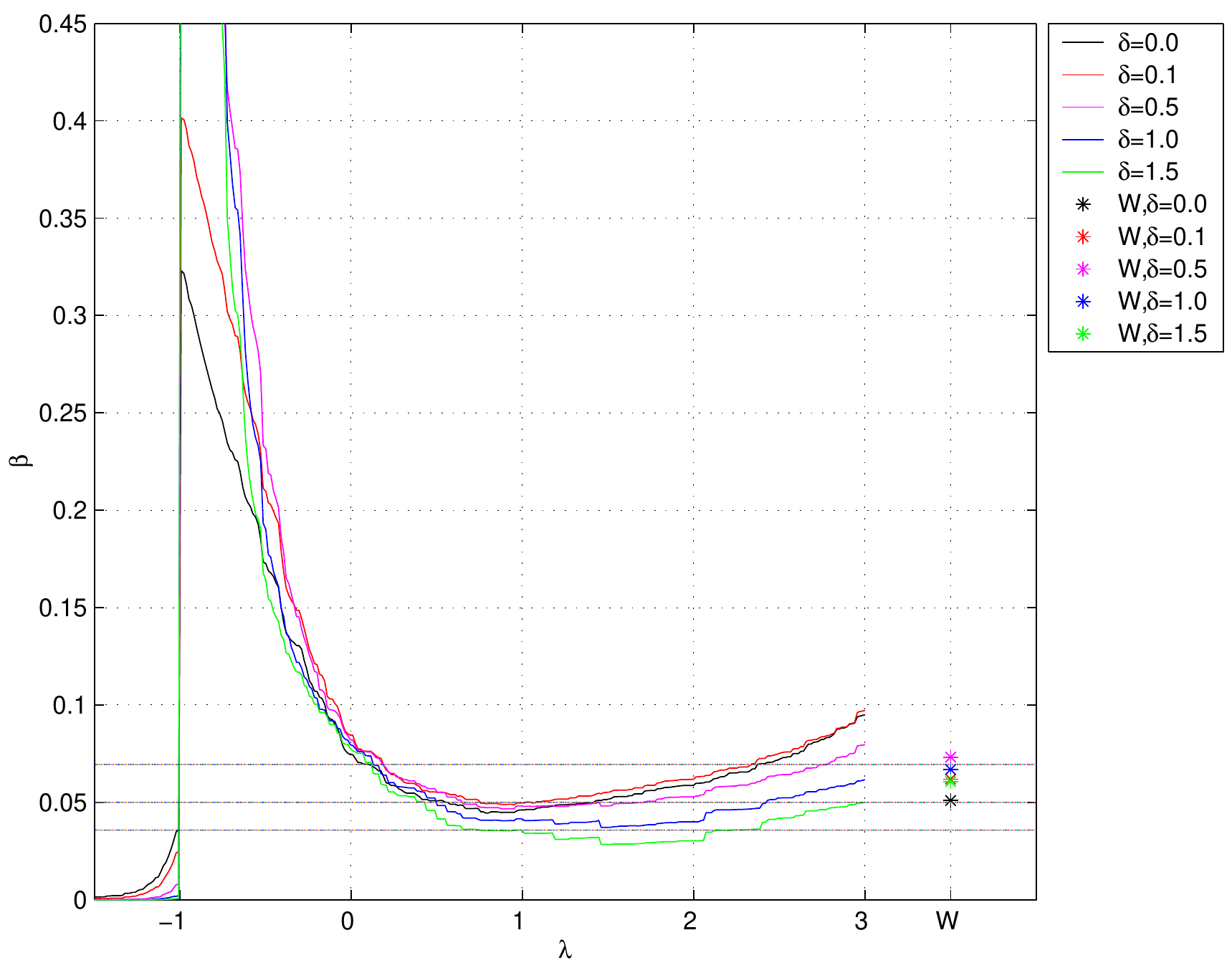}%
}
&
{\includegraphics[
height=2.463in,
width=3.3667in
]%
{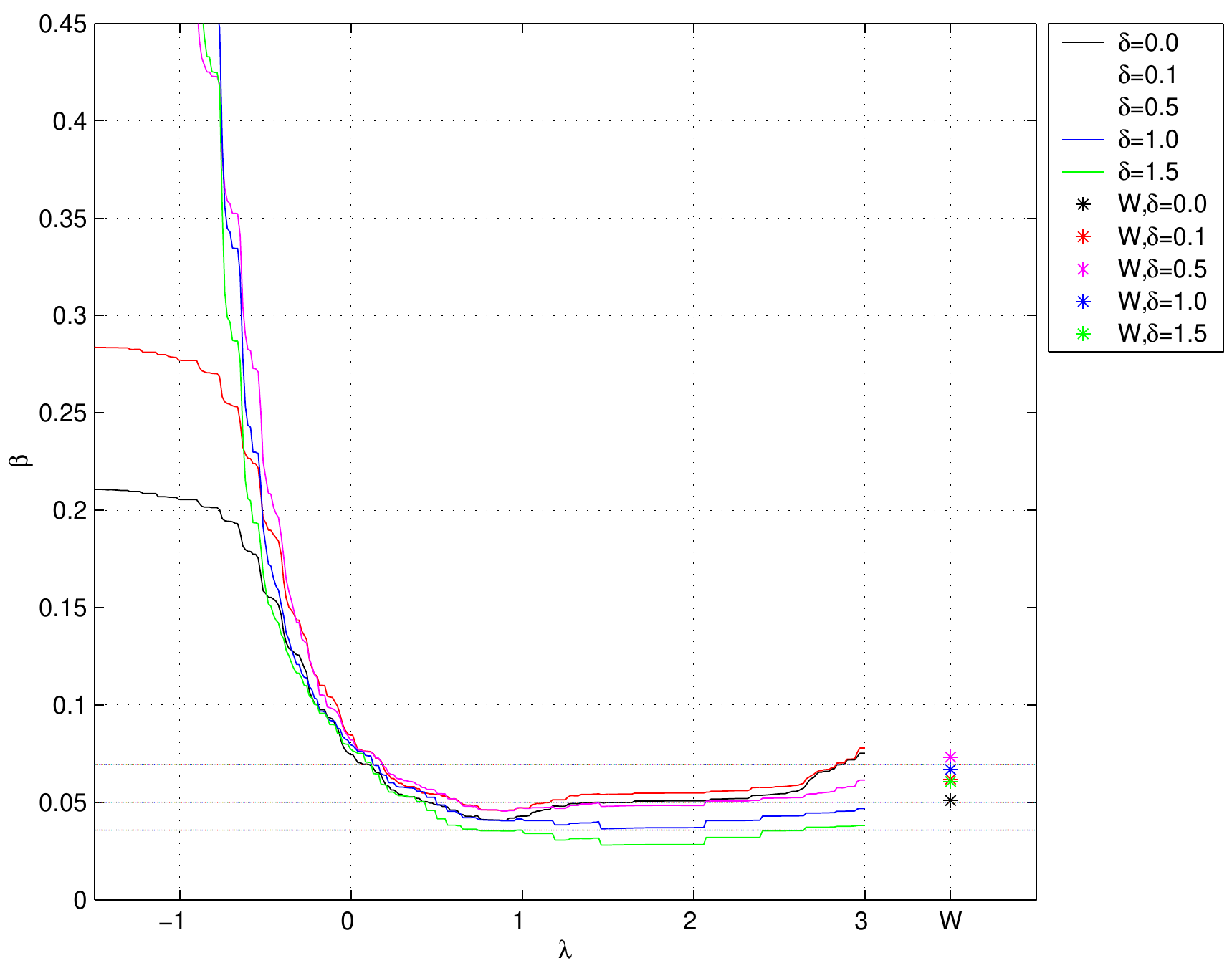}%
}
\end{tabular}
\caption{Powers for $T_{\lambda}$, $S_{\lambda}$  and $W$  in scenario A. \label{fig1}}%
\end{figure}%
%

\begin{figure}[htbp]  \tabcolsep2.8pt  \centering
\begin{tabular}
[c]{cc}%
${T_{\lambda}}$ & ${S_{\lambda}}$\\%
{\includegraphics[
height=2.4561in,
width=3.1202in
]%
{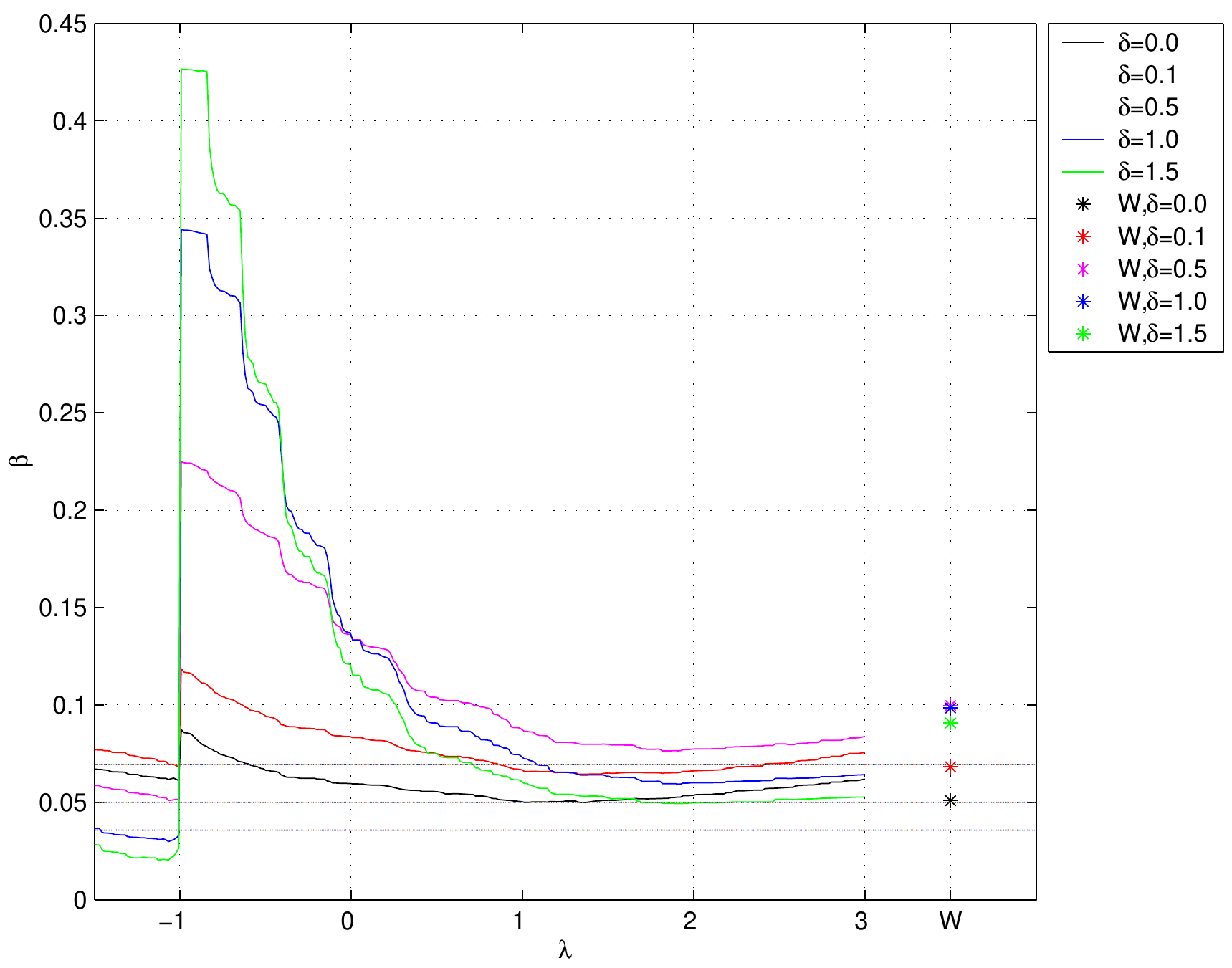}%
}
&
{\includegraphics[
height=2.4561in,
width=3.1202in
]%
{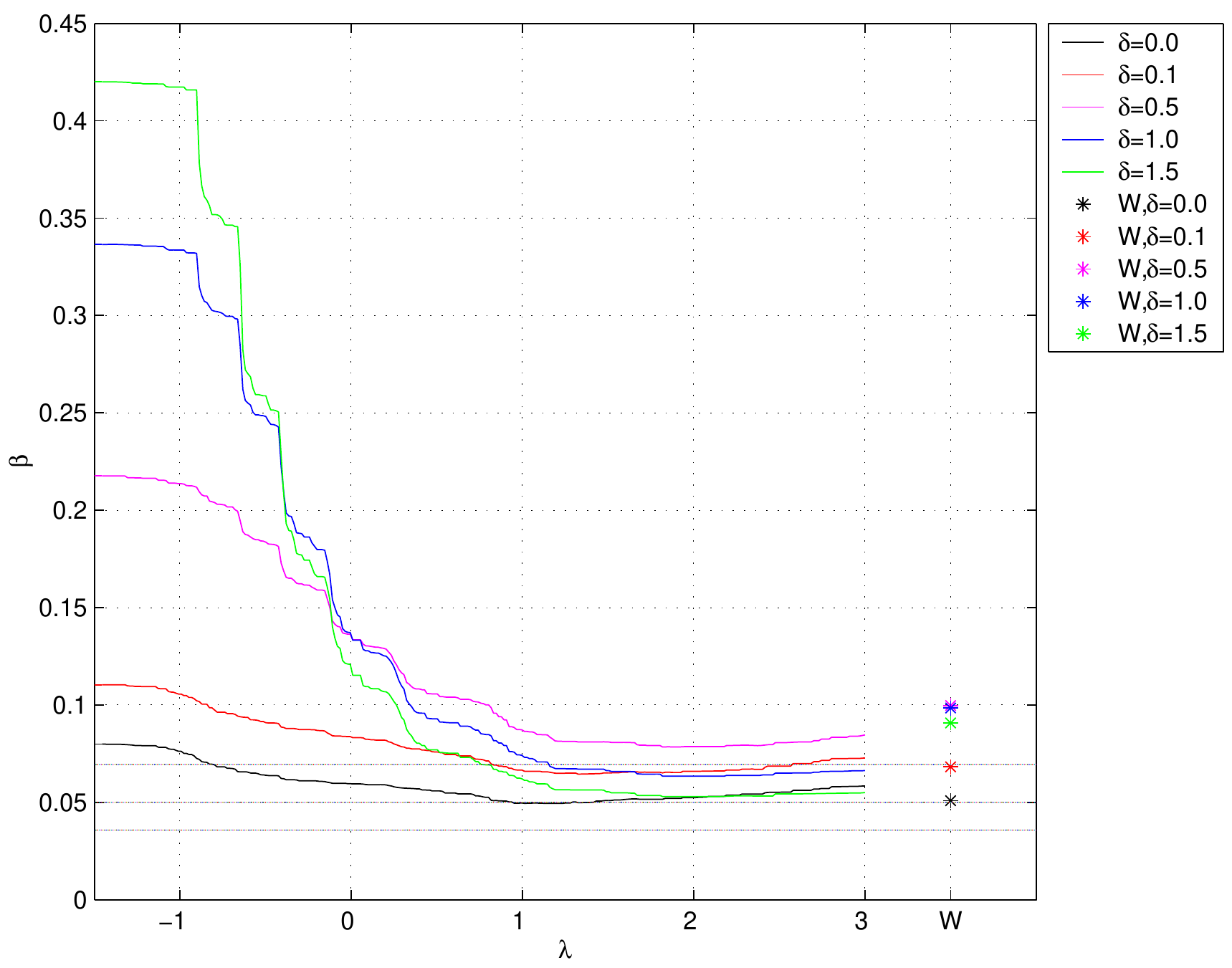}%
}
\\%
{\includegraphics[
height=2.4561in,
width=3.0701in
]%
{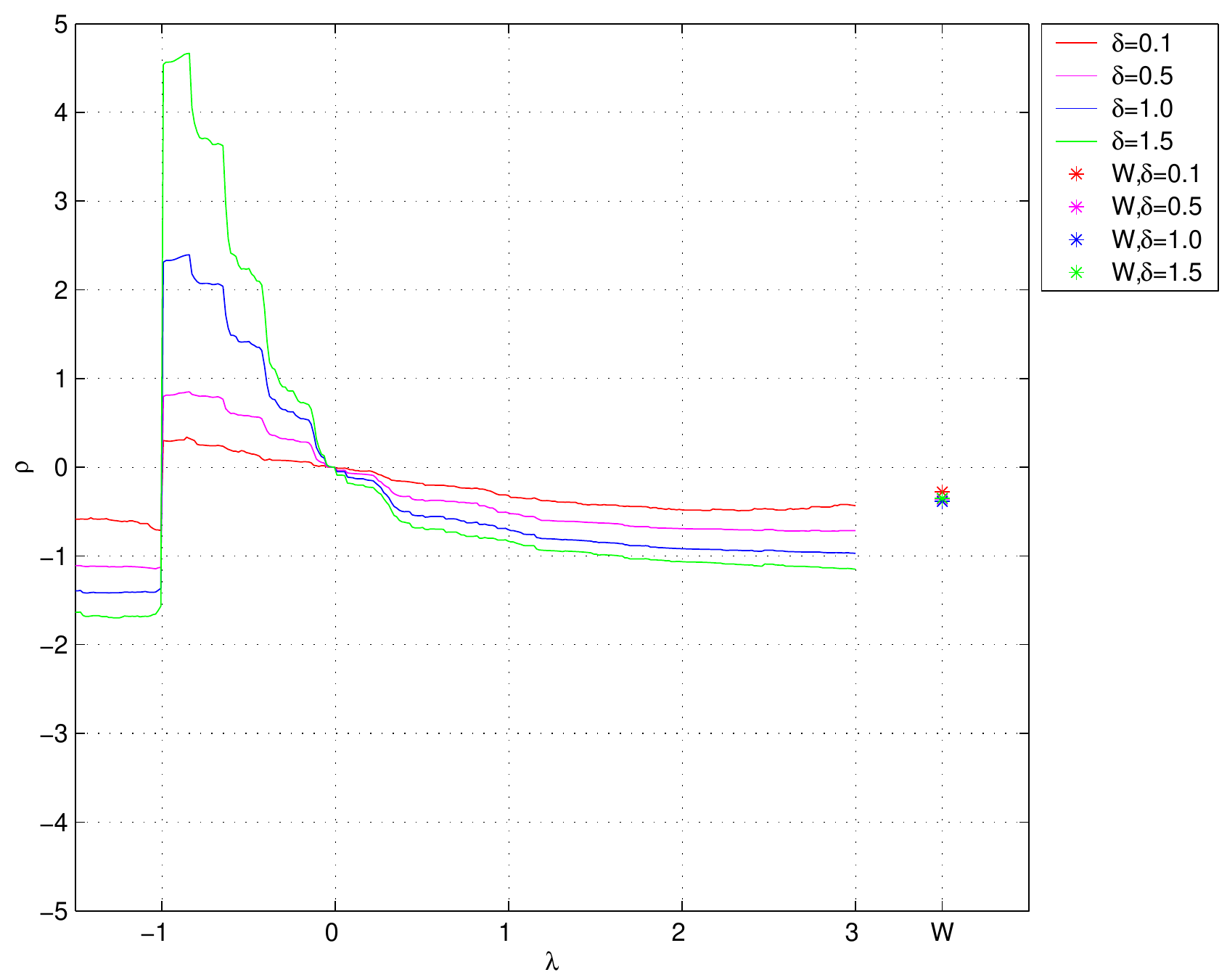}%
}
&
{\includegraphics[
height=2.4561in,
width=3.0701in
]%
{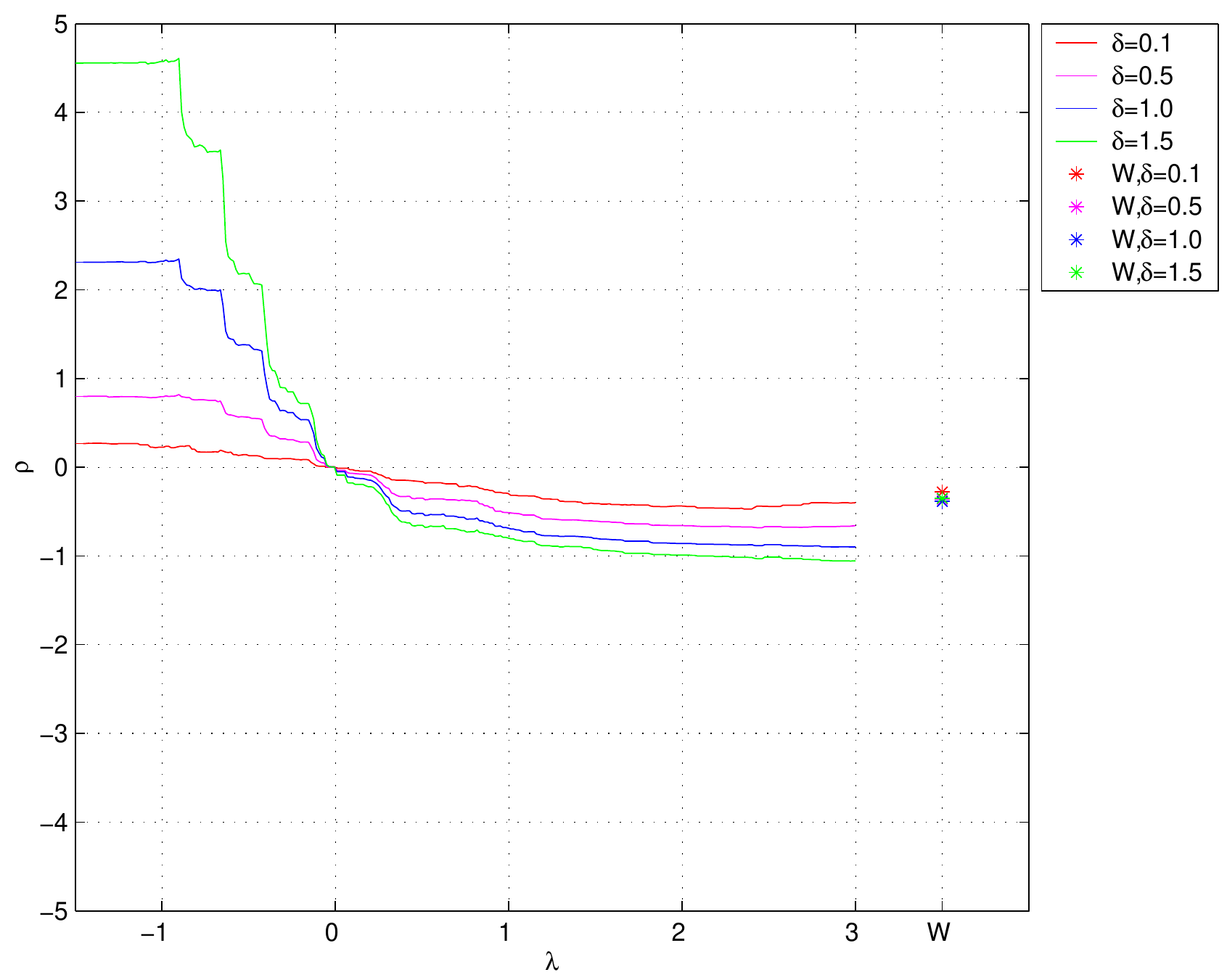}%
}
\\%
{\includegraphics[
height=2.4561in,
width=3.0701in
]%
{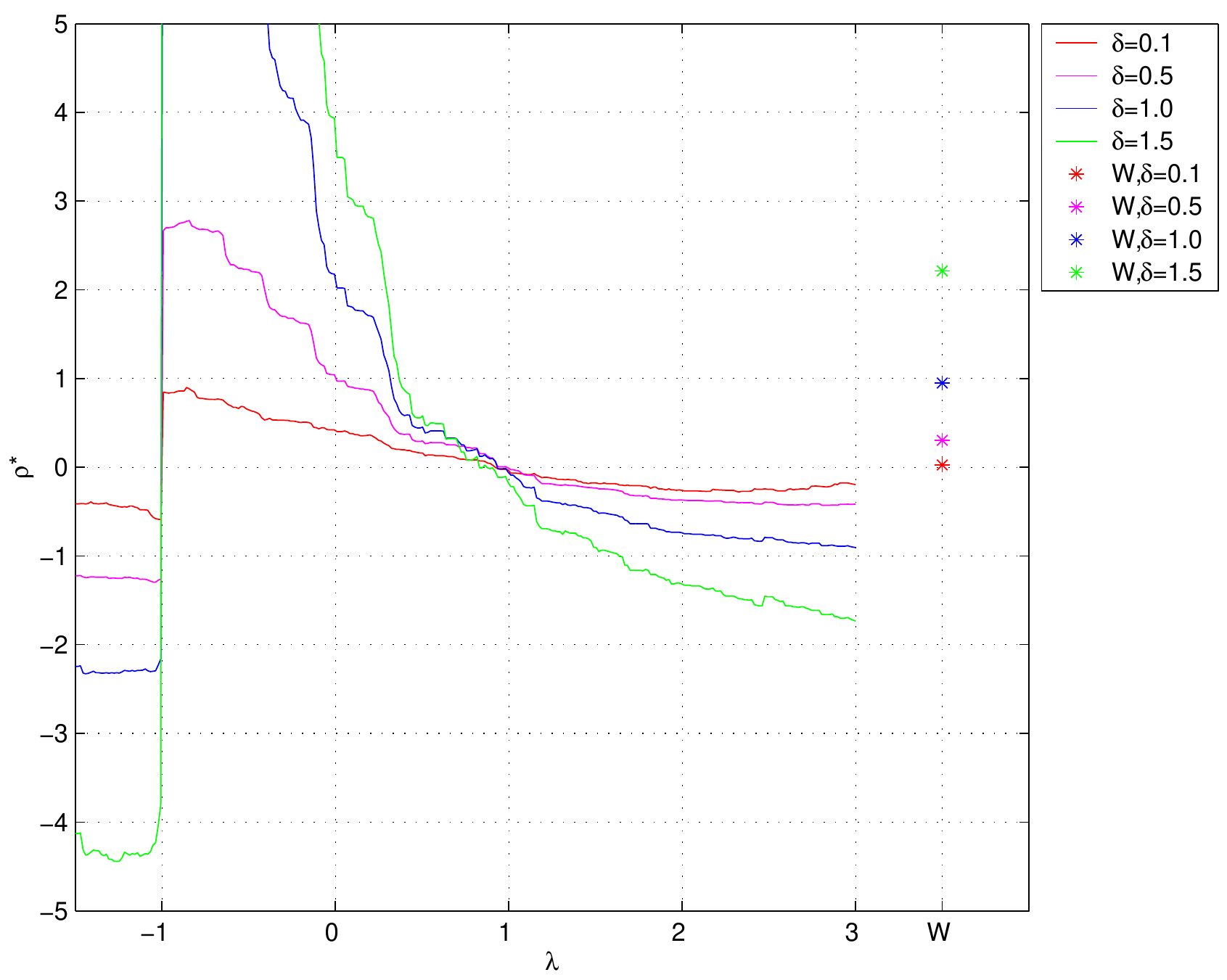}%
}
&
{\includegraphics[
height=2.4561in,
width=3.0701in
]%
{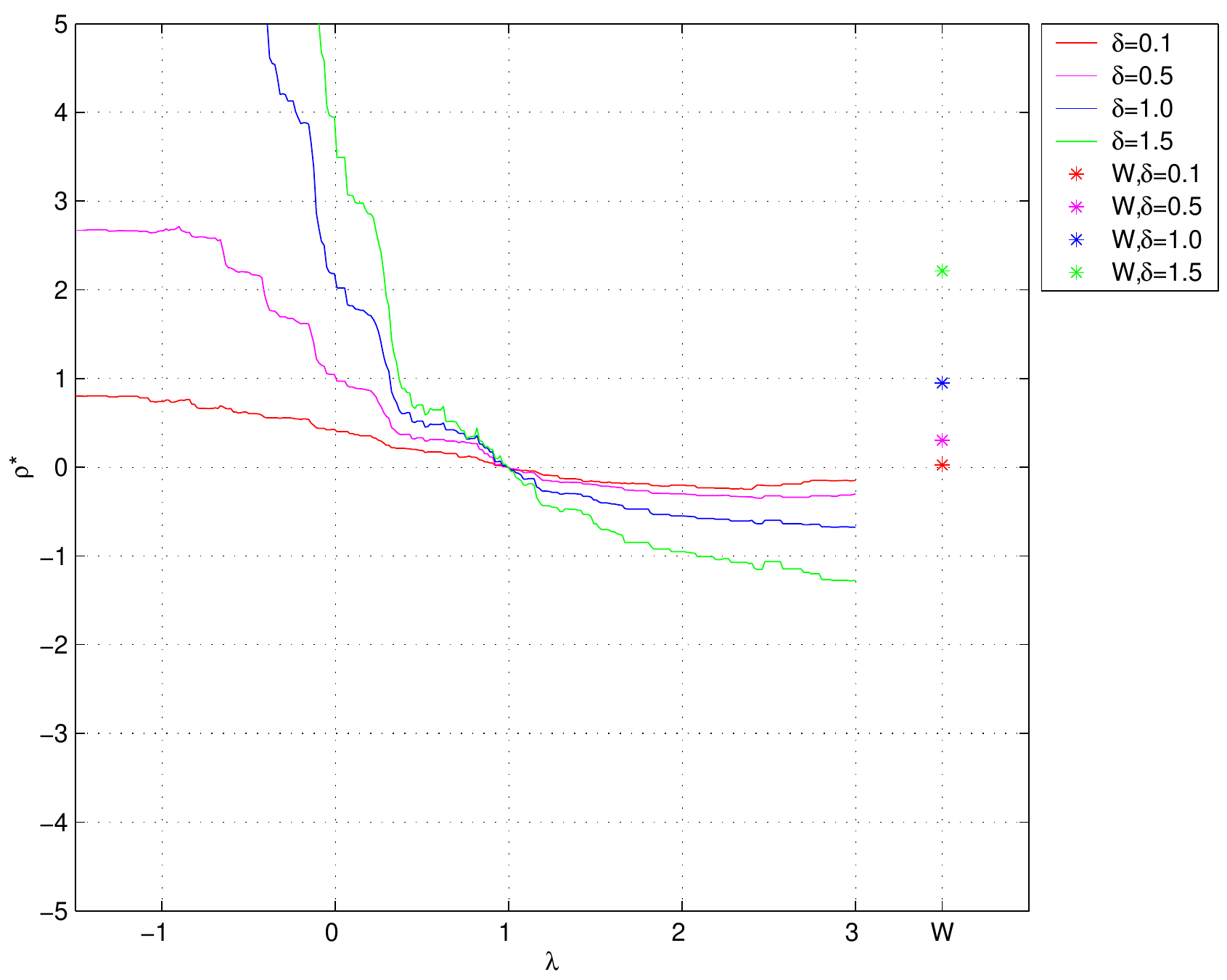}%
}
\end{tabular}
\caption{Power and relative local efficiencies for $T_{\lambda}$, $S_{\lambda}$ and $W$ in scenario B. \label{fig2}}%
\end{figure}%
%

\begin{figure}[htbp]  \tabcolsep2.8pt  \centering
\begin{tabular}
[c]{cc}%
${T_{\lambda}}$ & ${S_{\lambda}}$\\%
{\includegraphics[
height=2.4561in,
width=3.1202in
]%
{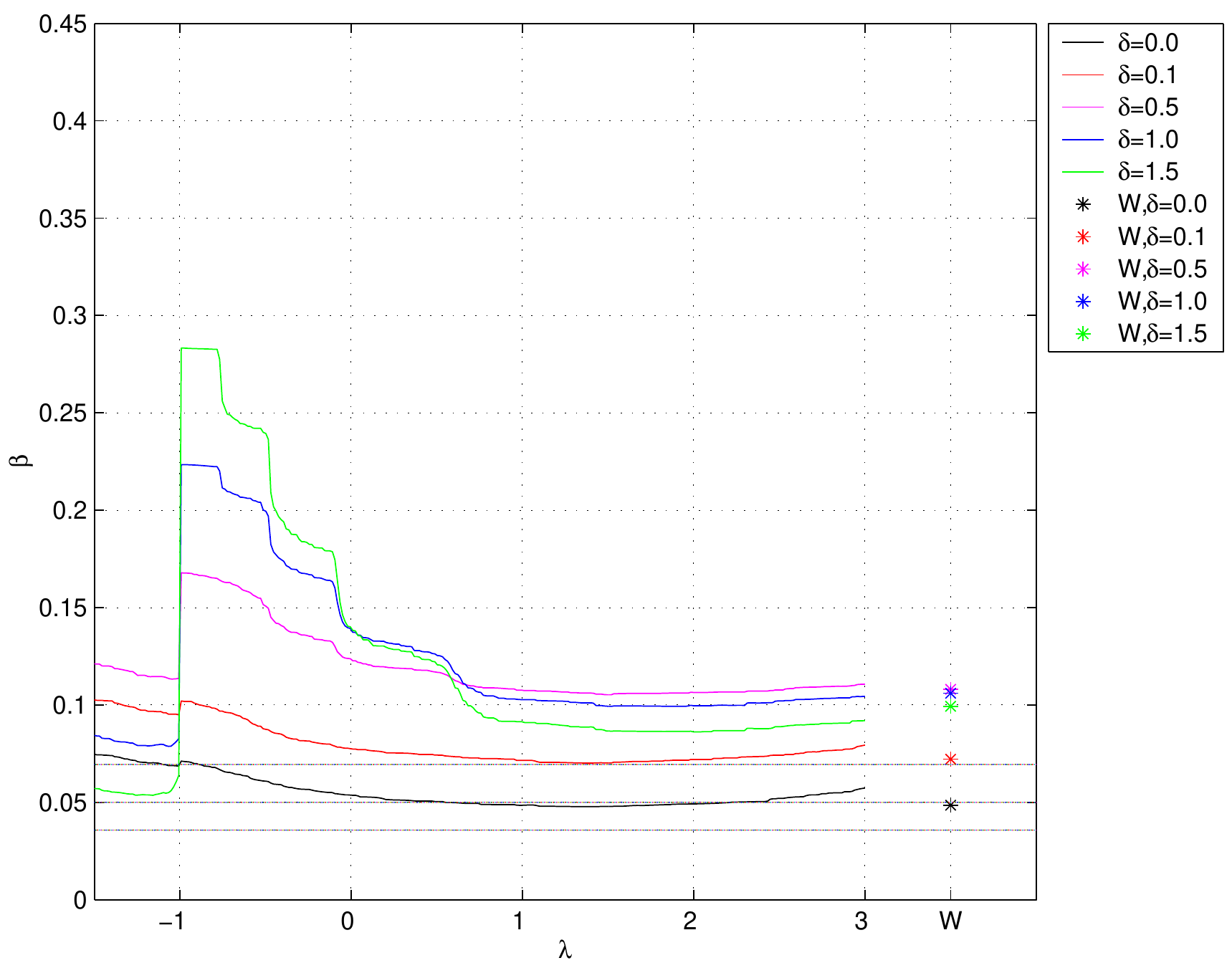}%
}
&
{\includegraphics[
height=2.4561in,
width=3.1202in
]%
{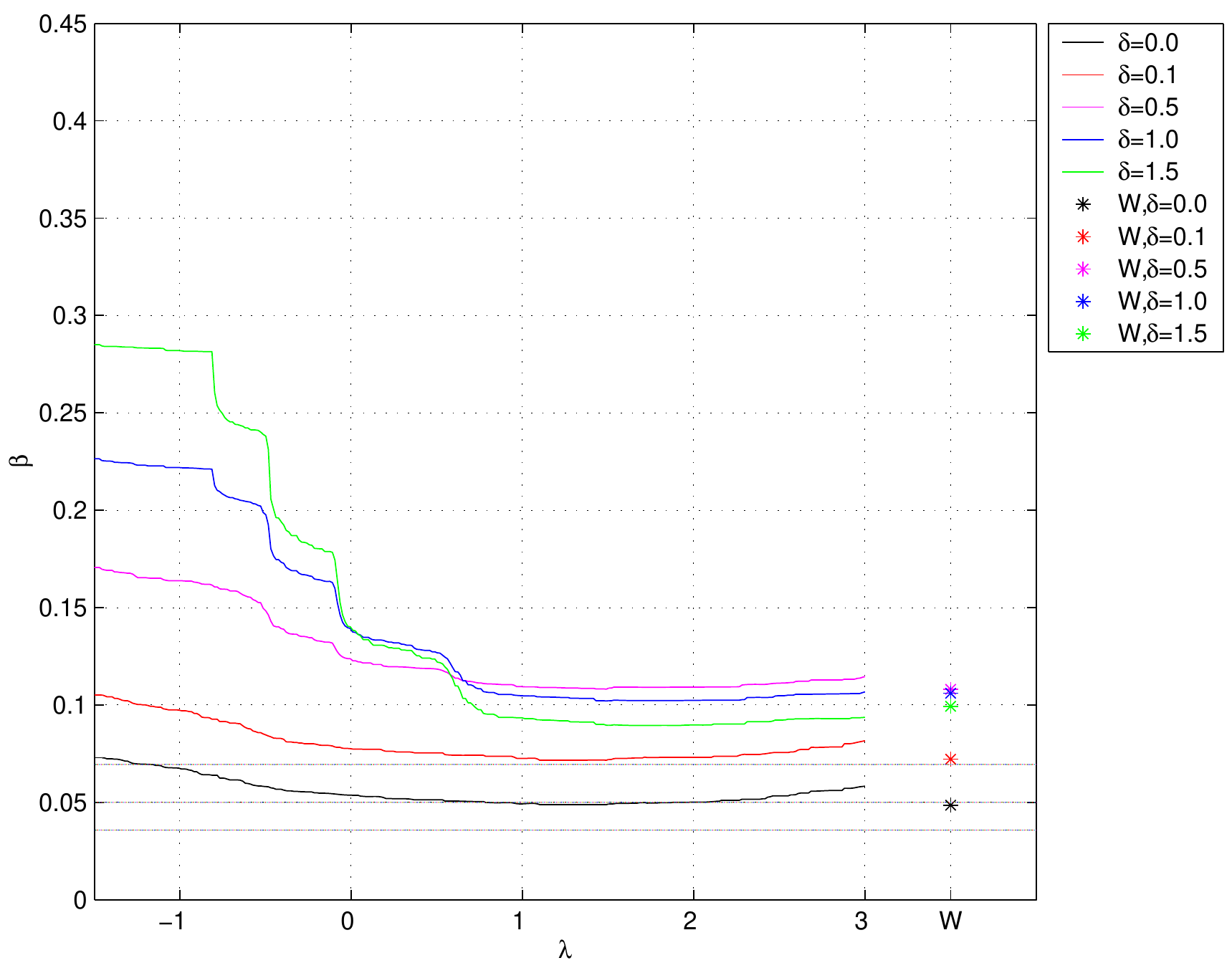}%
}
\\%
{\includegraphics[
height=2.4561in,
width=3.0701in
]%
{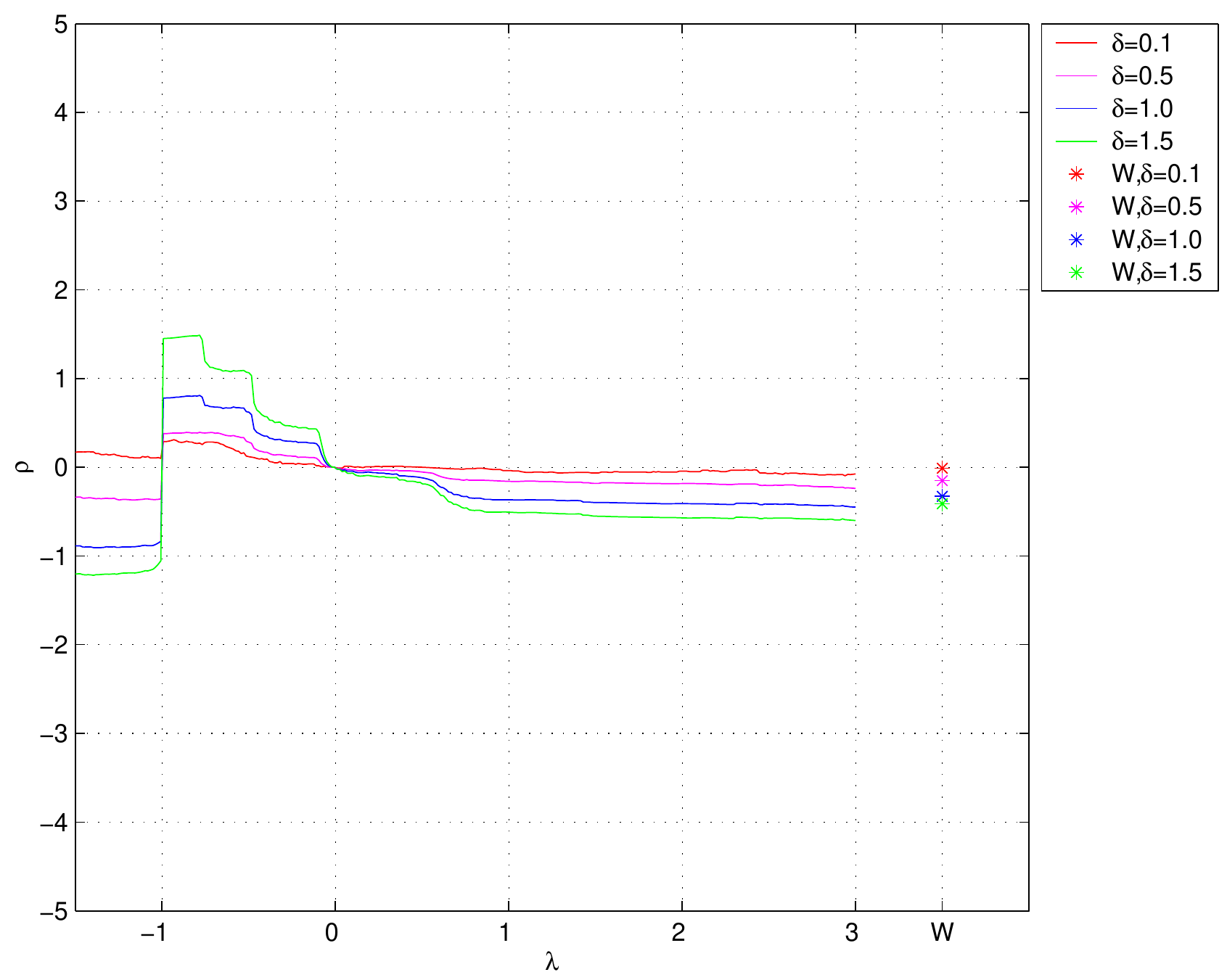}%
}
&
{\includegraphics[
height=2.4561in,
width=3.0701in
]%
{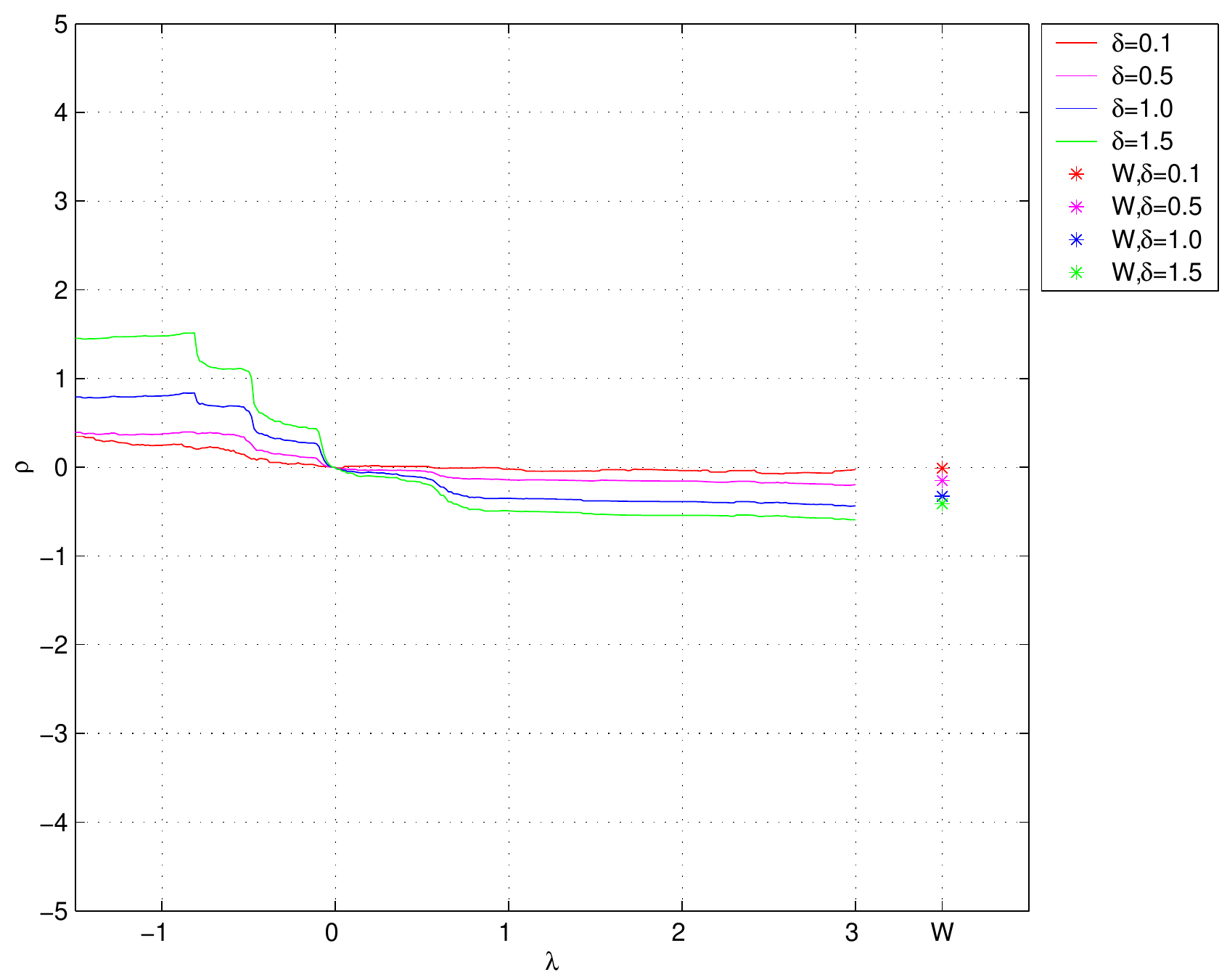}%
}
\\%
{\includegraphics[
height=2.4561in,
width=3.0701in
]%
{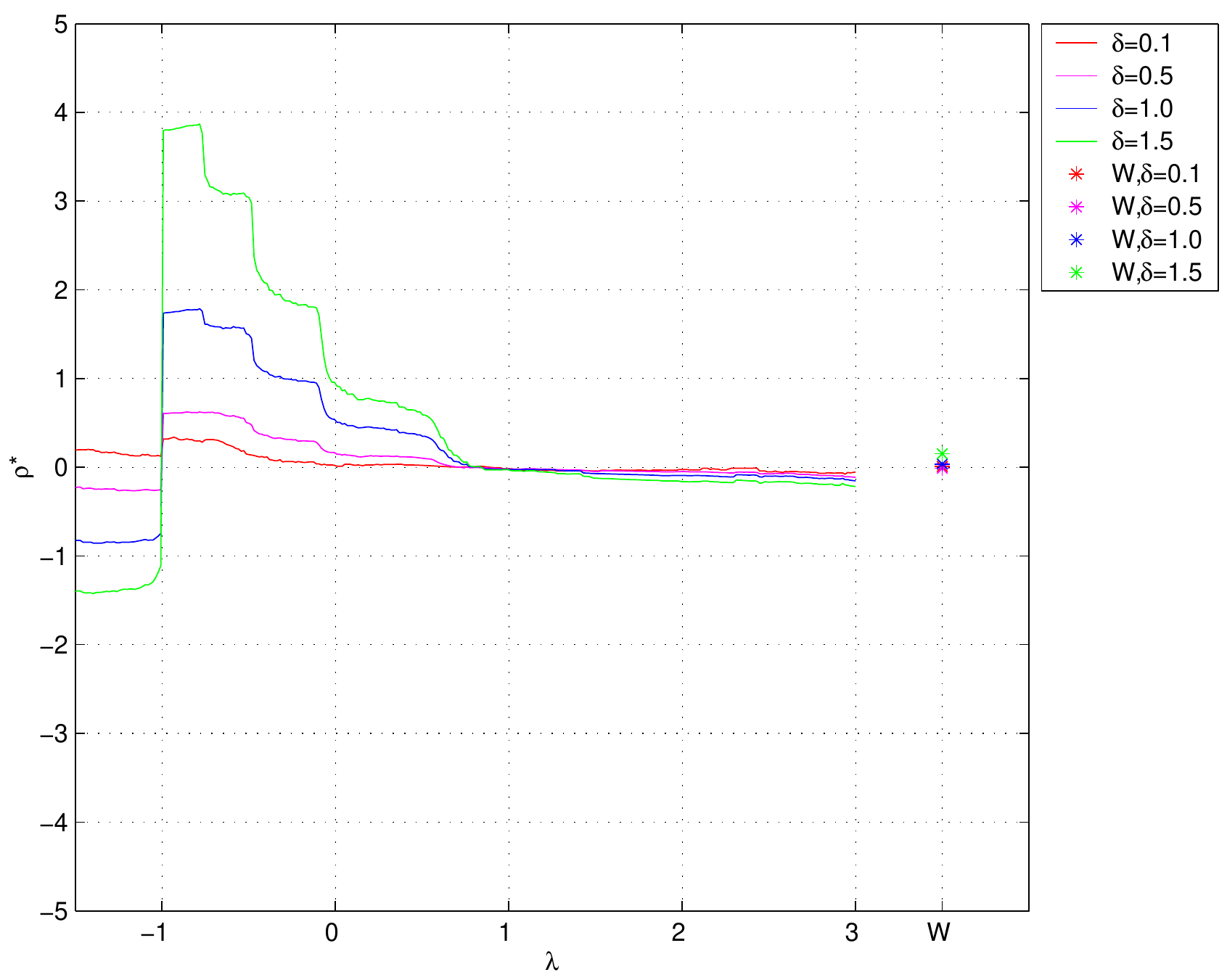}%
}
&
{\includegraphics[
height=2.4561in,
width=3.0701in
]%
{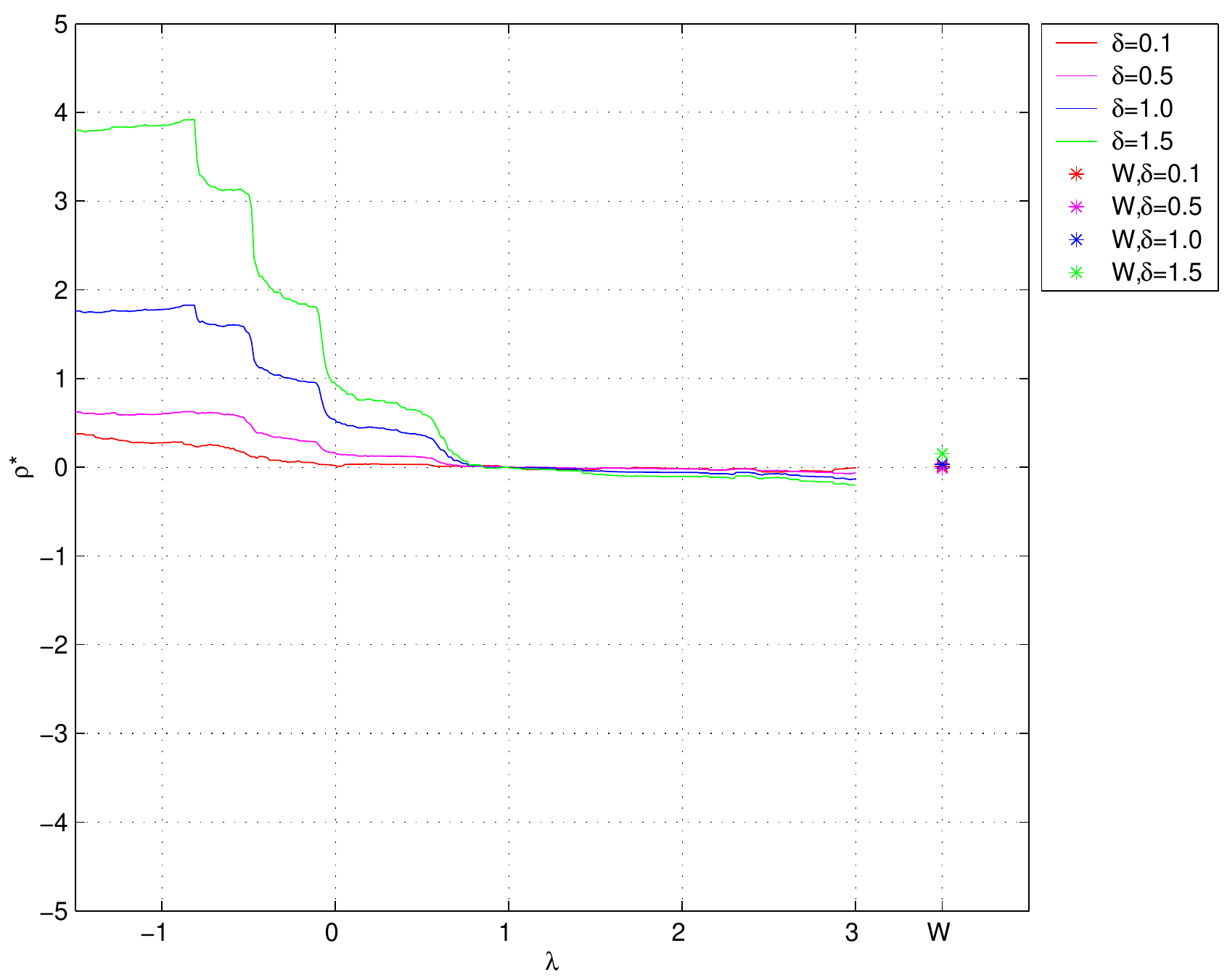}%
}
\end{tabular}
\caption{Power and relative local efficiencies for $T_{\lambda}$, $S_{\lambda}$ and $W$ in scenario C. \label{fig3}}%
\end{figure}%
%

\begin{figure}[htbp]  \tabcolsep2.8pt  \centering
\begin{tabular}
[c]{cc}%
${T_{\lambda}}$ & ${S_{\lambda}}$\\%
{\includegraphics[
height=2.4561in,
width=3.1202in
]%
{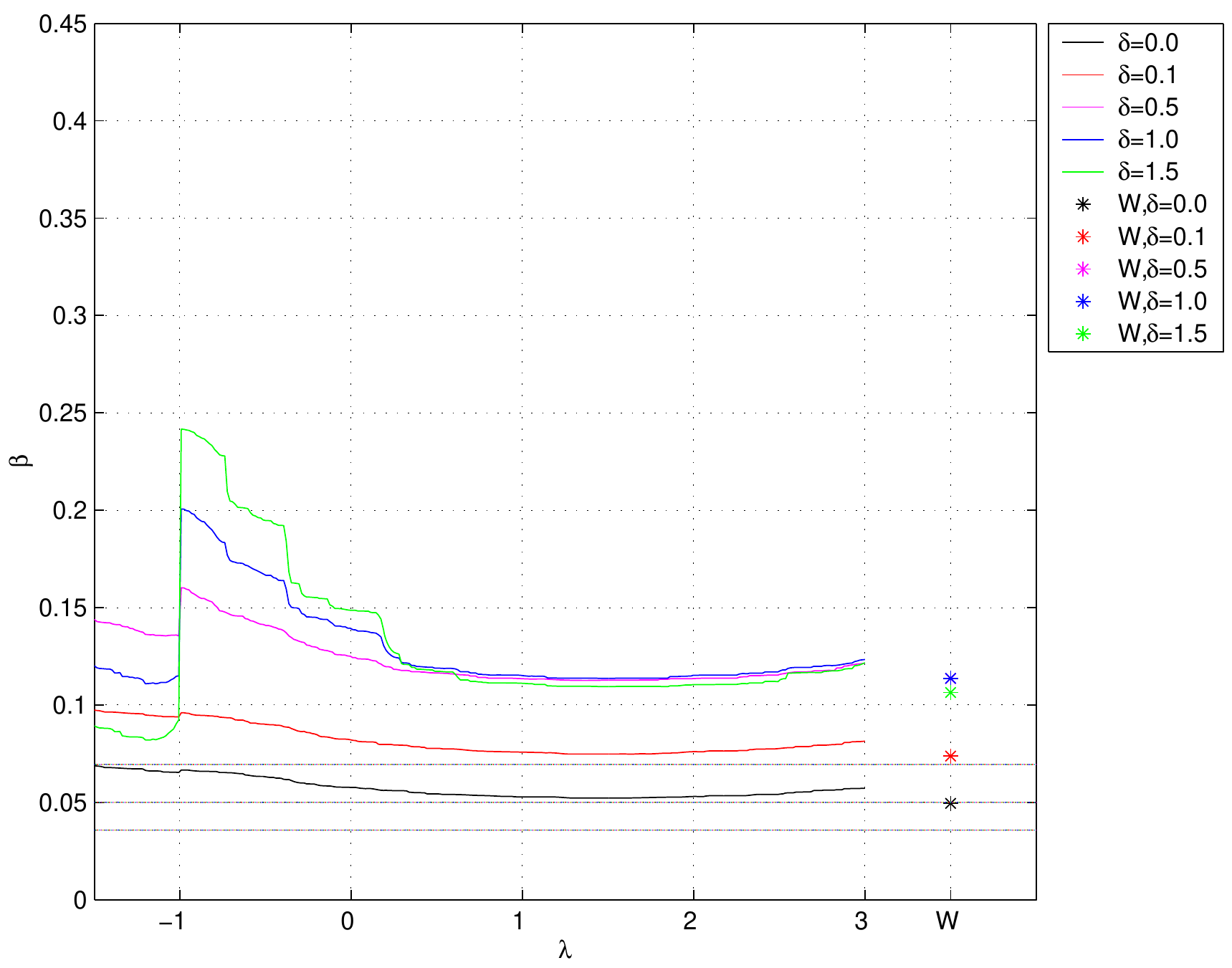}%
}
&
{\includegraphics[
height=2.4561in,
width=3.1202in
]%
{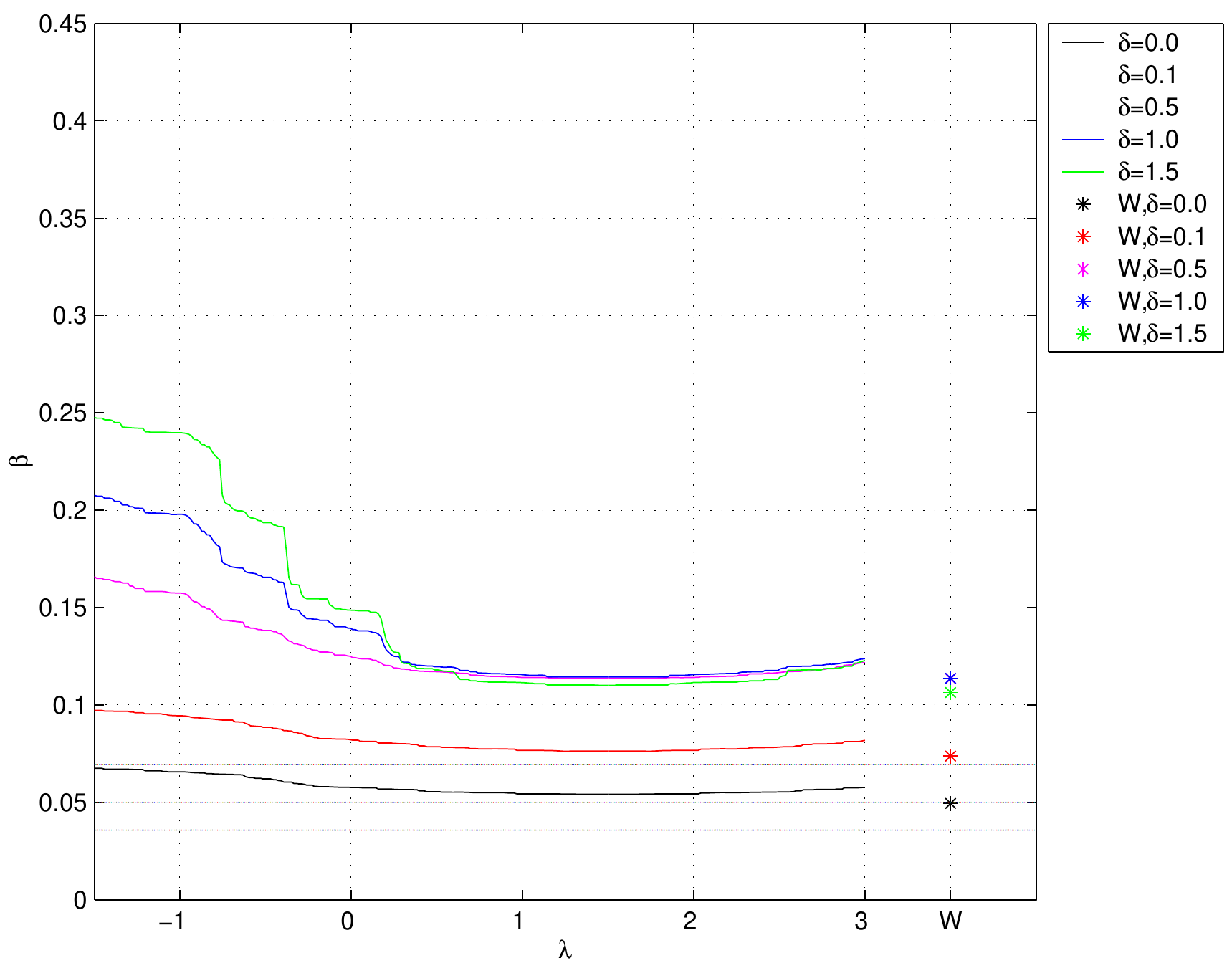}%
}
\\%
{\includegraphics[
height=2.4561in,
width=3.0701in
]%
{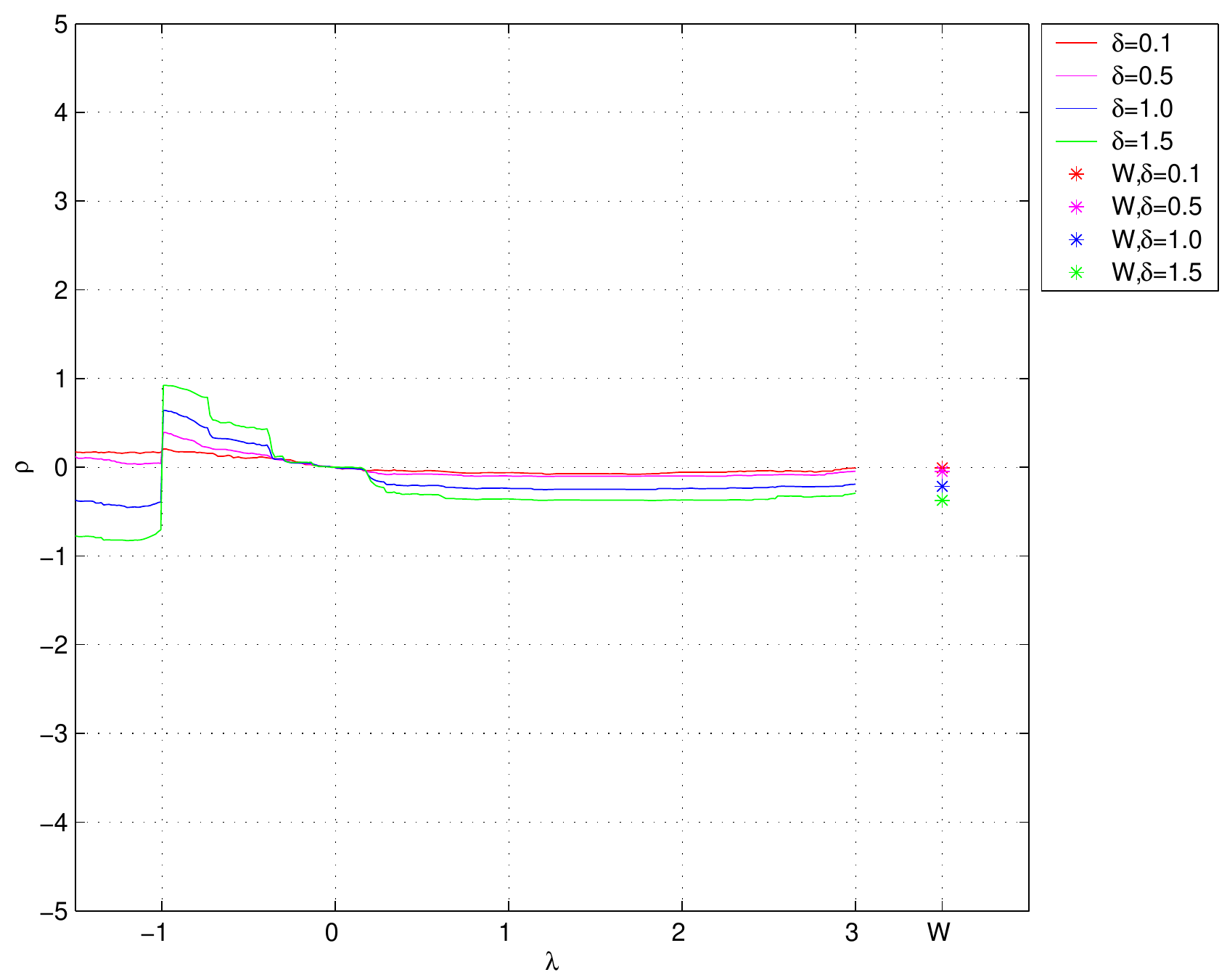}%
}
&
{\includegraphics[
height=2.4561in,
width=3.0701in
]%
{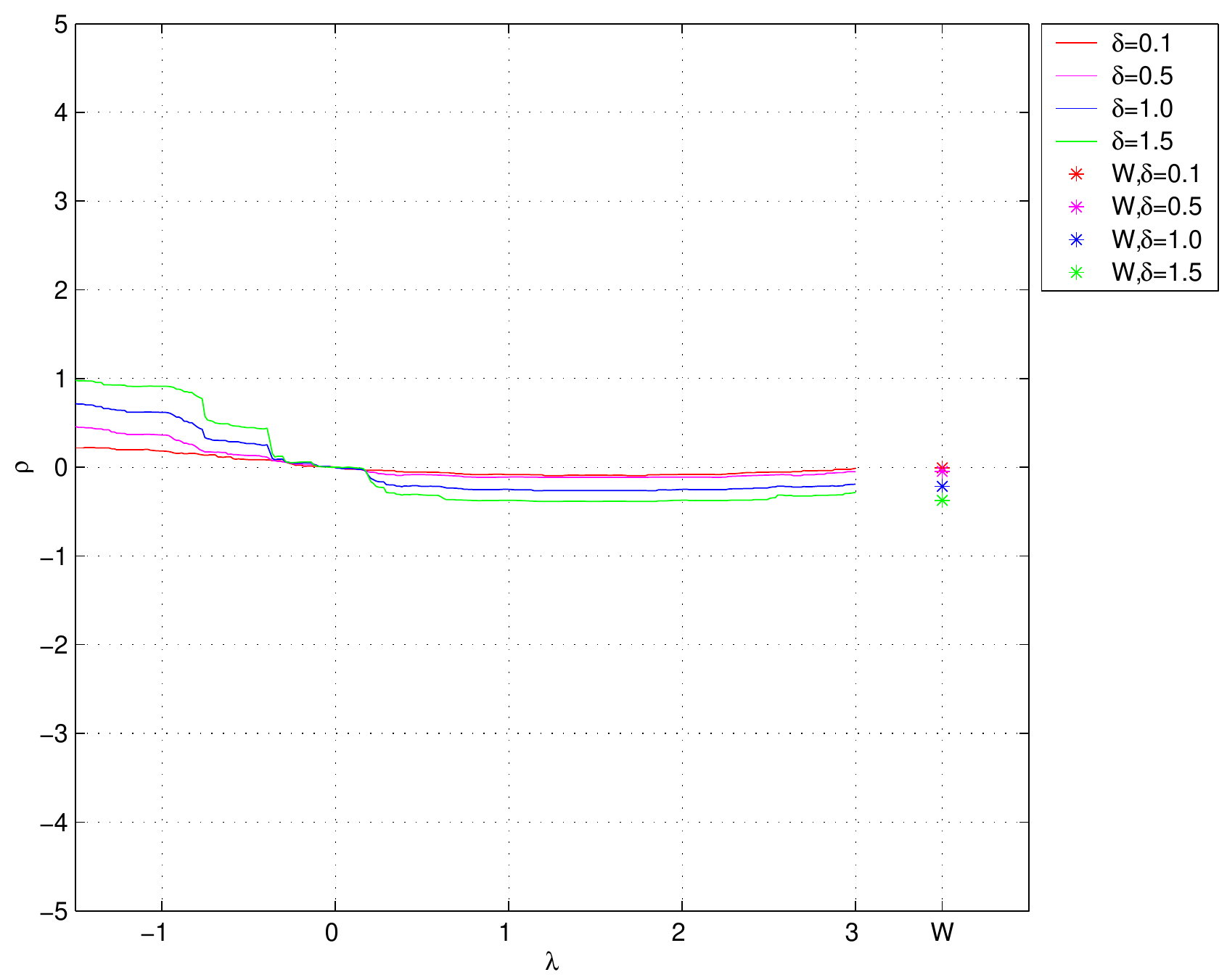}%
}
\\%
{\includegraphics[
height=2.4561in,
width=3.0701in
]%
{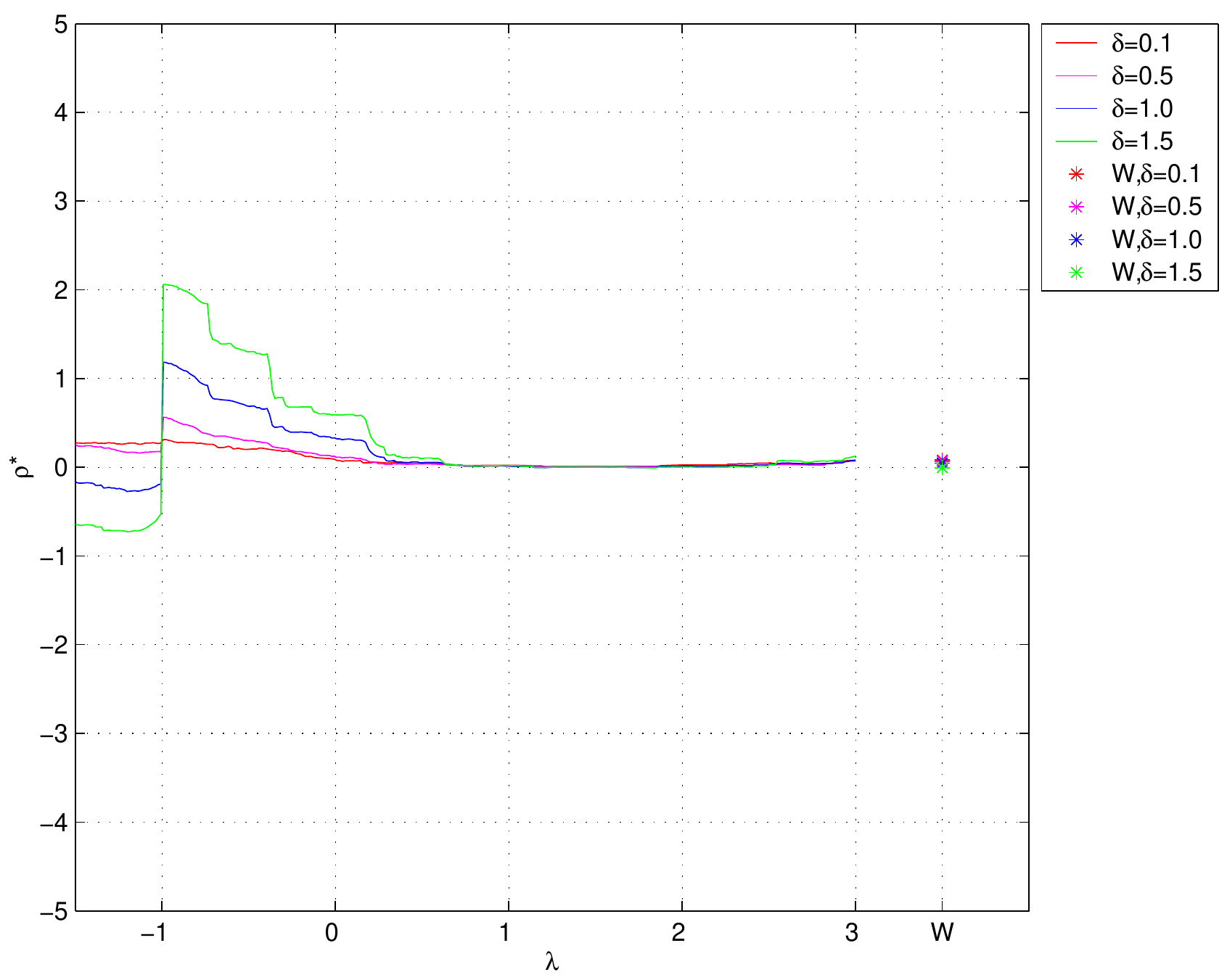}%
}
&
{\includegraphics[
height=2.4561in,
width=3.0701in
]%
{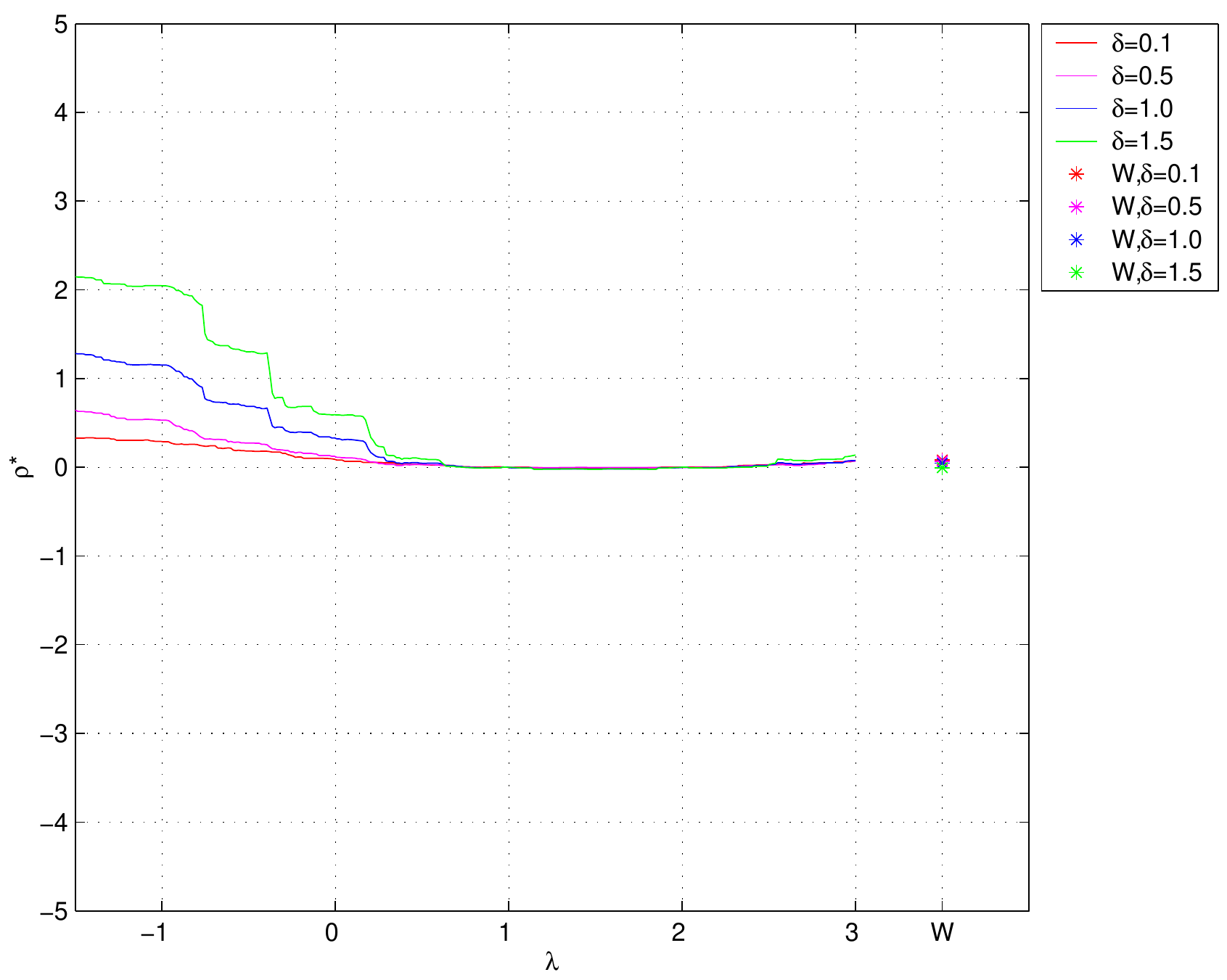}%
}
\end{tabular}
\caption{Power and relative local efficiencies for $T_{\lambda}$, $S_{\lambda}$ and $W$ in scenario D. \label{fig4}}%
\end{figure}%
%

\begin{figure}[htbp]  \tabcolsep2.8pt  \centering
\begin{tabular}
[c]{cc}%
${T_{\lambda}}$ & ${S_{\lambda}}$\\%
{\includegraphics[
height=2.463in,
width=3.3667in
]%
{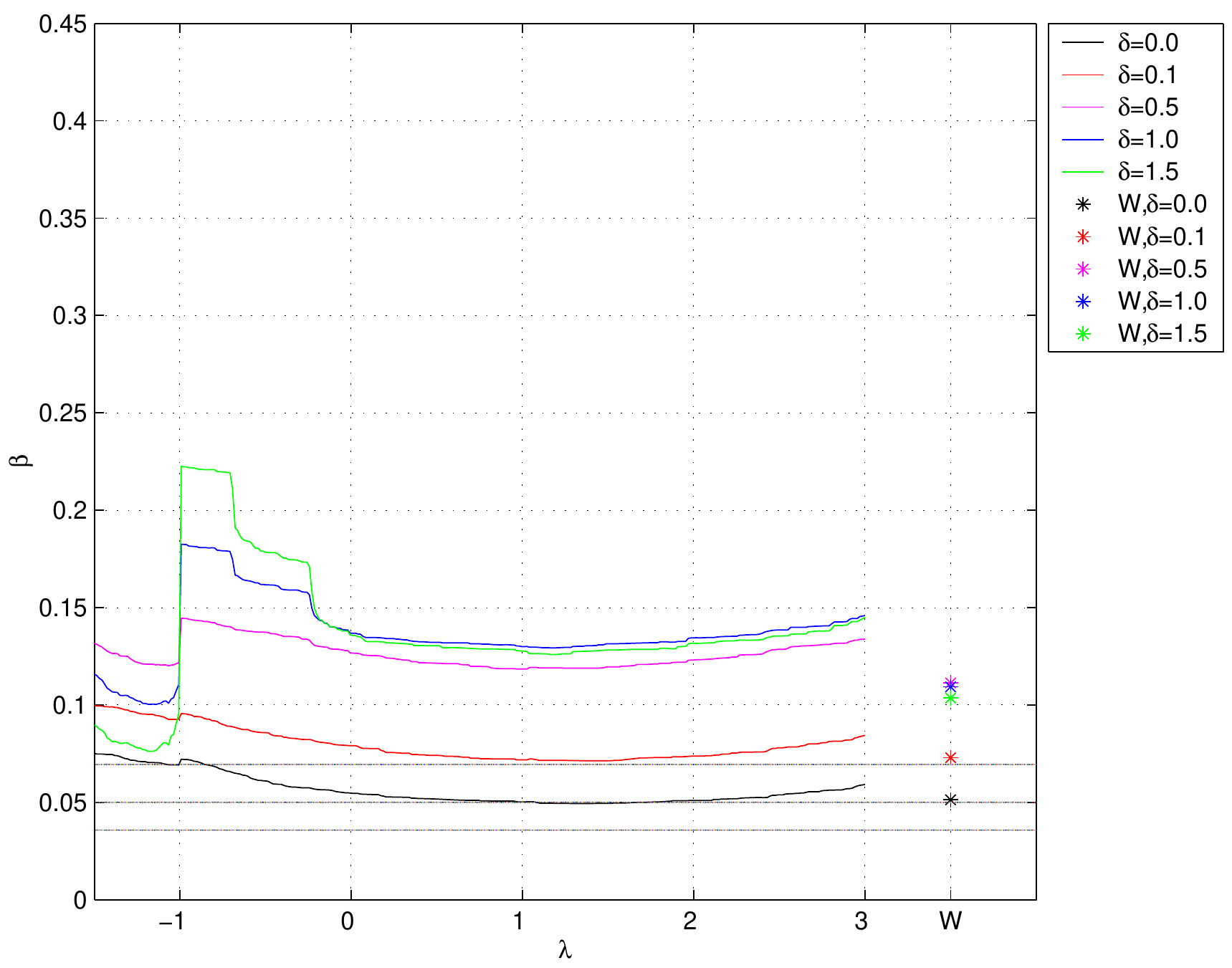}%
}
&
{\includegraphics[
height=2.463in,
width=3.3667in
]%
{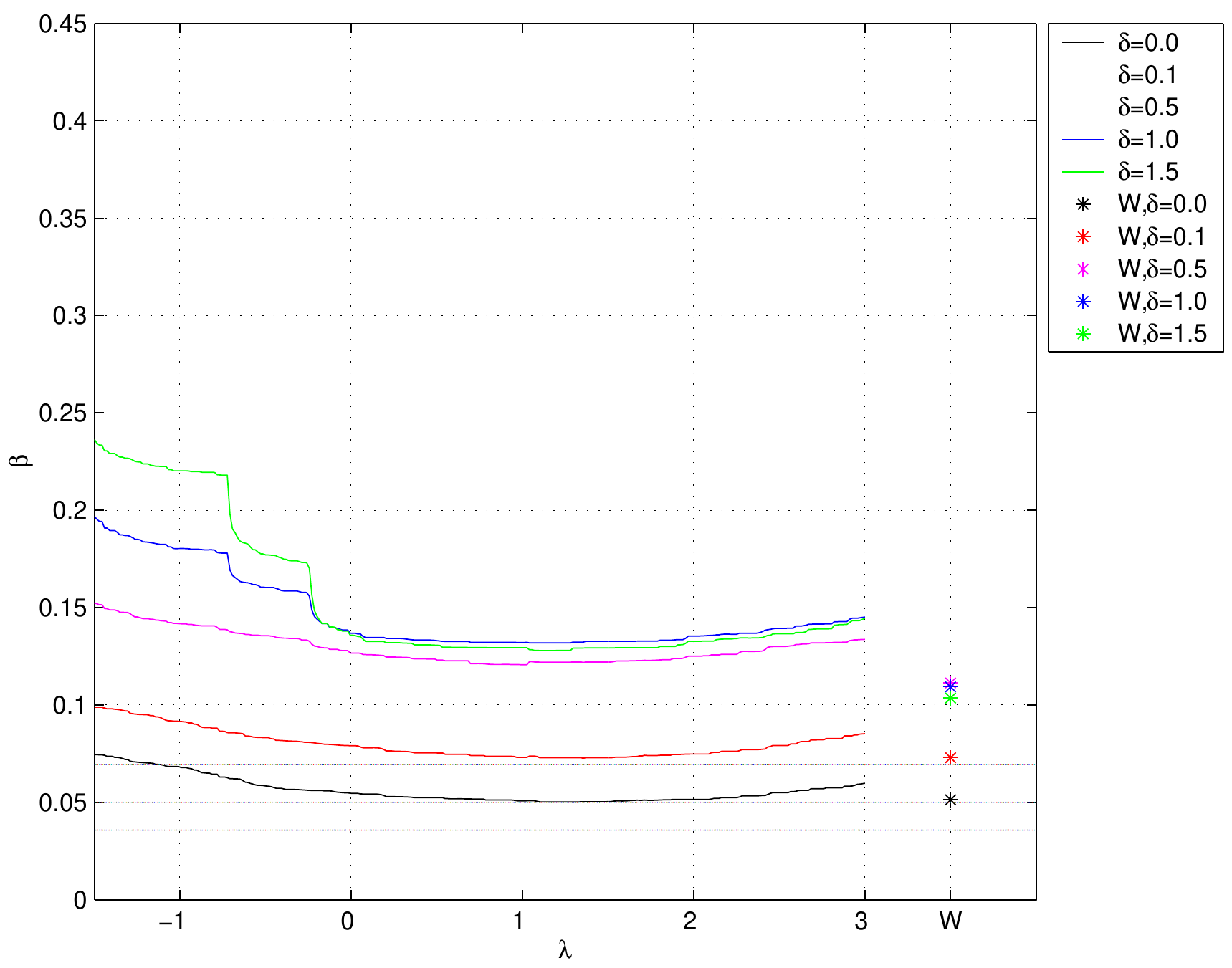}%
}
\\%
{\includegraphics[
height=2.463in,
width=3.3667in
]%
{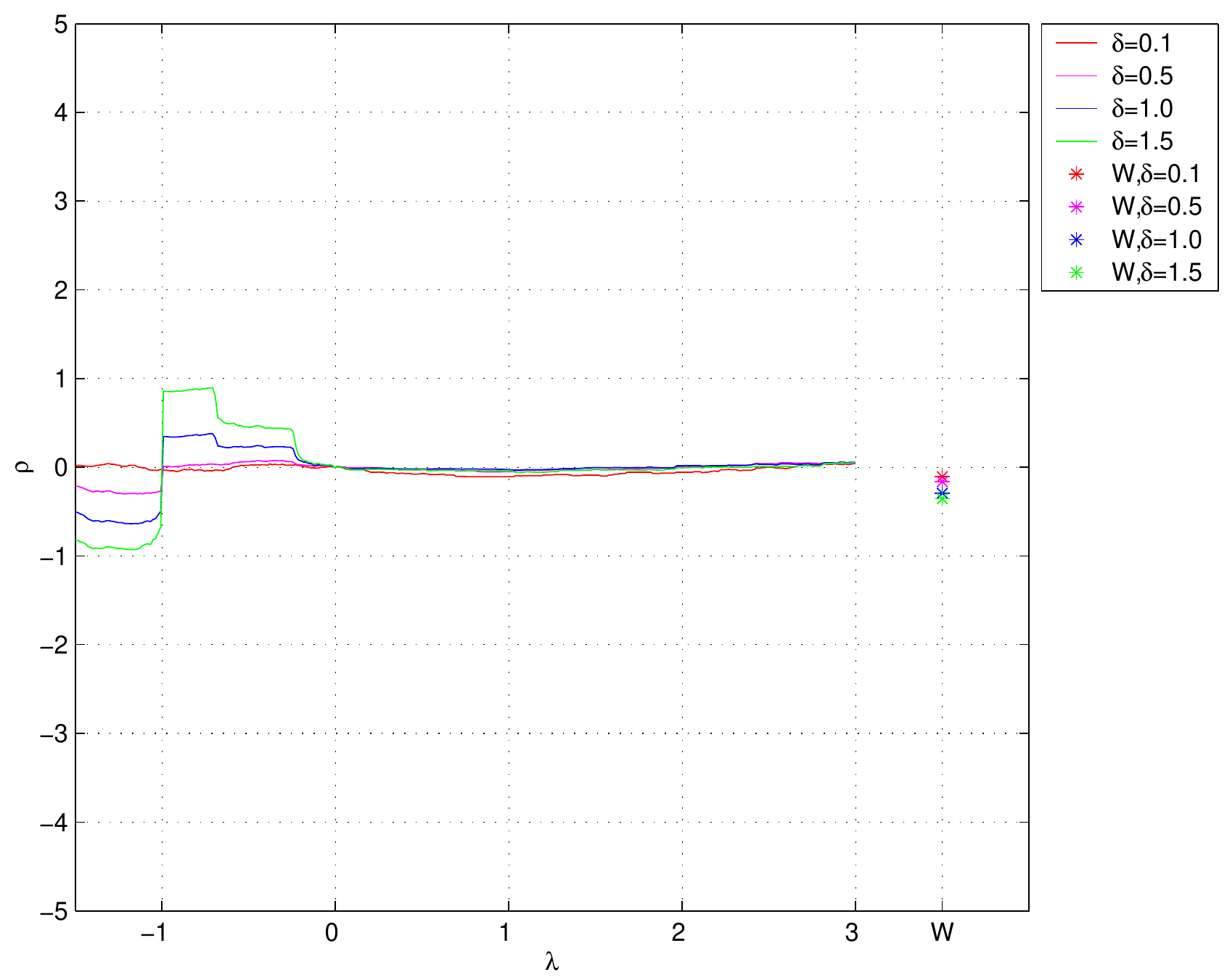}%
}
&
{\includegraphics[
height=2.463in,
width=3.3667in
]%
{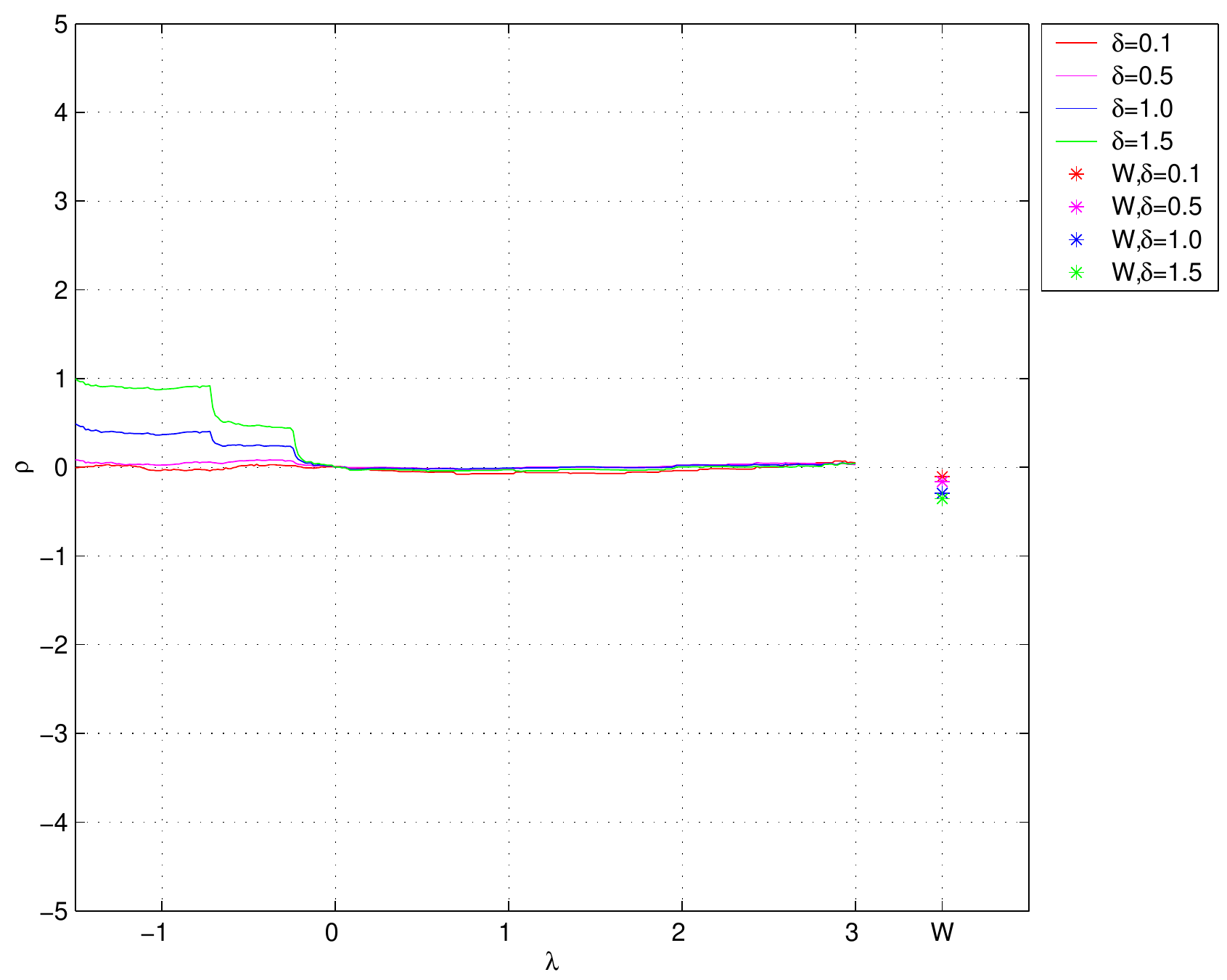}%
}
\\%
{\includegraphics[
height=2.463in,
width=3.3667in
]%
{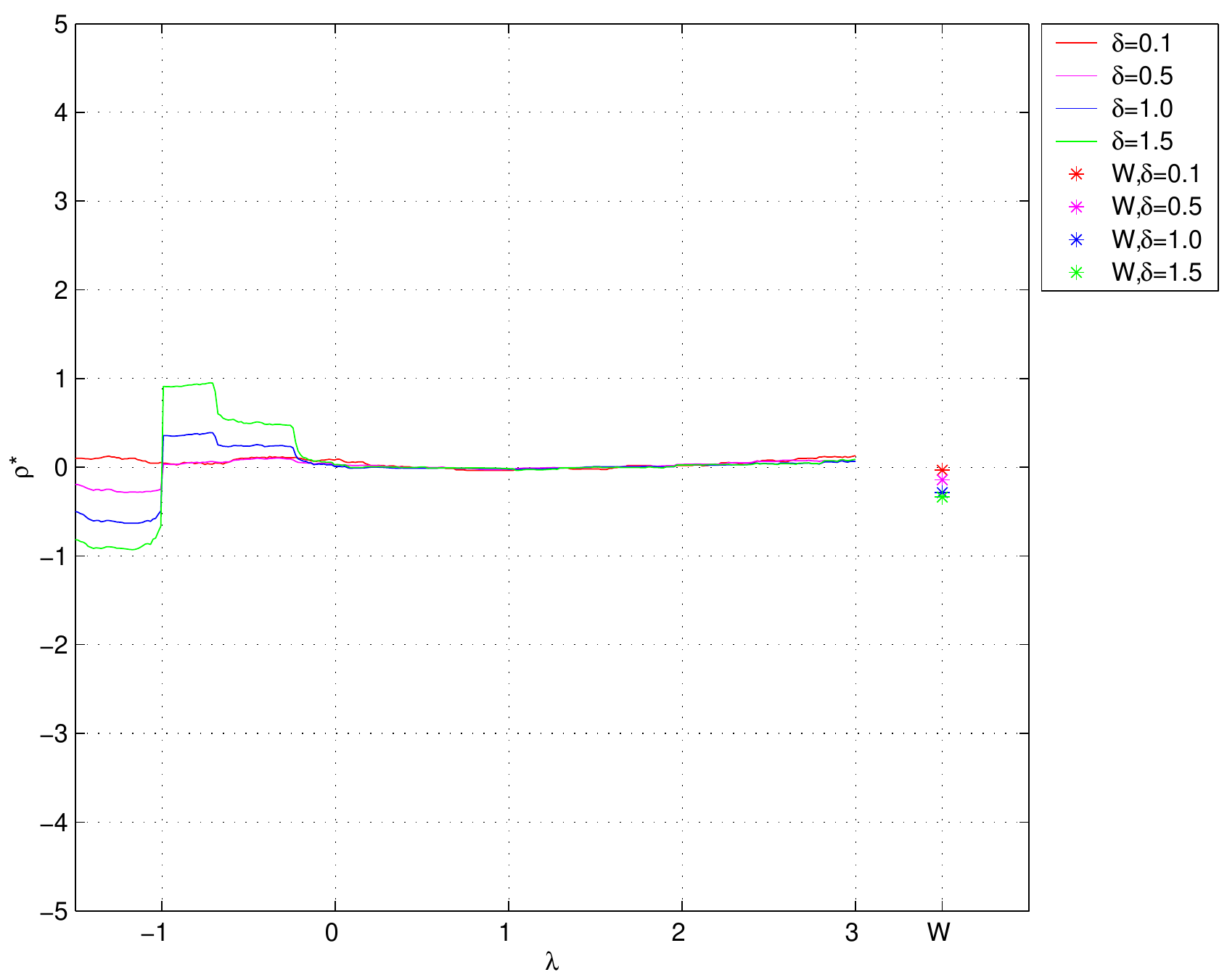}%
}
&
{\includegraphics[
height=2.463in,
width=3.3667in
]%
{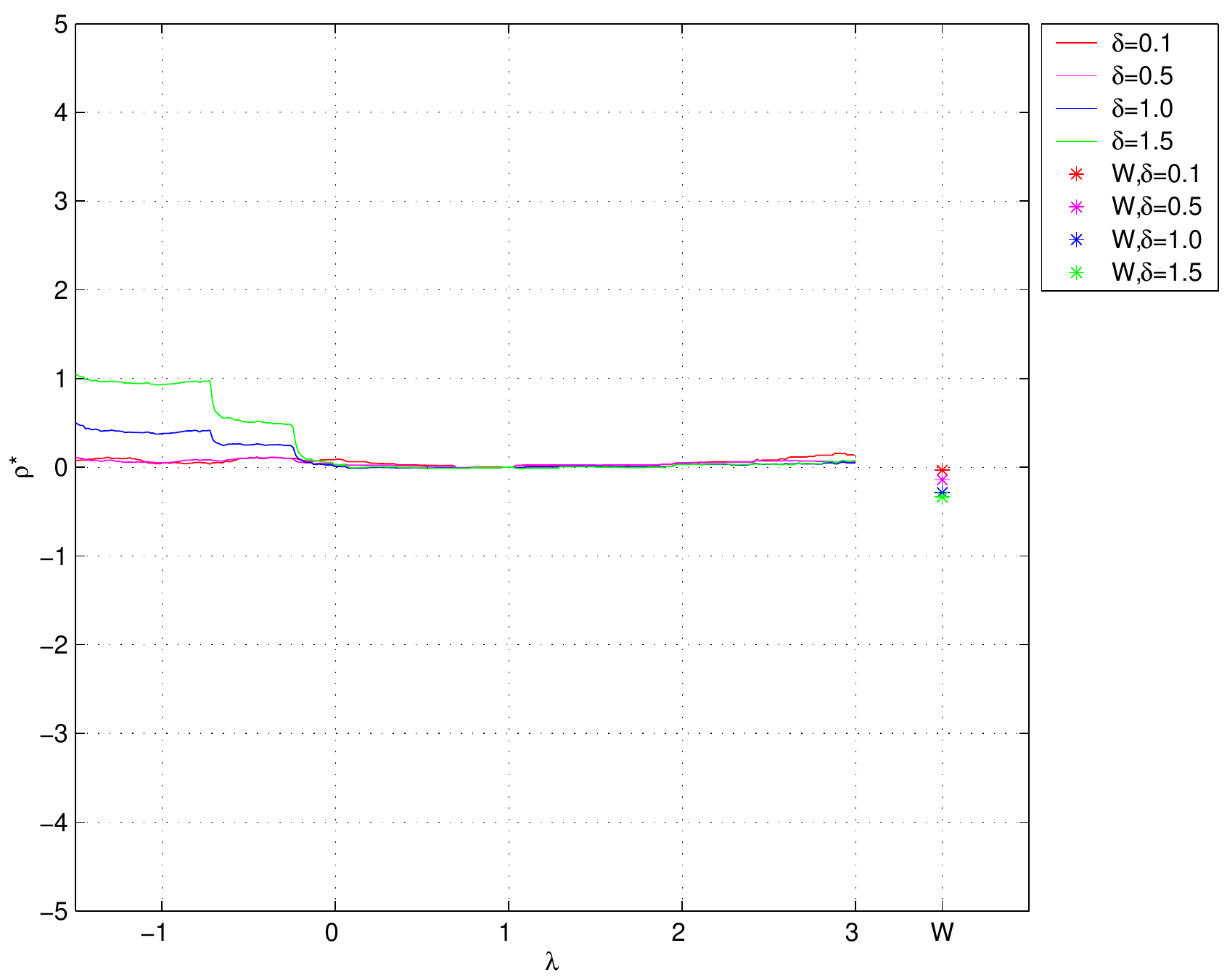}%
}
\end{tabular}
\caption{Power and relative local efficiencies for $T_{\lambda}$, $S_{\lambda}$ and $W$ in scenario E. \label{fig5}}%
\end{figure}%
%

\begin{figure}[htbp]  \tabcolsep2.8pt  \centering
\begin{tabular}
[c]{cc}%
${T_{\lambda}}$ & ${S_{\lambda}}$\\%
{\includegraphics[
height=2.463in,
width=3.3667in
]%
{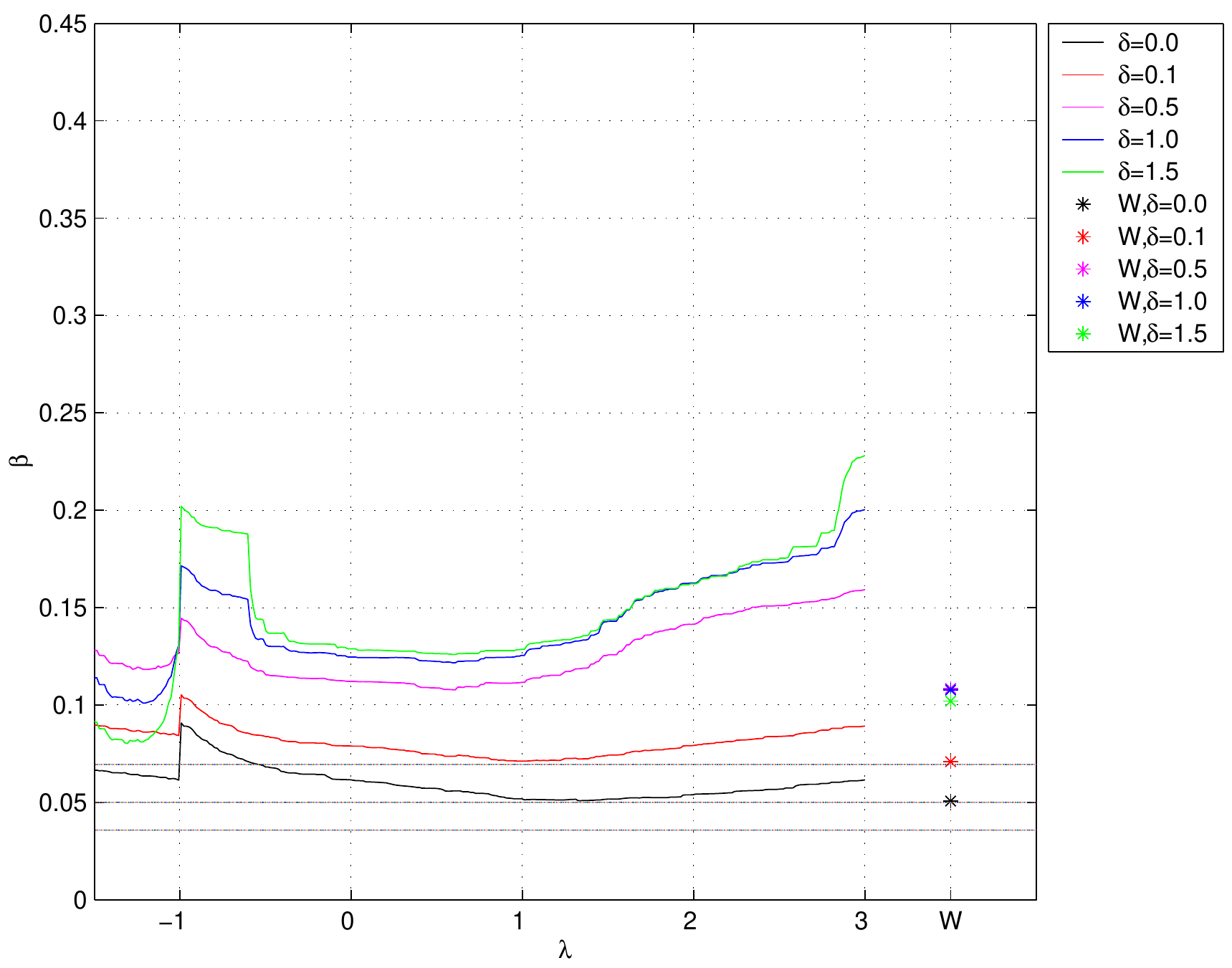}%
}
&
{\includegraphics[
height=2.463in,
width=3.3667in
]%
{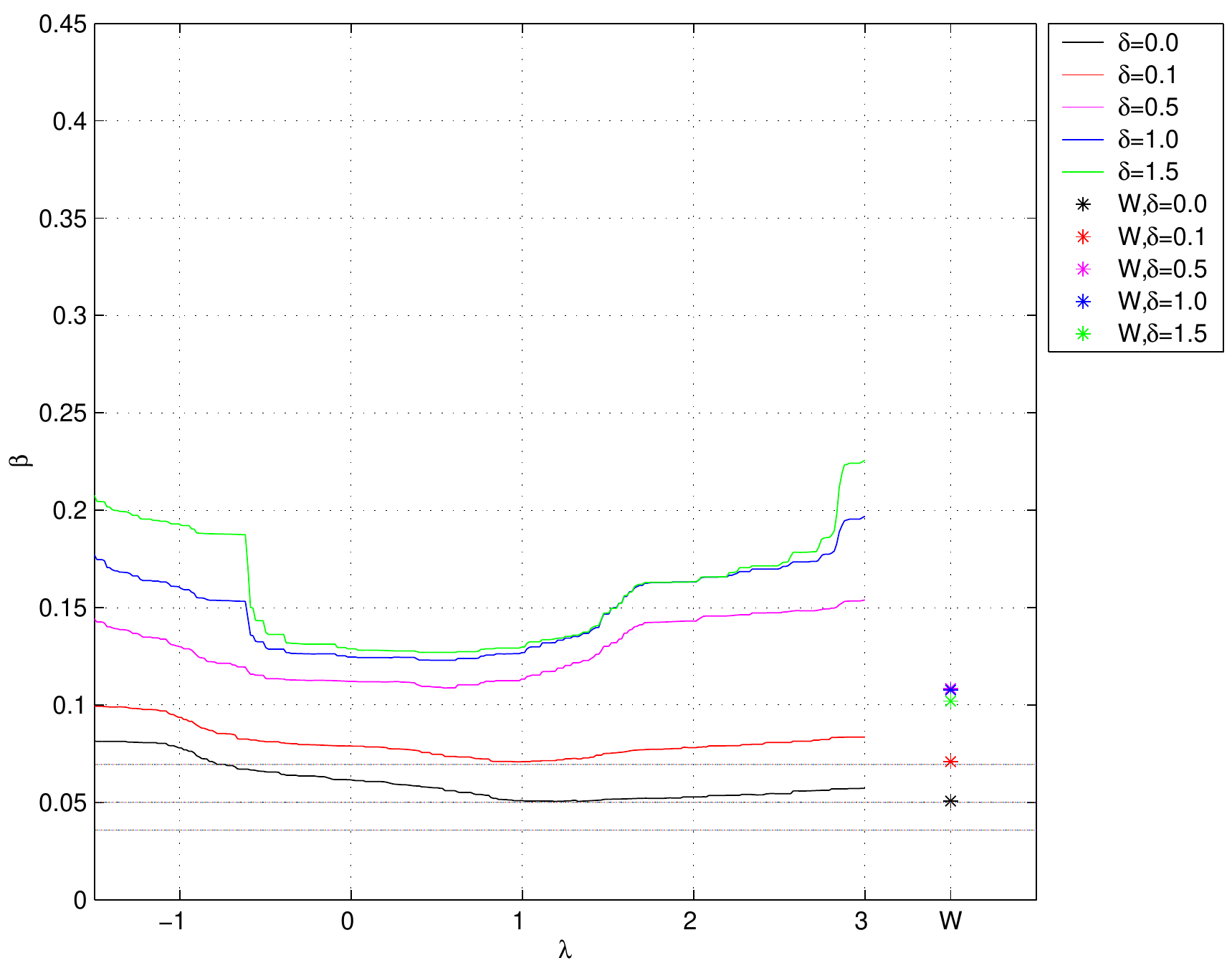}%
}
\\%
{\includegraphics[
height=2.463in,
width=3.3667in
]%
{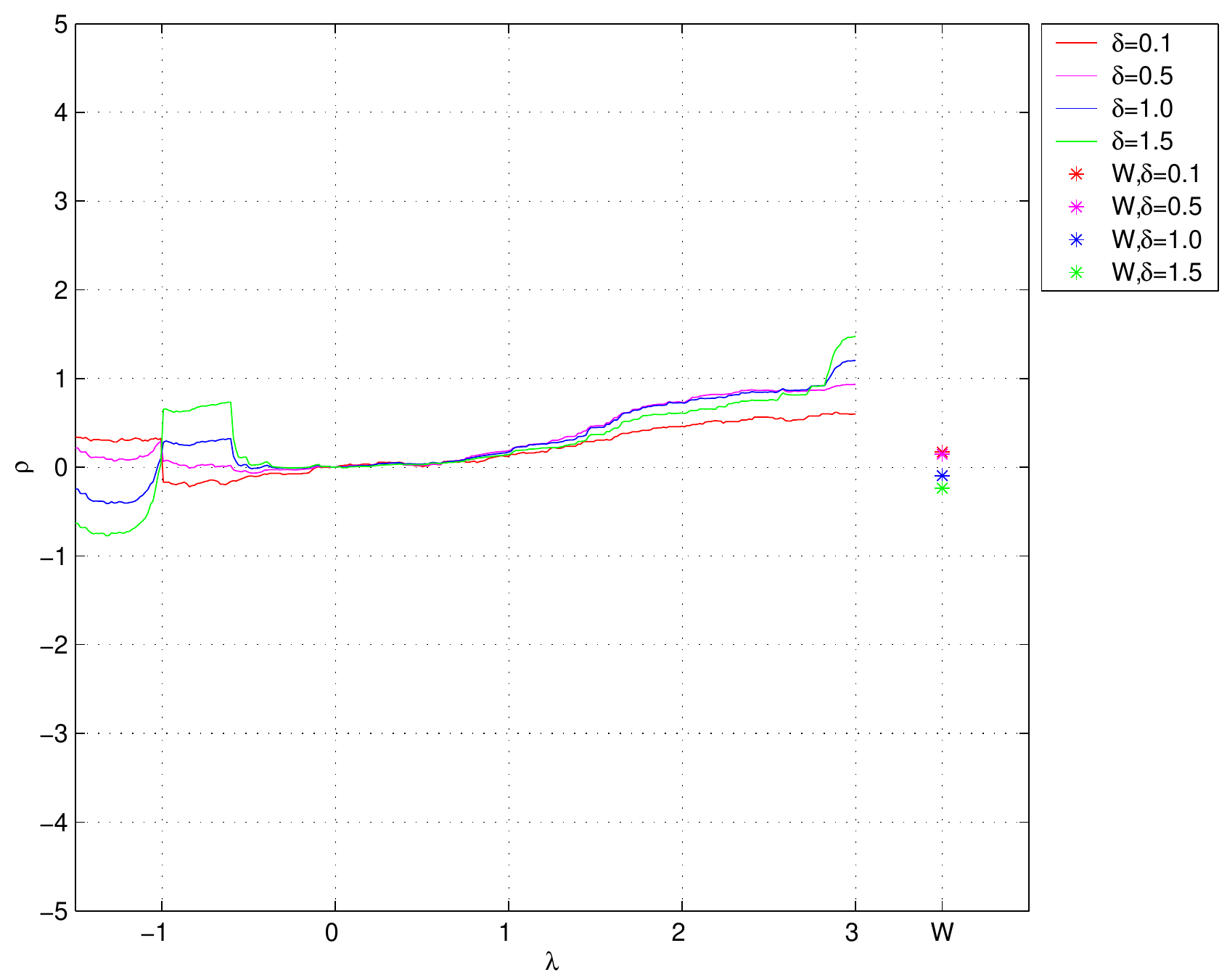}%
}
&
{\includegraphics[
height=2.463in,
width=3.3667in
]%
{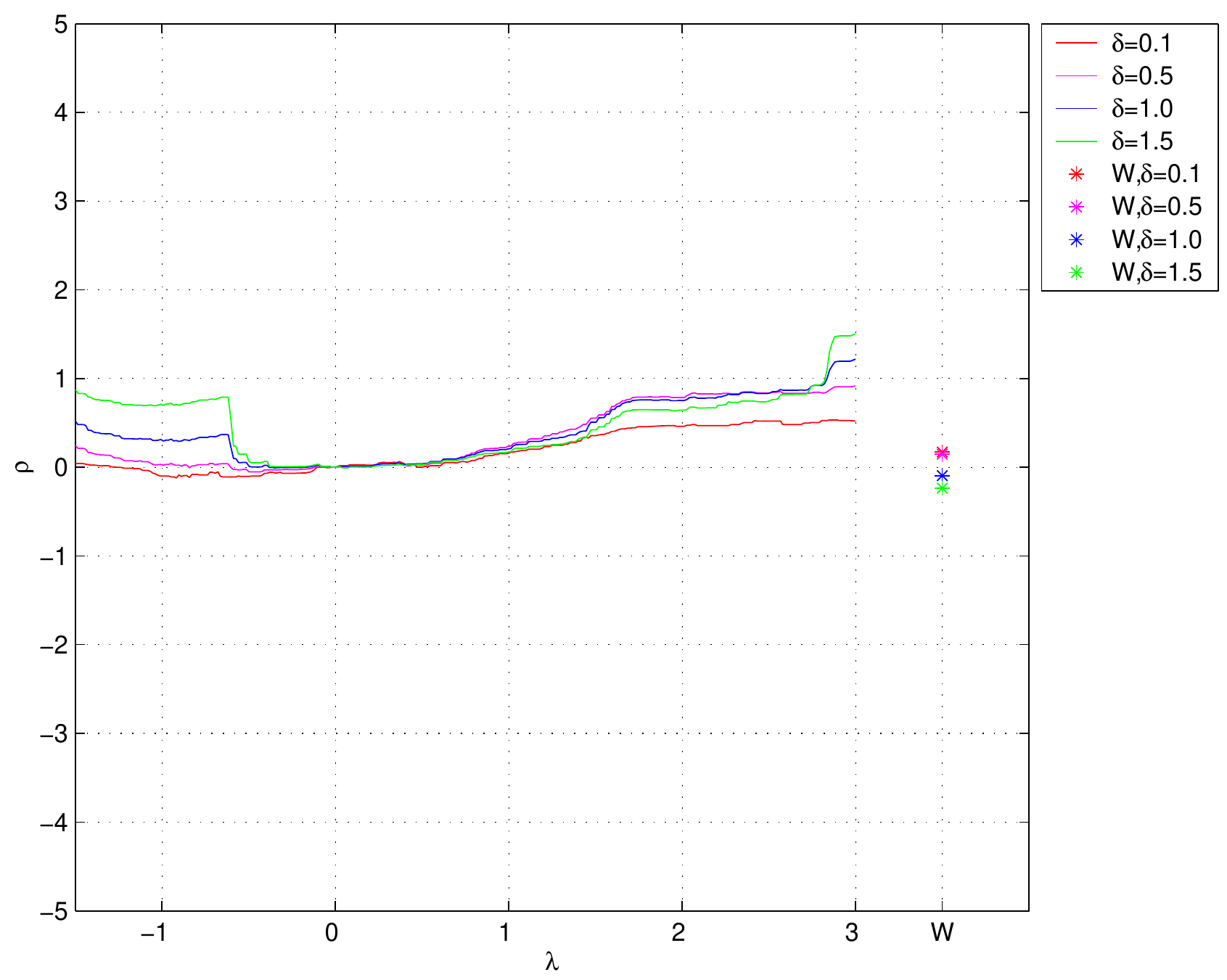}%
}
\\%
{\includegraphics[
height=2.463in,
width=3.3667in
]%
{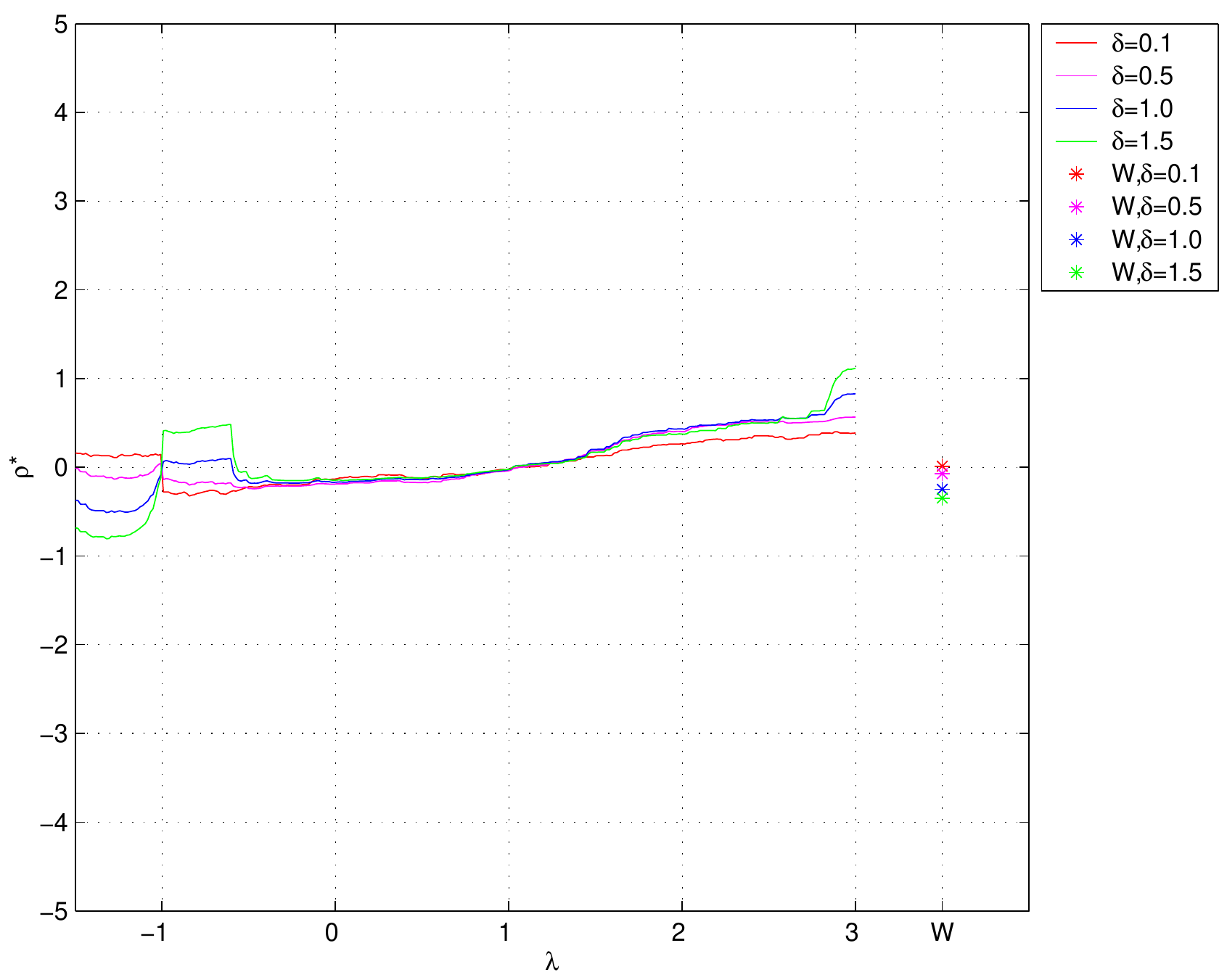}%
}
&
{\includegraphics[
height=2.463in,
width=3.3667in
]%
{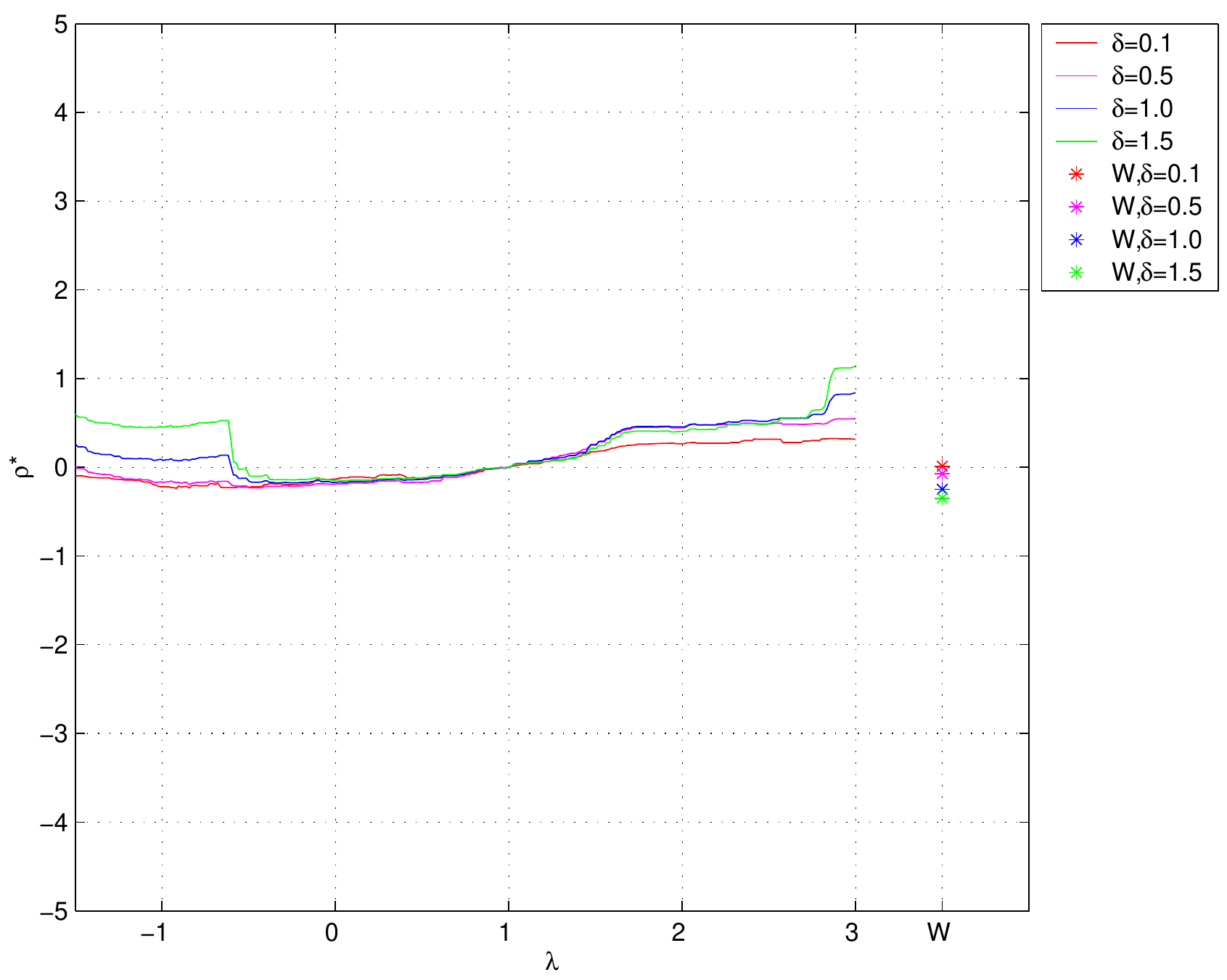}%
}
\end{tabular}
\caption{Power and relative local efficiencies for $T_{\lambda}$, $S_{\lambda}$ and $W$ in scenario F. \label{fig6}}%
\end{figure}%
%

\begin{figure}[htbp]  \tabcolsep2.8pt  \centering
\begin{tabular}
[c]{cc}%
${T_{\lambda}}$ & ${S_{\lambda}}$\\%
{\includegraphics[
height=2.4561in,
width=3.1202in
]%
{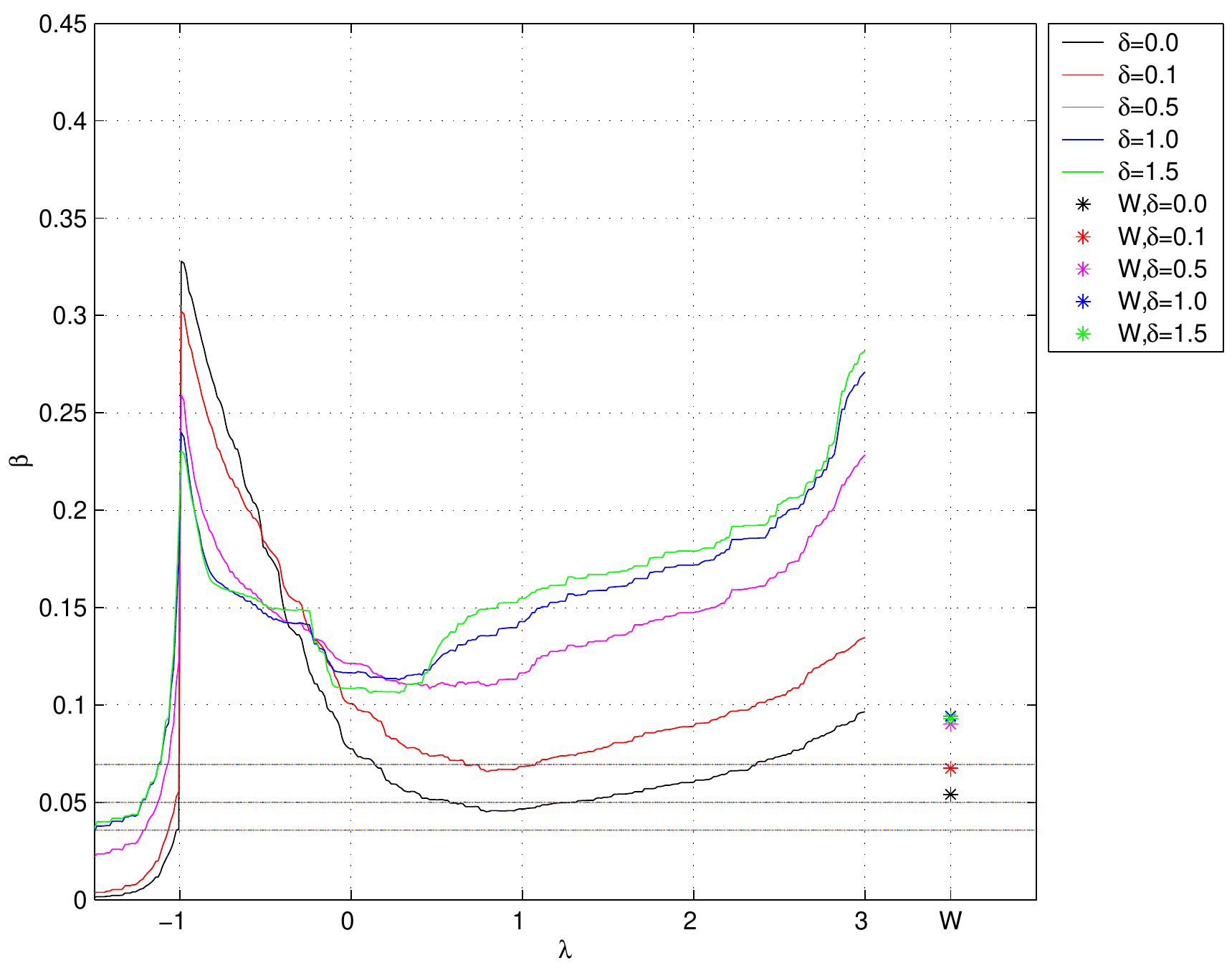}%
}
&
{\includegraphics[
height=2.4552in,
width=3.1211in
]%
{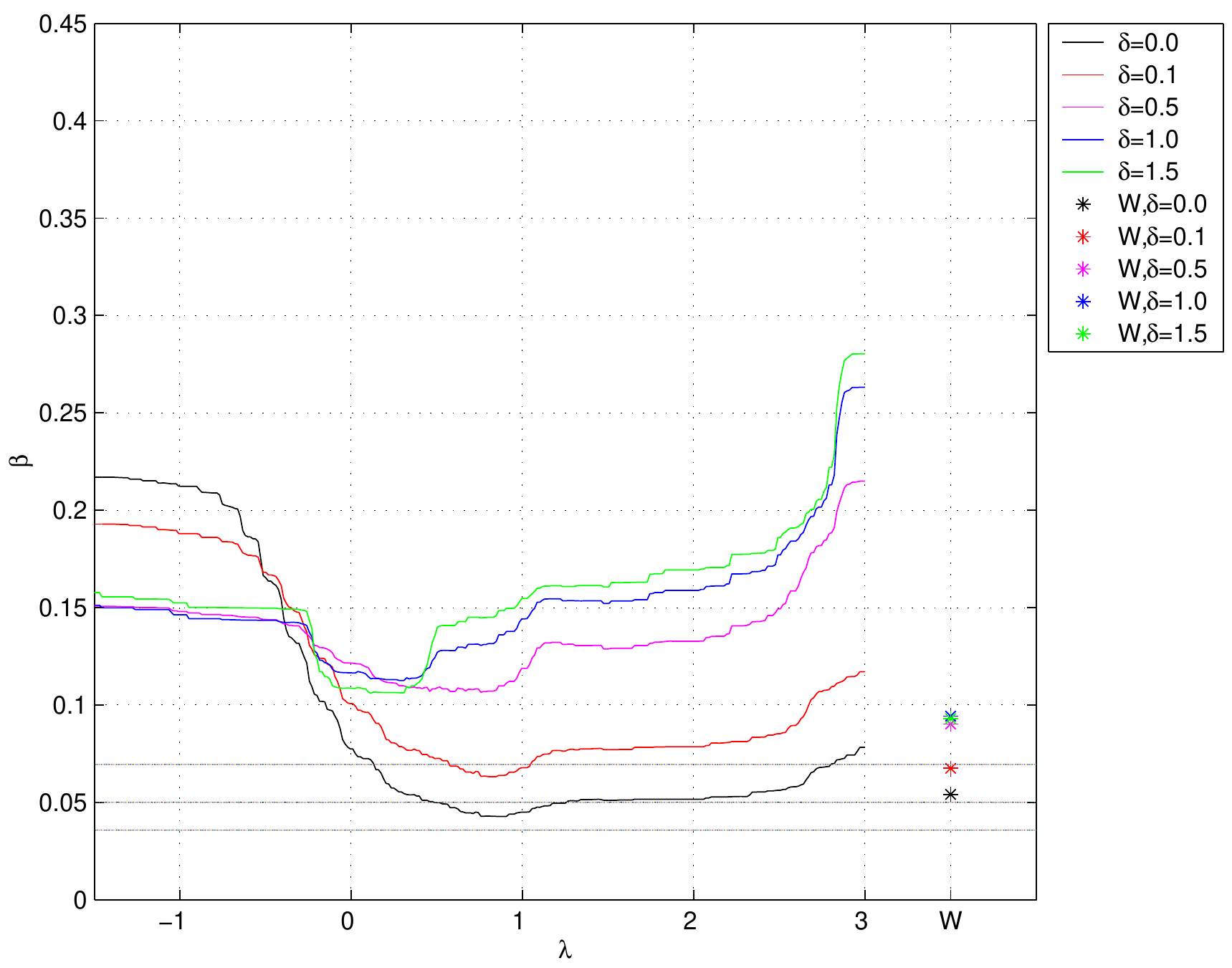}%
}
\\%
{\includegraphics[
height=2.4561in,
width=3.0701in
]%
{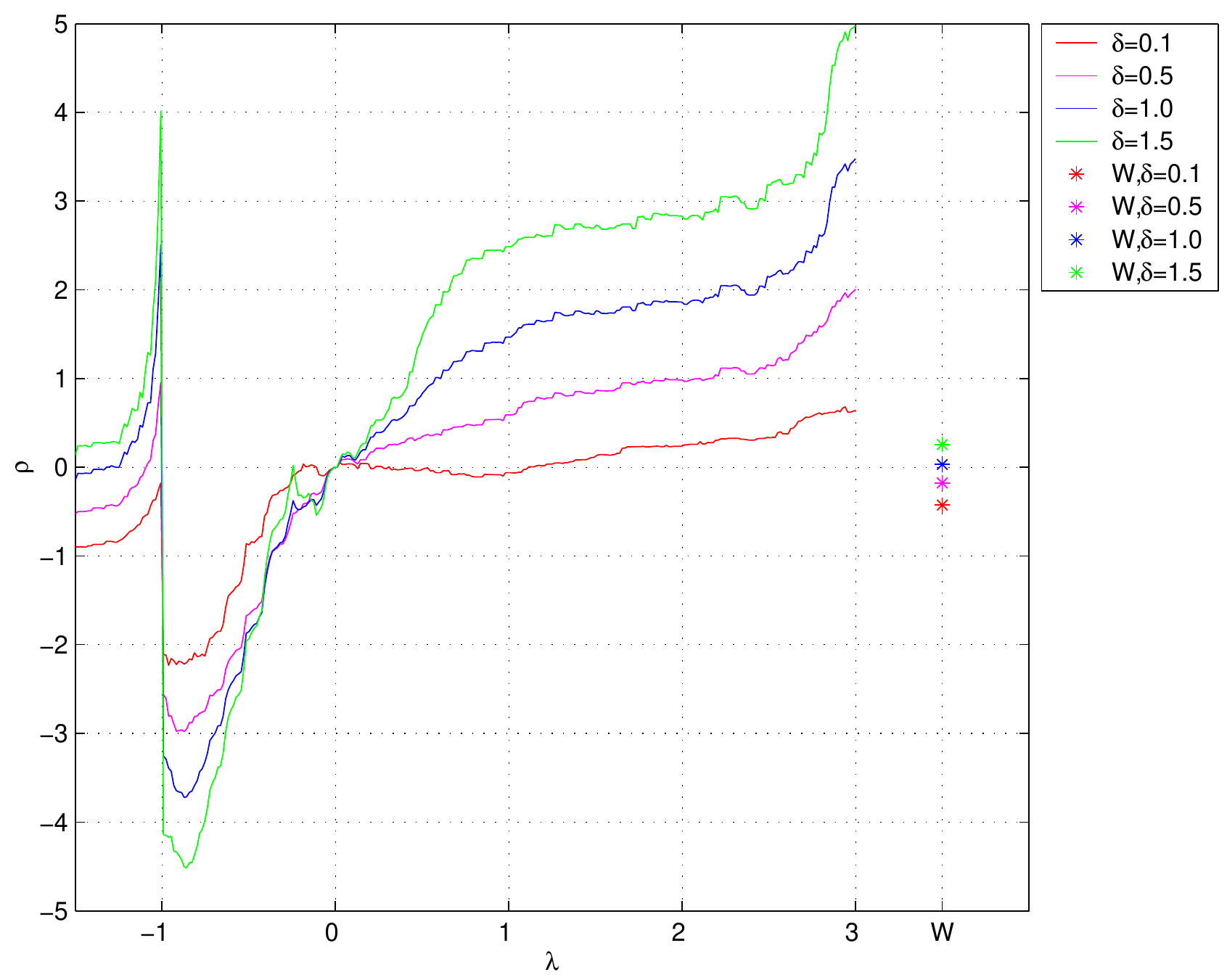}%
}
&
{\includegraphics[
height=2.4561in,
width=3.0701in
]%
{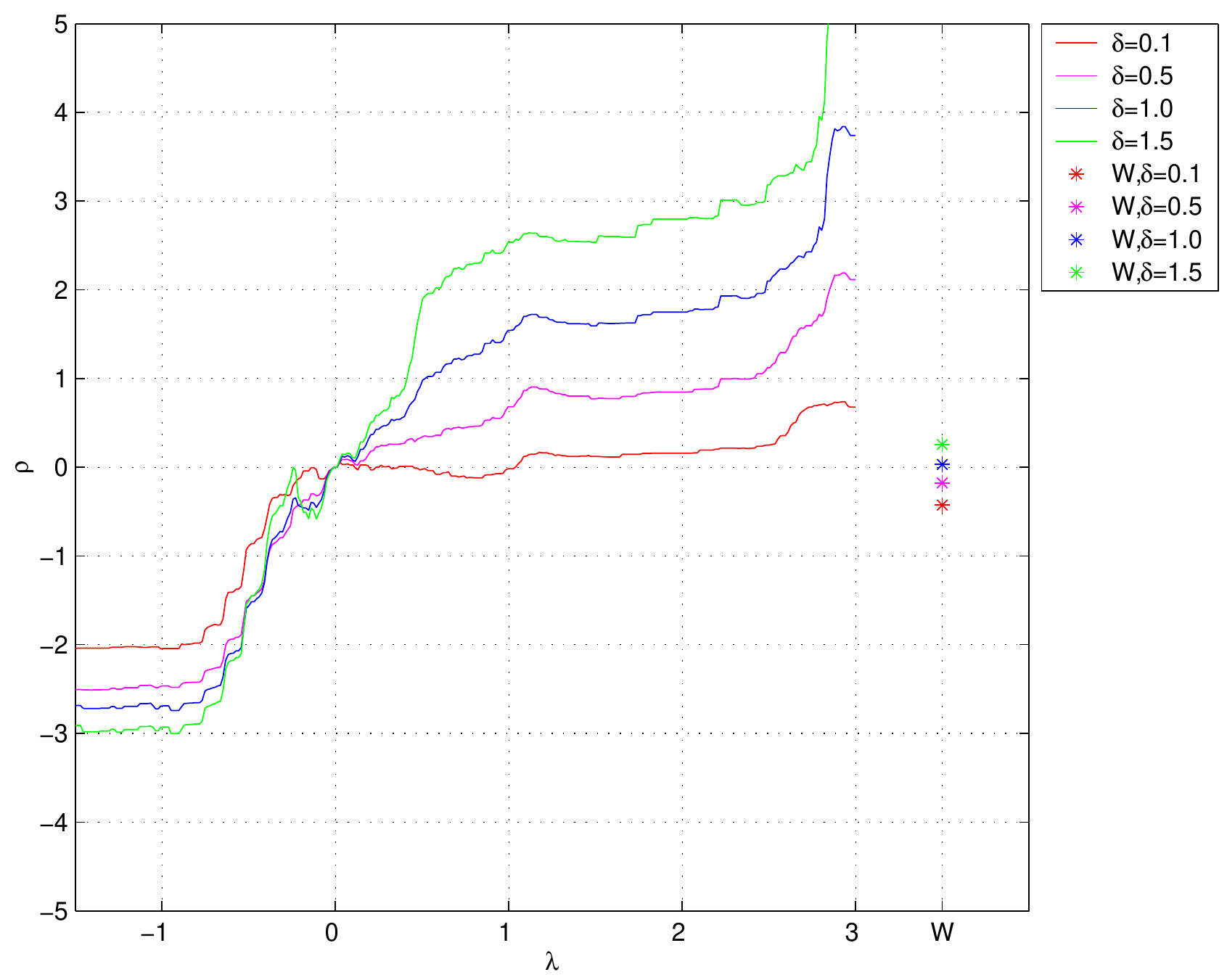}%
}
\\%
{\includegraphics[
height=2.4552in,
width=3.0701in
]%
{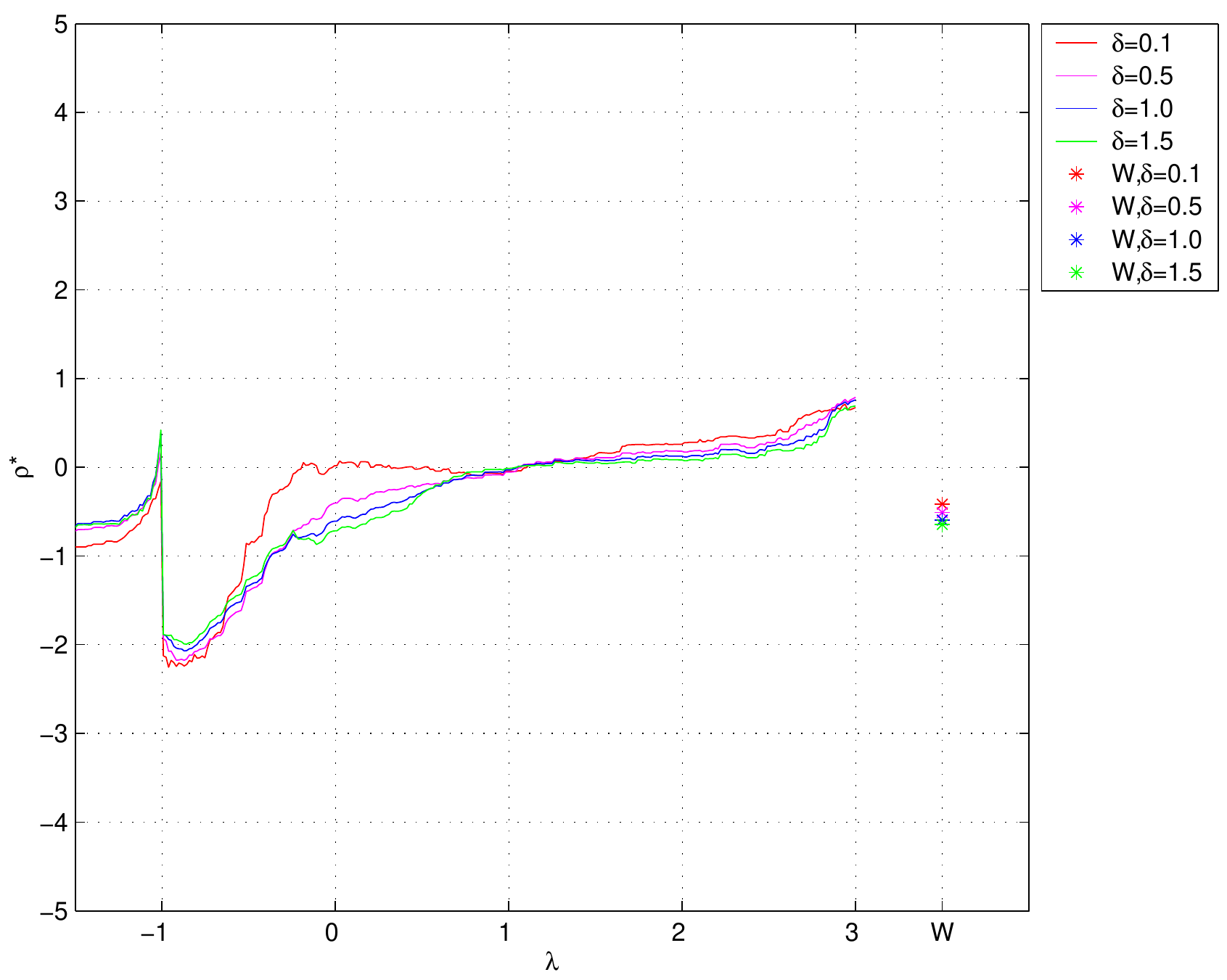}%
}
&
{\includegraphics[
height=2.4561in,
width=3.0701in
]%
{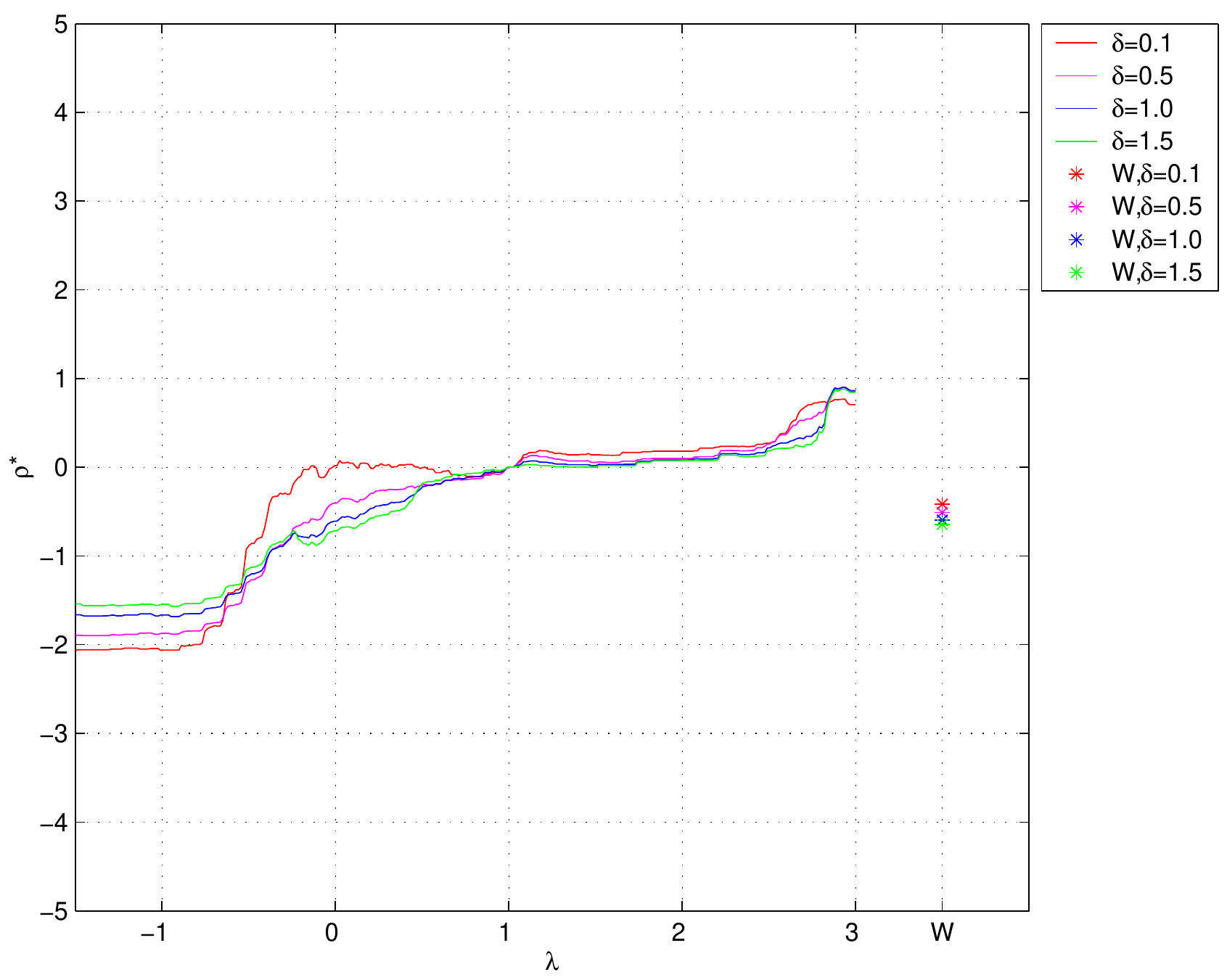}%
}
\end{tabular}
\caption{Power and relative local efficiencies for $T_{\lambda}$, $S_{\lambda}$ and $W$ in scenario G. \label{fig7}}%
\end{figure}%

\pagebreak

The plots are interpreted as follows:\medskip\newline\textbf{a)} In all the
scenarios a similar pattern is observed when plotting the exact power,
${\widehat{\beta}}_{T}$, for $\lambda\in\lbrack-1,3]$ since a U shaped curve
is obtained. This means that the exact power is higher in the corners of the
interval in comparison with the classical likelihood ratio test ($G^{2}=T_{0}%
$) as well as the classical Pearson test statistic ($X^{2}=S_{1}$), contained
in the middle.\medskip\newline\textbf{b)} If we pay attention on the local
efficiencies with respect to $G^{2}$ and $X^{2}$, $\widehat{\rho}_{T}$ and
$\widehat{\rho}_{T}^{\ast}$, to find positive values of them we need to
consider $\lambda\in\lbrack-1,0)$ or $\lambda\in(1,3]$ and thus it confirms
what was said in a). On the other hand, comparing the left hand
($T={T_{\lambda}}$) side of $\widehat{\rho}_{T}$ with the right side
($T={S_{\lambda}}$) and doing the same for $\widehat{\rho}_{T}^{\ast}$, a
slightly higher values of the local efficiencies of ${S_{\lambda}}$ are seen
in comparison with ${T_{\lambda}}$. For this reason we consider that
${\{S_{\lambda}\}}_{\lambda\in\lbrack-1,0)}$ have a better performance than
the classical test-statistics, $G^{2}$ and $X^{2}$ in scenarios B-E and
${\{S_{\lambda}\}}_{\lambda\in(1,3]}$ have a better performance than the
classical test-statistics, $G^{2}$ and $X^{2}$ in scenarios F-G. The Wilcoxon
test-statistic has in all the scenarios worse performance with respect to the
best classical asymptotic statistic, $G^{2}$ for scenarios B-E and $X^{2}$ for
scenarios F-G.\medskip\newline\textbf{c)} What is not so common in comparison
with usual models of categorical data is to find small size sample sizes with
so good performance in exact size as it happens in the case of the likelihood
ratio order. Moreover, the best test-statistic are not very common to be
selected as those with better performance than the classical ones.\newpage

\section{Concluding remark}

The likelihood ratio ordering is a useful technique for comparing treatments
in clinical trials, for this reason it is vitally important to provide
test-statistics to improve the classical ones. Having considered an asymptotic
distribution for two order restricted treatments, the weights needed to manage
the associated asymptotic chi-bar distribution are calculated in a simple way
and the useful matrix for that, $\boldsymbol{H}(\widehat{\boldsymbol{\theta}%
})$, has an easy interpretation in terms of log-linear modeling. The
simulation study highlights the good performance of the all the proposed tests
in relation to the exact size and the comparison is made in terms of the
power. For small and moderate sample sizes there are better choices than the
likelihood ratio test and the Wilcoxon test-statistics inside the family of
$\phi$-divergences. We think that this is a specific characteristic of the
likelihood ordering, and this is the reason of having obtained as the best
test-statistics a set of values of $\lambda\in\lbrack-1,0)\cup(1,3]$ not very
common in the literature of phi-divergence test-statistics. As exception,
notice that%
\begin{align}
S_{-1/2}  &  =S_{d_{\phi_{-1/2}}}(\boldsymbol{p}(\widetilde{\boldsymbol{\theta
}}),\boldsymbol{p}(\widehat{\boldsymbol{\theta}}))=8n\left(  1-%
{\displaystyle\sum\limits_{i=1}^{2}}
{\displaystyle\sum\limits_{j=1}^{J}}
p_{ij}^{\frac{1}{2}}(\widetilde{\boldsymbol{\theta}})p_{ij}^{\frac{1}{2}%
}(\widehat{\boldsymbol{\theta}})\right) \label{hel}\\
&  =4n%
{\displaystyle\sum\limits_{i=1}^{2}}
{\displaystyle\sum\limits_{j=1}^{J}}
\left(  p_{ij}^{\frac{1}{2}}(\widetilde{\boldsymbol{\theta}})-p_{ij}^{\frac
{1}{2}}(\widehat{\boldsymbol{\theta}})\right)  ^{2}\nonumber\\
&  =4n\mathrm{Hel}^{2}(\boldsymbol{p}(\widetilde{\boldsymbol{\theta}%
}),\boldsymbol{p}(\widehat{\boldsymbol{\theta}})),\nonumber
\end{align}
where%
\[
\mathrm{Hel}(\boldsymbol{p}(\widetilde{\boldsymbol{\theta}}),\boldsymbol{p}%
(\widehat{\boldsymbol{\theta}}))=\left(
{\displaystyle\sum\limits_{i=1}^{2}}
{\displaystyle\sum\limits_{j=1}^{J}}
\left(  p_{ij}^{\frac{1}{2}}(\widetilde{\boldsymbol{\theta}})-p_{ij}^{\frac
{1}{2}}(\widehat{\boldsymbol{\theta}})\right)  ^{2}\right)  ^{\frac{1}{2}},
\]
is the Hellinger distance between the probability vectors $\boldsymbol{p}%
(\widetilde{\boldsymbol{\theta}})$\ and $\boldsymbol{p}%
(\widehat{\boldsymbol{\theta}})$. Therefore, one of the test-statistic we are
proposing in this paper is a function of the well-known Hellinger distance,
which has been used in many different statistical problems. We think that the
reason why this happens is related to the robust properties of such a
test-statistic, since when dealing with the likelihood ratio ordering, under
the alternative hypothesis, on the left side of the contingency table empty
cells tend to appear. In particular, the theoretical probability in the first
cell for the second treatment, $\pi_{21}$, is the smallest one and this
circumstance does influence in the results obtained for skew sample sample
sizes in both treatments.

\begin{acknowledgement}
The authors acknowledge the referee. We modified and improved the manuscript
according to comments and questions pointed by the referee.
\end{acknowledgement}

%

\appendix

\section{Appendix}

Suppose we are interested in testing $H_{0}$: $\boldsymbol{R}_{12}%
\boldsymbol{\theta}_{12}=\boldsymbol{0}_{J-1}$ vs\ $H_{1}:\quad\boldsymbol{R}%
_{12}(S)\boldsymbol{\theta}_{12}=\boldsymbol{0}_{\mathrm{card}(S)}$ and
$\boldsymbol{R}_{12}\boldsymbol{\theta}_{12}\neq\boldsymbol{0}_{J-1}$. With
the complete notation, our interest is,%
\begin{equation}
H_{0}:\quad\boldsymbol{R\theta}=\boldsymbol{0}_{J-1}\quad\text{vs}\quad
H_{1}:\quad\boldsymbol{R}(S)\boldsymbol{\theta}=\boldsymbol{0}_{\mathrm{card}%
(S)}\quad\text{and}\quad\boldsymbol{R\theta}\neq\boldsymbol{0}_{J-1}.
\label{TB}%
\end{equation}
Under $H_{0}$, the parameter space is $\Theta_{0}=\left\{  \boldsymbol{\theta
}\in%
\mathbb{R}
^{^{2(J-1)}}:\boldsymbol{R\theta}=\boldsymbol{0}_{J-1}\right\}  $ and the MLE
of $\boldsymbol{\theta}$ in $\Theta_{0}$ is given by
$\widehat{\boldsymbol{\theta}}=\arg\max_{\boldsymbol{\theta\in}\Theta_{0}}%
\ell(\boldsymbol{N};\boldsymbol{\theta})$. Under the alternative hypothesis
the parameter space is $\Theta(S)-\Theta_{0}$, where $\Theta(S)=\left\{
\boldsymbol{\theta}\in%
\mathbb{R}
^{^{2(J-1)}}:\boldsymbol{R}(S)\boldsymbol{\theta}=\boldsymbol{0}%
_{J-1}\right\}  $, that is, under both hypotheses, $H_{0}$\ and $H_{1}$, the
parameter space is $\Theta(S)=\left\{  \boldsymbol{\theta}\in%
\mathbb{R}
^{^{2(J-1)}}:\boldsymbol{R}(S)\boldsymbol{\theta}=\boldsymbol{0}%
_{J-1}\right\}  $ and the MLE of $\boldsymbol{\theta}$ in $\Theta(S)$ is
$\widehat{\boldsymbol{\theta}}(S)=\arg\max_{\boldsymbol{\theta\in}\Theta
(S)}\ell(\boldsymbol{N};\boldsymbol{\theta})$. By following the same idea we
used for building test-statistics (\ref{5a})-(\ref{5b}) we shall consider two
family of test-statistics based on $\phi$-divergence measures,%
\begin{equation}
T_{\phi}(\overline{\boldsymbol{p}},\boldsymbol{p}(\widehat{\boldsymbol{\theta
}}(S)),\boldsymbol{p}(\widehat{\boldsymbol{\theta}}))=2n(d_{\phi}%
(\overline{\boldsymbol{p}},\boldsymbol{p}(\widehat{\boldsymbol{\theta}%
}))-d_{\phi}(\overline{\boldsymbol{p}},\boldsymbol{p}%
(\widehat{\boldsymbol{\theta}}(S)))) \label{5aB}%
\end{equation}
and%
\begin{equation}
S_{\phi}(\boldsymbol{p}(\widehat{\boldsymbol{\theta}}(S)),\boldsymbol{p}%
(\widehat{\boldsymbol{\theta}}))=2nd_{\phi}(\boldsymbol{p}%
(\widehat{\boldsymbol{\theta}}(S)),\boldsymbol{p}(\widehat{\boldsymbol{\theta
}})). \label{5bB}%
\end{equation}

\subsection{Proposition\label{Th1Contr}}

Under $H_{0}$,
\begin{equation}
S_{\phi}(\boldsymbol{p}(\widehat{\boldsymbol{\theta}}(S)),\boldsymbol{p}%
(\widehat{\boldsymbol{\theta}}))=T_{\phi}(\overline{\boldsymbol{p}%
},\boldsymbol{p}(\widehat{\boldsymbol{\theta}}(S)),\boldsymbol{p}%
(\widehat{\boldsymbol{\theta}}))+\mathrm{o}_{p}(1), \label{D}%
\end{equation}
the asymptotic distribution of (\ref{5aB})\ and (\ref{5bB}) is $\chi_{df}^{2}$
with $df=J-1-$\textrm{$card$}$(S)$.

\begin{proof}
The second order Taylor expansion of function $\mathrm{d}_{\phi}%
(\boldsymbol{\theta})=\mathrm{d}_{\phi}(\boldsymbol{p}(\boldsymbol{\theta
}),\boldsymbol{p}(\widehat{\boldsymbol{\theta}}))$ about
$\widehat{\boldsymbol{\theta}}$ is%
\begin{equation}
\mathrm{d}_{\phi}(\boldsymbol{\theta})=\mathrm{d}_{\phi}%
(\widehat{\boldsymbol{\theta}})+(\boldsymbol{\theta}%
-\widehat{\boldsymbol{\theta}})^{T}\left.  \frac{\partial}{\partial
\boldsymbol{\theta}}\mathrm{d}_{\phi}(\boldsymbol{\theta})\right\vert
_{\boldsymbol{\theta=}\widehat{\boldsymbol{\theta}}}+\frac{1}{2}%
(\boldsymbol{\theta}-\widehat{\boldsymbol{\theta}})^{T}\left.  \frac
{\partial^{2}}{\partial\boldsymbol{\theta}\partial\boldsymbol{\theta}^{T}%
}\mathrm{d}_{\phi}(\boldsymbol{\theta})\right\vert _{\boldsymbol{\theta
=}\widehat{\boldsymbol{\theta}}}(\boldsymbol{\theta}%
-\widehat{\boldsymbol{\theta}})+\mathrm{o}\left(  \left\Vert
\boldsymbol{\theta}-\widehat{\boldsymbol{\theta}}\right\Vert ^{2}\right)  ,
\label{eq16}%
\end{equation}
where%
\begin{align*}
\left.  \frac{\partial}{\partial\boldsymbol{\theta}}\mathrm{d}_{\phi
}(\boldsymbol{\theta})\right\vert _{\boldsymbol{\theta=}%
\widehat{\boldsymbol{\theta}}}  &  =\boldsymbol{0}_{J-1},\\
\left.  \frac{\partial^{2}}{\partial\boldsymbol{\theta}\partial
\boldsymbol{\theta}^{T}}\mathrm{d}_{\phi}(\boldsymbol{\theta})\right\vert
_{\boldsymbol{\theta=}\widehat{\boldsymbol{\theta}}}  &  =\phi^{\prime\prime
}\left(  1\right)  \mathcal{I}_{F}^{(n_{1},n_{2})}(\widehat{\boldsymbol{\theta
}}),
\end{align*}
and $\mathcal{I}_{F}^{(n_{1},n_{2})}(\boldsymbol{\theta})$ was defined at the
beginning of Section \ref{sec:Main results}. Let $\overline{\boldsymbol{\theta
}}$ be the parameter vector such that $\overline{\boldsymbol{p}}%
=\boldsymbol{p}(\overline{\boldsymbol{\theta}})$, where $\boldsymbol{p}%
(\overline{\boldsymbol{\theta}})=\boldsymbol{1}_{2J}\bar{u}+\boldsymbol{W}%
\overline{\boldsymbol{\theta}}$, with $\bar{u}=-\log(\boldsymbol{1}_{2J}%
^{T}\exp\{\boldsymbol{W}\overline{\boldsymbol{\theta}}\})$, is the saturated
log-linear model. In particular, for $\boldsymbol{\theta=}\overline
{\boldsymbol{\theta}}$ we have%
\[
\mathrm{d}_{\phi}(\boldsymbol{p}(\overline{\boldsymbol{\theta}}%
),\boldsymbol{p}(\widehat{\boldsymbol{\theta}}))=\frac{\phi^{\prime\prime
}\left(  1\right)  }{2}(\overline{\boldsymbol{\theta}}%
-\widehat{\boldsymbol{\theta}})^{T}\mathcal{I}_{F}^{(n_{1},n_{2}%
)}(\widehat{\boldsymbol{\theta}})(\overline{\boldsymbol{\theta}}%
-\widehat{\boldsymbol{\theta}})+\mathrm{o}\left(  \left\Vert \overline
{\boldsymbol{\theta}}-\widehat{\boldsymbol{\theta}}\right\Vert ^{2}\right)  .
\]
In a similar way it is obtained%
\[
\mathrm{d}_{\phi}(\boldsymbol{p}(\overline{\boldsymbol{\theta}}%
),\boldsymbol{p}(\widehat{\boldsymbol{\theta}}(S)))=\frac{\phi^{\prime\prime
}\left(  1\right)  }{2}(\overline{\boldsymbol{\theta}}%
-\widehat{\boldsymbol{\theta}}(S))^{T}\mathcal{I}_{F}^{(n_{1},n_{2}%
)}(\widehat{\boldsymbol{\theta}}(S))(\overline{\boldsymbol{\theta}%
}-\widehat{\boldsymbol{\theta}}(S))+\mathrm{o}\left(  \left\Vert
\overline{\boldsymbol{\theta}}-\widehat{\boldsymbol{\theta}}(S)\right\Vert
^{2}\right)  .
\]
Multiplying both sides of the equality by $\frac{2n}{\phi^{\prime\prime
}\left(  1\right)  }$ and taking the difference in both sides of the equality%
\begin{align*}
T_{\phi}(\overline{\boldsymbol{p}},\boldsymbol{p}(\widehat{\boldsymbol{\theta
}}(S)),\boldsymbol{p}(\widehat{\boldsymbol{\theta}}))  &  =\frac{2n}%
{\phi^{\prime\prime}(1)}\left(  \mathrm{d}_{\phi}(\boldsymbol{p}%
(\overline{\boldsymbol{\theta}}),\boldsymbol{p}(\widehat{\boldsymbol{\theta}%
}))-\mathrm{d}_{\phi}(\boldsymbol{p}(\overline{\boldsymbol{\theta}%
}),\boldsymbol{p}(\widehat{\boldsymbol{\theta}}(S)))\right) \\
&  =\sqrt{n}(\overline{\boldsymbol{\theta}}-\widehat{\boldsymbol{\theta}}%
)^{T}\mathcal{I}_{F}^{(n_{1},n_{2})}(\widehat{\boldsymbol{\theta}})\sqrt
{n}(\overline{\boldsymbol{\theta}}-\widehat{\boldsymbol{\theta}}%
)+\mathrm{o}\left(  \left\Vert \sqrt{n}\left(  \overline{\boldsymbol{\theta}%
}-\widehat{\boldsymbol{\theta}}\right)  \right\Vert ^{2}\right) \\
&  -\sqrt{n}(\overline{\boldsymbol{\theta}}-\widehat{\boldsymbol{\theta}%
}(S))^{T}\mathcal{I}_{F}^{(n_{1},n_{2})}(\widehat{\boldsymbol{\theta}%
}(S))\sqrt{n}(\overline{\boldsymbol{\theta}}-\widehat{\boldsymbol{\theta}%
}(S))+\mathrm{o}\left(  \left\Vert \sqrt{n}\left(  \overline
{\boldsymbol{\theta}}-\widehat{\boldsymbol{\theta}}(S)\right)  \right\Vert
^{2}\right)  .
\end{align*}
Now we are going to generalize the three types of estimators by
$\widehat{\boldsymbol{\theta}}(\bullet)$, understanding that for
$\bullet=\varnothing$, $\widehat{\boldsymbol{\theta}}(\varnothing
)=\overline{\boldsymbol{\theta}}$, $\boldsymbol{R}(\varnothing\mathbf{)=0}%
_{(J-1)\times(2J-1)}$, for $\bullet=E$, $\widehat{\boldsymbol{\theta}%
}(E)=\widehat{\boldsymbol{\theta}}$, $\boldsymbol{R}(E\mathbf{)=}%
\boldsymbol{R}$, and $\bullet=S$, $\widehat{\boldsymbol{\theta}}(S)$ and
$\boldsymbol{R}(S\mathbf{)}$ as originally defined. It is well-known that%
\begin{equation}
\sqrt{n}(\widehat{\boldsymbol{\theta}}(\bullet)-\boldsymbol{\theta}%
_{0})=\boldsymbol{\Gamma}(\boldsymbol{\theta}_{0},\bullet\mathbf{)}\frac
{1}{\sqrt{n}}\left.  \frac{\partial}{\partial\boldsymbol{\theta}}%
\ell(\boldsymbol{N};\boldsymbol{\theta})\right\vert _{\boldsymbol{\theta
}=\boldsymbol{\theta}_{0}}+\mathrm{o}_{p}(\boldsymbol{1}_{^{2(J-1)}}),
\label{eq14}%
\end{equation}
where $\boldsymbol{\theta}_{0}$ is the true and unknown value of the
parameter,%
\begin{equation}
\boldsymbol{\Gamma}(\boldsymbol{\theta}_{0},\bullet\mathbf{)}\mathbf{=}%
\mathcal{I}_{F}^{-1}(\boldsymbol{\theta}_{0})-\mathcal{I}_{F}^{-1}%
(\boldsymbol{\theta}_{0})\boldsymbol{R}^{T}(\bullet\mathbf{)}\left(
\boldsymbol{R}(\bullet\mathbf{)}\mathcal{I}_{F}^{-1}(\boldsymbol{\theta}%
_{0})\boldsymbol{R}^{T}(\bullet\mathbf{)}\right)  ^{-1}\boldsymbol{R}%
(\bullet\mathbf{)}\mathcal{I}_{F}^{-1}(\boldsymbol{\theta}_{0}), \label{PP}%
\end{equation}
is the variance covariance matrix of $\widehat{\boldsymbol{\theta}}(\bullet
)$,\ and $\frac{1}{\sqrt{n}}\left.  \frac{\partial}{\partial\boldsymbol{\theta
}}\ell(\boldsymbol{N};\boldsymbol{\theta})\right\vert _{\boldsymbol{\theta
}=\boldsymbol{\theta}_{0}}\underset{n_{1},n_{2}\rightarrow\infty
}{\overset{\mathcal{L}}{\longrightarrow}}\mathcal{N}(\boldsymbol{0}%
_{k},\mathcal{I}_{F}(\boldsymbol{\theta}_{0}))$ by the Central Limit Theorem.
We shall denote%
\[
\boldsymbol{\Gamma}(\boldsymbol{\theta}_{0}\mathbf{)=}\boldsymbol{\Gamma
}(\boldsymbol{\theta}_{0},E\mathbf{)=}\mathcal{I}_{F}^{-1}(\boldsymbol{\theta
}_{0})-\mathcal{I}_{F}^{-1}(\boldsymbol{\theta}_{0})\boldsymbol{R}^{T}\left(
\boldsymbol{R}\mathcal{I}_{F}^{-1}(\boldsymbol{\theta}_{0})\boldsymbol{R}%
^{T}\right)  ^{-1}\boldsymbol{R}\mathcal{I}_{F}^{-1}(\boldsymbol{\theta}%
_{0}).
\]
Taking the differences of both sides of the equality in (\ref{eq14}) with
cases $\bullet=\varnothing$\ and $\bullet=E$, we obtain%
\begin{equation}
\sqrt{n}(\overline{\boldsymbol{\theta}}-\widehat{\boldsymbol{\theta}})=\left(
\mathcal{I}_{F}^{-1}(\boldsymbol{\theta}_{0})-\boldsymbol{\Gamma
}(\boldsymbol{\theta}_{0}\mathbf{)}\right)  \frac{1}{\sqrt{n}}\left.
\frac{\partial}{\partial\boldsymbol{\theta}}\ell(\boldsymbol{N}%
;\boldsymbol{\theta})\right\vert _{\boldsymbol{\theta}=\boldsymbol{\theta}%
_{0}}+\mathrm{o}_{p}(\boldsymbol{1}_{^{2(J-1)}}), \label{A}%
\end{equation}
with cases $\bullet=\varnothing$\ and $\bullet=S$,%
\begin{equation}
\sqrt{n}(\overline{\boldsymbol{\theta}}-\widehat{\boldsymbol{\theta}%
}(S))=\left(  \mathcal{I}_{F}^{-1}(\boldsymbol{\theta}_{0})-\boldsymbol{\Gamma
}(\boldsymbol{\theta}_{0},S\mathbf{)}\right)  \frac{1}{\sqrt{n}}\left.
\frac{\partial}{\partial\boldsymbol{\theta}}\ell(\boldsymbol{N}%
;\boldsymbol{\theta})\right\vert _{\boldsymbol{\theta}=\boldsymbol{\theta}%
_{0}}+\mathrm{o}_{p}(\boldsymbol{1}_{^{2(J-1)}}), \label{B}%
\end{equation}
and taking into account $\mathcal{I}_{F}(\widehat{\boldsymbol{\theta}%
})\underset{n_{1},n_{2}\rightarrow\infty}{\overset{P}{\longrightarrow}%
}\mathcal{I}_{F}(\boldsymbol{\theta}_{0})$,%
\begin{align}
&  T_{\phi}(\overline{\boldsymbol{p}},\boldsymbol{p}%
(\widehat{\boldsymbol{\theta}}(S)),\boldsymbol{p}(\widehat{\boldsymbol{\theta
}}))\nonumber\\
&  =\frac{1}{\sqrt{n}}\left.  \frac{\partial}{\partial\boldsymbol{\theta}^{T}%
}\ell(\boldsymbol{N};\boldsymbol{\theta})\right\vert _{\boldsymbol{\theta
}=\boldsymbol{\theta}_{0}}\left(  \boldsymbol{\Gamma}(\boldsymbol{\theta}%
_{0},S\mathbf{)}-\boldsymbol{\Gamma}(\boldsymbol{\theta}_{0}\mathbf{)}\right)
^{T}\mathcal{I}_{F}(\boldsymbol{\theta}_{0})\left(  \boldsymbol{\Gamma
}(\boldsymbol{\theta}_{0},S\mathbf{)}-\boldsymbol{\Gamma}(\boldsymbol{\theta
}_{0}\mathbf{)}\right)  \frac{1}{\sqrt{n}}\left.  \frac{\partial}%
{\partial\boldsymbol{\theta}}\ell(\boldsymbol{N};\boldsymbol{\theta
})\right\vert _{\boldsymbol{\theta}=\boldsymbol{\theta}_{0}}+\mathrm{o}%
_{p}(1)\nonumber\\
&  =\boldsymbol{Y}^{T}\boldsymbol{Y}+\mathrm{o}_{p}(1), \label{C}%
\end{align}
where%
\[
\boldsymbol{Y}=\boldsymbol{A}(\boldsymbol{\theta}_{0}\mathbf{)}\left(
\boldsymbol{\Gamma}(\boldsymbol{\theta}_{0},S\mathbf{)}-\boldsymbol{\Gamma
}(\boldsymbol{\theta}_{0}\mathbf{)}\right)  \boldsymbol{A}(\boldsymbol{\theta
}_{0}\mathbf{)}^{T}\boldsymbol{Z}\text{,}%
\]
with $\boldsymbol{Z}\sim\mathcal{N}(\boldsymbol{0}_{J-1},\boldsymbol{I}%
_{J-1}\mathbf{)}$ and $\boldsymbol{A}(\boldsymbol{\theta}_{0}\mathbf{)}$\ is
the Cholesky's factorization matrix for a non singular matrix such a Fisher
information matrix, that is $\mathcal{I}_{F}(\boldsymbol{\theta}%
_{0})=\boldsymbol{A}(\boldsymbol{\theta}_{0}\mathbf{)}^{T}\boldsymbol{A}%
(\boldsymbol{\theta}_{0}\mathbf{)}$. In other words%
\[
\boldsymbol{Y}\sim\mathcal{N}(\boldsymbol{0}_{k},\boldsymbol{A}%
(\boldsymbol{\theta}_{0}\mathbf{)}\left(  \boldsymbol{\Gamma}%
(\boldsymbol{\theta}_{0},S\mathbf{)}-\boldsymbol{\Gamma}(\boldsymbol{\theta
}_{0}\mathbf{)}\right)  \boldsymbol{A}(\boldsymbol{\theta}_{0}\mathbf{)}%
^{T})\text{,}%
\]
where the variance covariance matrix is idempotent and symmetric. Following
Lemma 3 in Ferguson (1996, page 57), $\boldsymbol{A}(\boldsymbol{\theta}%
_{0}\mathbf{)}\left(  \boldsymbol{\Gamma}(\boldsymbol{\theta}_{0}%
,S\mathbf{)}-\boldsymbol{\Gamma}(\boldsymbol{\theta}_{0}\mathbf{)}\right)
\boldsymbol{A}(\boldsymbol{\theta}_{0}\mathbf{)}^{T}$ is idempotent and
symmetric, if only if $T_{\phi}(\overline{\boldsymbol{p}},\boldsymbol{p}%
(\widehat{\boldsymbol{\theta}}(S)),\boldsymbol{p}(\widehat{\boldsymbol{\theta
}}))$ is a chi-square random variable with degrees of freedom
\[
df=\mathrm{rank}(\boldsymbol{A}(\boldsymbol{\theta}_{0}\mathbf{)}\left(
\boldsymbol{\Gamma}(\boldsymbol{\theta}_{0},S\mathbf{)}-\boldsymbol{\Gamma
}(\boldsymbol{\theta}_{0}\mathbf{)}\right)  \boldsymbol{A}(\boldsymbol{\theta
}_{0}\mathbf{)}^{T})=\mathrm{trace}(\boldsymbol{A}(\boldsymbol{\theta}%
_{0}\mathbf{)}\left(  \boldsymbol{\Gamma}(\boldsymbol{\theta}_{0}%
,S\mathbf{)}-\boldsymbol{\Gamma}(\boldsymbol{\theta}_{0}\mathbf{)}\right)
\boldsymbol{A}(\boldsymbol{\theta}_{0}\mathbf{)}^{T}).
\]
Since%
\[
\left(  \boldsymbol{\Gamma}(\boldsymbol{\theta}_{0},S\mathbf{)}%
-\boldsymbol{\Gamma}(\boldsymbol{\theta}_{0}\mathbf{)}\right)  ^{T}%
\mathcal{I}_{F}(\boldsymbol{\theta}_{0})\left(  \boldsymbol{\Gamma
}(\boldsymbol{\theta}_{0},S\mathbf{)}-\boldsymbol{\Gamma}(\boldsymbol{\theta
}_{0}\mathbf{)}\right)  =\boldsymbol{\Gamma}(\boldsymbol{\theta}%
_{0},S\mathbf{)}-\boldsymbol{\Gamma}(\boldsymbol{\theta}_{0}\mathbf{),}%
\]
the condition is reached. The effective degrees of freedom are given by%
\begin{align*}
df  &  =\mathrm{trace}(\boldsymbol{\Gamma}(\boldsymbol{\theta}_{0}%
,S\mathbf{)}\boldsymbol{A}(\boldsymbol{\theta}_{0}\mathbf{)}^{T}%
\boldsymbol{A}(\boldsymbol{\theta}_{0}\mathbf{)})-\mathrm{trace}%
(\boldsymbol{\Gamma}(\boldsymbol{\theta}_{0}\mathbf{)}\boldsymbol{A}%
(\boldsymbol{\theta}_{0}\mathbf{)}^{T}\boldsymbol{A}(\boldsymbol{\theta}%
_{0}\mathbf{)})=\mathrm{trace}(\boldsymbol{\Gamma}(\boldsymbol{\theta}%
_{0},S\mathbf{)}\mathcal{I}_{F}(\boldsymbol{\theta}_{0}))-\mathrm{trace}%
(\boldsymbol{\Gamma}(\boldsymbol{\theta}_{0}\mathbf{)}\mathcal{I}%
_{F}(\boldsymbol{\theta}_{0}))\\
&  =\mathrm{trace}(-\left(  \boldsymbol{R}(S\mathbf{)}\mathcal{I}_{F}%
^{-1}(\boldsymbol{\theta}_{0})\boldsymbol{R}^{T}(S\mathbf{)}\right)
^{-1}\boldsymbol{R}(S\mathbf{)}\mathcal{I}_{F}^{-1}(\boldsymbol{\theta}%
_{0})\boldsymbol{R}^{T}(S\mathbf{)})\\
&  -\mathrm{trace}(-\left(  \boldsymbol{R}\mathcal{I}_{F}^{-1}%
(\boldsymbol{\theta}_{0})\boldsymbol{R}^{T}\right)  ^{-1}\boldsymbol{R}%
\mathcal{I}_{F}^{-1}(\boldsymbol{\theta}_{0})\boldsymbol{R}^{T})\\
&  =(J-1)-\mathrm{card}(S).
\end{align*}
Regarding the other test-statistic $S_{\phi}(\boldsymbol{p}%
(\widehat{\boldsymbol{\theta}}(S)),\boldsymbol{p}(\widehat{\boldsymbol{\theta
}}))$, observe that if we take (\ref{eq16}), in particular for
$\boldsymbol{\theta=}\widehat{\boldsymbol{\theta}}(S)$ it is obtained%
\[
\mathrm{d}_{\phi}(\widehat{\boldsymbol{\theta}}(S))=\frac{\phi^{\prime\prime
}\left(  1\right)  }{2}(\widehat{\boldsymbol{\theta}}%
(S)-\widehat{\boldsymbol{\theta}})^{T}\mathcal{I}_{F}%
(\widehat{\boldsymbol{\theta}})(\widehat{\boldsymbol{\theta}}%
(S)-\widehat{\boldsymbol{\theta}})+\mathrm{o}\left(  \left\Vert
\widehat{\boldsymbol{\theta}}(S)-\widehat{\boldsymbol{\theta}}\right\Vert
^{2}\right)  .
\]
In addition, (\ref{A})$-$(\ref{B}) is%
\[
\sqrt{n}(\widehat{\boldsymbol{\theta}}(S)-\widehat{\boldsymbol{\theta}%
})=\left(  \boldsymbol{\Gamma}(\boldsymbol{\theta}_{0},S\mathbf{)}%
-\boldsymbol{\Gamma}(\boldsymbol{\theta}_{0}\mathbf{)}\right)  \frac{1}%
{\sqrt{n}}\left.  \frac{\partial}{\partial\boldsymbol{\theta}}\ell
(\boldsymbol{N};\boldsymbol{\theta})\right\vert _{\boldsymbol{\theta
}=\boldsymbol{\theta}_{0}}+\mathrm{o}_{p}(\boldsymbol{1}_{^{2(J-1)}}),
\]
and taking into account $\mathcal{I}_{F}(\widehat{\boldsymbol{\theta}%
})\underset{n_{1},n_{2}\rightarrow\infty}{\overset{P}{\longrightarrow}%
}\mathcal{I}_{F}(\boldsymbol{\theta}_{0})$ and (\ref{C}), it follows
(\ref{D}), which means from Slutsky's Theorem that both test-statistics have
the same asymptotic distribution.
\end{proof}

\subsection{Lemma \label{LemContrA}}

Let $\boldsymbol{Y}$ be a $k$-dimensional random variable with normal
distribution $\mathcal{N}\left(  \boldsymbol{0}_{k},\boldsymbol{Q}\right)  $
with $\boldsymbol{Q}$ being a projection matrix, that is idempotent and
symmetric, and let $\boldsymbol{d}_{i}$ be the fixed $k$-dimensional vectors
such that for them either $\boldsymbol{Qd}_{i}=\boldsymbol{0}_{k}$ or
$\boldsymbol{Qd}_{i}=\boldsymbol{d}_{i}$, $i=1,...,k$, is true. Then $\left(
\boldsymbol{Y}^{T}\boldsymbol{Y}\left\vert \boldsymbol{d}_{i}^{T}%
\boldsymbol{Y}\geq0,i=1,...,k\right.  \right)  \sim\chi_{df}^{2}$, where
$df=\mathrm{rank}(\boldsymbol{Q})$.

\begin{proof}
This result can be found in several sources, for instance in Kud\^{o} (1963,
page 414), Barlow et al. (1972, page 128) and Shapiro (1985, page 139).
\end{proof}

\subsection{Proof of Theorem \ref{Th1}\label{ProofTh1ContrA}}

We shall perform the proof for $S_{\phi}(\boldsymbol{p}%
(\widetilde{\boldsymbol{\theta}}),\boldsymbol{p}(\widehat{\boldsymbol{\theta}%
}))$. It suppose that it is true $\boldsymbol{R\theta}\geq\boldsymbol{0}%
_{J-1}$ and we want to test $\boldsymbol{R\theta}=\boldsymbol{0}_{J-1}$
($H_{0}$). It is clear that if $H_{0}$ is not true is because there exists
some index $i\in E$ such that $\boldsymbol{R}(\{i\}\mathbf{)}%
\boldsymbol{\theta}>0$. Let us consider the family of all possible subsets in
$E$, denoted by $\mathcal{F}(E)$, then\ we shall specify more thoroughly
$\widetilde{\boldsymbol{\theta}}$ by $\widetilde{\boldsymbol{\theta}}(S)$ when
there exists $S\in\mathcal{F}(E)$ such that%
\[
\boldsymbol{R}(S)\widetilde{\boldsymbol{\theta}}=\boldsymbol{0}_{\mathrm{card}%
(S)}\qquad\text{and}\qquad\boldsymbol{R}(S^{C})\widetilde{\boldsymbol{\theta}%
}>\boldsymbol{0}_{(J-1)-\mathrm{card}(S)}.
\]
It is clear that for a sample $\widetilde{\boldsymbol{\theta}}%
=\widetilde{\boldsymbol{\theta}}(S)$ can be true only for a unique set of
indices $S\in\mathcal{F}(E)$, and thus by applying the Theorem of Total
Probability%
\[
\Pr\left(  S_{\phi}(\boldsymbol{p}(\widetilde{\boldsymbol{\theta}%
}),\boldsymbol{p}(\widehat{\boldsymbol{\theta}}))\leq x\right)  =\sum
_{S\in\mathcal{F}(E)}\Pr\left(  S_{\phi}(\boldsymbol{p}%
(\widetilde{\boldsymbol{\theta}}),\boldsymbol{p}(\widehat{\boldsymbol{\theta}%
}))\leq x,\widetilde{\boldsymbol{\theta}}=\widetilde{\boldsymbol{\theta}%
}(S)\right)  .
\]
From the Karush-Khun-Tucker necessary conditions (see for instance Theorem
4.2.13 in Bazaraa et al. (2006)) to solve the optimization problem $\max
\ell(\boldsymbol{N};\boldsymbol{\theta})$ s.t. $\boldsymbol{R\theta}%
\geq\boldsymbol{0}_{J-1}$, associated with $\widetilde{\boldsymbol{\theta}}$,%
\begin{subequations}
\begin{align}
\frac{\partial}{\partial\boldsymbol{\theta}}\ell(\boldsymbol{N}%
;\boldsymbol{\theta})+\sum_{i=1}^{J-1}\lambda_{i}\boldsymbol{R}^{T}%
(\{i\}\mathbf{)} &  =0\text{, }i=1,...,J-1,\label{KKT1}\\
\lambda_{i}\boldsymbol{R}(\{i\}\mathbf{)}\boldsymbol{\theta} &  =0\text{,
}i=1,...,J-1,\label{KKT2}\\
\lambda_{i} &  \leq0\text{, }i=1,...,J-1,\label{KKT3}%
\end{align}
the only conditions which characterize the MLE $\widetilde{\boldsymbol{\theta
}}=\widetilde{\boldsymbol{\theta}}(S)$ with a specific $S\in\mathcal{F}(E)$,
are the complementary slackness conditions $\boldsymbol{R}(\{i\}\mathbf{)}%
\boldsymbol{\theta}>0$, for $i\in S$ and $\lambda_{i}<0$, for $i\in S^{C}$,
since $\frac{\partial}{\partial\boldsymbol{\theta}}\ell(\boldsymbol{N}%
;\boldsymbol{\theta})+\lambda_{i}\boldsymbol{R}^{T}(\{i\}\mathbf{)}=0$,
$i=1,...,J-1$,$\ \boldsymbol{R}(\{i\}\mathbf{)}\boldsymbol{\theta}=0$, for
$i\in S^{C}$ and $\lambda_{i}=0$, for $i\in S$ are redundant conditions once
we know that the Karush-Khun-Tucker necessary conditions are true for all the
possible sets $S\in\mathcal{F}(E)$ which define $\widetilde{\boldsymbol{\theta
}}=\widetilde{\boldsymbol{\theta}}(S)$. For this reason we can consider%
\end{subequations}
\begin{align*}
&  \Pr\left(  S_{\phi}(\boldsymbol{p}(\widetilde{\boldsymbol{\theta}%
}),\boldsymbol{p}(\widehat{\boldsymbol{\theta}}))\leq
x,\widetilde{\boldsymbol{\theta}}=\widetilde{\boldsymbol{\theta}}(S)\right)
=\\
&  \Pr\left(  S_{\phi}(\boldsymbol{p}(\widetilde{\boldsymbol{\theta}%
}),\boldsymbol{p}(\widehat{\boldsymbol{\theta}}))\leq
x,\widetilde{\boldsymbol{\lambda}}(S)<\boldsymbol{0}_{\mathrm{card}%
(S)},\boldsymbol{R}(S^{C})\widetilde{\boldsymbol{\theta}}(S)>\boldsymbol{0}%
_{(J-1)-\mathrm{card}(S)}\right)  ,
\end{align*}
where $\widetilde{\boldsymbol{\lambda}}(S)$ is the vector of the vector of
Karush-Khun-Tucker multipliers associated with estimator
$\widetilde{\boldsymbol{\theta}}(S)$. Furthermore, under $H_{0}$,
$\boldsymbol{R}\widetilde{\boldsymbol{\theta}}(S)=\boldsymbol{R}%
\widetilde{\boldsymbol{\theta}}(S)-\boldsymbol{R\theta}_{0}$, because
$\boldsymbol{R\theta}_{0}=\boldsymbol{0}_{J-1}$, hence%
\[
\Pr\left(  S_{\phi}(\boldsymbol{p}(\widetilde{\boldsymbol{\theta}%
}),\boldsymbol{p}(\widehat{\boldsymbol{\theta}}))\leq x\right)  =\sum
_{S\in\mathcal{F}(E)}\Pr\left(  S_{\phi}(\boldsymbol{p}%
(\widetilde{\boldsymbol{\theta}}),\boldsymbol{p}(\widehat{\boldsymbol{\theta}%
}))\leq x,\widetilde{\boldsymbol{\lambda}}(S)<\boldsymbol{0}_{\mathrm{card}%
(S)},\boldsymbol{R}(S^{C})\widetilde{\boldsymbol{\theta}}(S)-\boldsymbol{R}%
(S^{C})\boldsymbol{\theta}_{0}>\boldsymbol{0}_{\mathrm{card}(S^{C})}\right)  ,
\]
where $\mathrm{card}(S^{C})=(J-1)-\mathrm{card}(S)$. On the other hand,
(\ref{KKT1}) and (\ref{KKT2}) are also true for $(\widehat{\boldsymbol{\theta
}}^{T}(S),\widehat{\boldsymbol{\lambda}}^{T}(S))^{T}$ according to the
Lagrange multipliers method. Hence, $\widetilde{\boldsymbol{\theta}%
}(S)=\widehat{\boldsymbol{\theta}}(S)$ and $\widetilde{\boldsymbol{\lambda}%
}(S)=\widehat{\boldsymbol{\lambda}}(S)$. It follows that:\newline$\bullet$
under $\widetilde{\boldsymbol{\theta}}=\widehat{\boldsymbol{\theta}}(S)$,
$S_{\phi}(\boldsymbol{p}(\widetilde{\boldsymbol{\theta}}),\boldsymbol{p}%
(\widehat{\boldsymbol{\theta}}))=S_{\phi}(\boldsymbol{p}%
(\widehat{\boldsymbol{\theta}}(S)),\boldsymbol{p}(\widehat{\boldsymbol{\theta
}}))$ and taking into account Proposition \ref{Th1Contr}
\begin{align*}
&  S_{\phi}(\boldsymbol{p}(\widetilde{\boldsymbol{\theta}}),\boldsymbol{p}%
(\widehat{\boldsymbol{\theta}}))=T_{\phi}(\overline{\boldsymbol{p}%
},\boldsymbol{p}(\widehat{\boldsymbol{\theta}}(S)),\boldsymbol{p}%
(\widehat{\boldsymbol{\theta}}))+\mathrm{o}_{p}(1)\\
&  =\left(  \boldsymbol{A}(\boldsymbol{\theta}_{0}\mathbf{)}\left(
\boldsymbol{\Gamma}(\boldsymbol{\theta}_{0},S\mathbf{)}-\boldsymbol{\Gamma
}(\boldsymbol{\theta}_{0}\mathbf{)}\right)  \boldsymbol{A}(\boldsymbol{\theta
}_{0}\mathbf{)}^{T}\boldsymbol{Z}\right)  ^{T}\left(  \boldsymbol{A}%
(\boldsymbol{\theta}_{0}\mathbf{)}\left(  \boldsymbol{\Gamma}%
(\boldsymbol{\theta}_{0},S\mathbf{)}-\boldsymbol{\Gamma}(\boldsymbol{\theta
}_{0}\mathbf{)}\right)  \boldsymbol{A}(\boldsymbol{\theta}_{0}\mathbf{)}%
^{T}\boldsymbol{Z}\right)  +\mathrm{o}_{p}(1),\\
&  =\boldsymbol{Z}^{T}\boldsymbol{A}(\boldsymbol{\theta}_{0}\mathbf{)}\left(
\boldsymbol{\Gamma}(\boldsymbol{\theta}_{0},S\mathbf{)}-\boldsymbol{\Gamma
}(\boldsymbol{\theta}_{0}\mathbf{)}\right)  \boldsymbol{A}(\boldsymbol{\theta
}_{0}\mathbf{)}^{T}\boldsymbol{Z}+\mathrm{o}_{p}(1).
\end{align*}
where $\boldsymbol{Z}\sim\mathcal{N}\left(  \boldsymbol{0}_{k},\boldsymbol{I}%
_{k}\right)  $.$\newline\bullet$ under $\widetilde{\boldsymbol{\lambda}%
}(S)=\widehat{\boldsymbol{\lambda}}(S)$ and from Sen et al. (2010, page 267
formula (8.6.28))%
\begin{align*}
\frac{1}{\sqrt{n}}\widetilde{\boldsymbol{\lambda}}(S) &  =\sqrt{n}%
\boldsymbol{Q}^{T}(\boldsymbol{\theta}_{0},S\mathbf{)}\frac{1}{\sqrt{n}%
}\left.  \frac{\partial}{\partial\boldsymbol{\theta}}\ell(\boldsymbol{N}%
;\boldsymbol{\theta})\right\vert _{\boldsymbol{\theta}=\boldsymbol{\theta}%
_{0}}+\mathrm{o}_{p}(\boldsymbol{1}_{\mathrm{card}(S)})\\
&  =\boldsymbol{Q}^{T}(\boldsymbol{\theta}_{0},S\mathbf{)}\boldsymbol{A}%
(\boldsymbol{\theta}_{0}\mathbf{)}^{T}\boldsymbol{Z}+\mathrm{o}_{p}%
(\boldsymbol{1}_{\mathrm{card}(S)}),
\end{align*}
where%
\[
\boldsymbol{Q}(\boldsymbol{\theta}_{0},S\mathbf{)}\mathbf{=}-\mathcal{I}%
_{F}^{-1}(\boldsymbol{\theta}_{0})\boldsymbol{R}^{T}(S\mathbf{)}%
\boldsymbol{L}(\boldsymbol{\theta}_{0},S\mathbf{)}\left(  \boldsymbol{R}%
(S\mathbf{)}\mathcal{I}_{F}^{-1}(\boldsymbol{\theta}_{0})\boldsymbol{R}%
^{T}(S\mathbf{)}\right)  ^{-1};
\]
$\bullet$ under $\widetilde{\boldsymbol{\theta}}=\widehat{\boldsymbol{\theta}%
}(S)$ and from (\ref{eq14})%
\begin{align*}
\sqrt{n}\left(  \boldsymbol{R}(S^{C})\widetilde{\boldsymbol{\theta}%
}(S)-\boldsymbol{R}(S^{C})\boldsymbol{\theta}_{0}\right)   &  =\sqrt
{n}\boldsymbol{R}(S^{C})\boldsymbol{\Gamma}(\boldsymbol{\theta}_{0}%
,S\mathbf{)}\frac{1}{\sqrt{n}}\left.  \frac{\partial}{\partial
\boldsymbol{\theta}}\ell(\boldsymbol{N};\boldsymbol{\theta})\right\vert
_{\boldsymbol{\theta}=\boldsymbol{\theta}_{0}}+\mathrm{o}_{p}(\boldsymbol{1}%
_{\mathrm{card}(S^{C})})\\
&  =\boldsymbol{R}(S^{C})\boldsymbol{\Gamma}(\boldsymbol{\theta}%
_{0},S\mathbf{)}\boldsymbol{A}(\boldsymbol{\theta}_{0}\mathbf{)}%
^{T}\boldsymbol{Z}+\mathrm{o}_{p}(\boldsymbol{1}_{\mathrm{card}(S^{C})}).
\end{align*}
That is,%
\begin{align*}
&  \lim_{n_{1},n_{2}\rightarrow\infty}\Pr\left(  S_{\phi}(\boldsymbol{p}%
(\widetilde{\boldsymbol{\theta}}),\boldsymbol{p}(\widehat{\boldsymbol{\theta}%
}))\leq x\right)  =\sum_{S\in\mathcal{F}(E)}\Pr\left(  \boldsymbol{Z}_{3}%
^{T}(S)\boldsymbol{Z}_{3}(S)\leq x,\boldsymbol{Z}_{1}(S)\geq\boldsymbol{0}%
_{\mathrm{card}(S)},\boldsymbol{Z}_{2}(S)\geq\boldsymbol{0}_{\mathrm{card}%
(S^{C})}\right)  \\
&  =\sum_{S\in\mathcal{F}(E)}\Pr\left(  \boldsymbol{Z}_{3}^{T}%
(S)\boldsymbol{Z}_{3}(S)\leq x|\boldsymbol{Z}_{1}(S)\geq\boldsymbol{0}%
_{\mathrm{card}(S)},\boldsymbol{Z}_{2}(S)\geq\boldsymbol{0}_{\mathrm{card}%
(S^{C})}\right)  \Pr\left(  \boldsymbol{Z}_{1}(S)\geq\boldsymbol{0}%
_{\mathrm{card}(S)},\boldsymbol{Z}_{2}(S)\geq\boldsymbol{0}_{\mathrm{card}%
(S^{C})}\right)  \\
&  =\sum_{S\in\mathcal{F}(E)}\Pr\left(  \boldsymbol{Z}_{3}^{T}%
(S)\boldsymbol{Z}_{3}(S)\leq x\left\vert \left(  \boldsymbol{Z}_{1}%
^{T}(S),\boldsymbol{Z}_{2}^{T}(S)\right)  ^{T}\geq\boldsymbol{0}_{J-1}\right.
\right)  \Pr\left(  \boldsymbol{Z}_{1}(S)\geq\boldsymbol{0}_{\mathrm{card}%
(S)},\boldsymbol{Z}_{2}(S)\geq\boldsymbol{0}_{\mathrm{card}(S^{C})}\right)  ,
\end{align*}
where%
\begin{align*}
\boldsymbol{Z}_{3}(S) &  =\boldsymbol{M}_{3}(\boldsymbol{\theta}%
_{0},S\mathbf{)}\boldsymbol{Z,}\qquad\boldsymbol{M}_{3}(\boldsymbol{\theta
}_{0},S\mathbf{)=}\boldsymbol{A}(\boldsymbol{\theta}_{0}\mathbf{)}\left(
\boldsymbol{\Gamma}(\boldsymbol{\theta}_{0},S\mathbf{)}-\boldsymbol{\Gamma
}(\boldsymbol{\theta}_{0}\mathbf{)}\right)  \boldsymbol{A}(\boldsymbol{\theta
}_{0}\mathbf{)}^{T},\\
\boldsymbol{Z}_{1}(S) &  =\boldsymbol{M}_{1}(\boldsymbol{\theta}%
_{0},S\mathbf{)}\boldsymbol{Z},\qquad\boldsymbol{M}_{1}(\boldsymbol{\theta
}_{0},S\mathbf{)=-}\boldsymbol{Q}^{T}(\boldsymbol{\theta}_{0},S\mathbf{)}%
\boldsymbol{A}(\boldsymbol{\theta}_{0}\mathbf{)}^{T},\\
\boldsymbol{Z}_{2}(S) &  =\boldsymbol{M}_{2}(\boldsymbol{\theta}%
_{0},S\mathbf{)}\boldsymbol{Z},\qquad\boldsymbol{M}_{2}(\boldsymbol{\theta
}_{0},S\mathbf{)=}\boldsymbol{R}(S^{C})\boldsymbol{\Gamma}(\boldsymbol{\theta
}_{0},S\mathbf{)}\boldsymbol{A}(\boldsymbol{\theta}_{0}\mathbf{)}^{T}.
\end{align*}
Taking into account that $\boldsymbol{M}_{3}(\boldsymbol{\theta}%
_{0},S\mathbf{)}\boldsymbol{M}_{2}^{T}(\boldsymbol{\theta}_{0},S\mathbf{)=}%
\boldsymbol{M}_{2}^{T}(\boldsymbol{\theta}_{0},S\mathbf{)}$ and
$\boldsymbol{M}_{3}(\boldsymbol{\theta}_{0},S\mathbf{)}\boldsymbol{M}_{1}%
^{T}(\boldsymbol{\theta}_{0},S\mathbf{)=}\boldsymbol{0}_{(J-1)\times
\mathrm{card}(S)}$, by applying the lemma given in Section \ref{LemContrA}%
\[
\Pr\left(  \boldsymbol{Z}_{3}^{T}(S)\boldsymbol{Z}_{3}(S)\leq x\left\vert
\left(  \boldsymbol{Z}_{1}^{T}(S),\boldsymbol{Z}_{2}^{T}(S)\right)  ^{T}%
\geq\boldsymbol{0}_{J-1}\right.  \right)  =\Pr\left(  \chi_{df}^{2}\leq
x\right)
\]
where%
\begin{align*}
df &  =\mathrm{rank}\left(  \boldsymbol{A}(\boldsymbol{\theta}_{0}%
\mathbf{)}\left(  \boldsymbol{\Gamma}(\boldsymbol{\theta}_{0},S\mathbf{)}%
-\boldsymbol{\Gamma}(\boldsymbol{\theta}_{0}\mathbf{)}\right)  \boldsymbol{A}%
(\boldsymbol{\theta}_{0}\mathbf{)}^{T}\right)  =\mathrm{trace}\left(
\boldsymbol{A}(\boldsymbol{\theta}_{0}\mathbf{)}\left(  \boldsymbol{\Gamma
}(\boldsymbol{\theta}_{0},S\mathbf{)}-\boldsymbol{\Gamma}(\boldsymbol{\theta
}_{0}\mathbf{)}\right)  \boldsymbol{A}(\boldsymbol{\theta}_{0}\mathbf{)}%
^{T}\right)  \\
&  =(J-1)-\mathrm{card}(S).
\end{align*}
Finally,%
\begin{align*}
&  \lim_{n_{1},n_{2}\rightarrow\infty}\Pr\left(  S_{\phi}(\boldsymbol{p}%
(\widetilde{\boldsymbol{\theta}}),\boldsymbol{p}(\widehat{\boldsymbol{\theta}%
}))\leq x\right)  \\
&  =\sum_{S\in\mathcal{F}(E)}\Pr\left(  \chi_{(J-1)-\mathrm{card}(S)}^{2}\leq
x\right)  \Pr\left(  \boldsymbol{Z}_{1}(S)\geq\boldsymbol{0}_{\mathrm{card}%
(S)},\boldsymbol{Z}_{2}(S)\geq\boldsymbol{0}_{\mathrm{card}(S^{C})}\right)  \\
&  =\sum_{j=0}^{J-1}\Pr\left(  \chi_{(J-1)-j}^{2}\leq x\right)  \sum
_{S\in\mathcal{F}(E),\mathrm{card}(S)=j}\Pr\left(  \boldsymbol{Z}_{1}%
(S)\geq\boldsymbol{0}_{\mathrm{card}(S)},\boldsymbol{Z}_{2}(S)\geq
\boldsymbol{0}_{\mathrm{card}(S^{C})}\right)  ,
\end{align*}
and since $\boldsymbol{Q}^{T}(\boldsymbol{\theta}_{0},S\mathbf{)}%
\mathcal{I}_{F}(\boldsymbol{\theta}_{0})\boldsymbol{\Gamma}(\boldsymbol{\theta
}_{0},S\mathbf{)=}\boldsymbol{0}_{\mathrm{card}(S)\times(J-1)}$, it holds
$\boldsymbol{M}_{1}(\boldsymbol{\theta}_{0},S\mathbf{)}\boldsymbol{M}_{2}%
^{T}(\boldsymbol{\theta}_{0},S\mathbf{)}=\boldsymbol{0}_{\mathrm{card}%
(S)\times\mathrm{card}(S^{C})}$ which means that $\boldsymbol{Z}_{1}(S)$ and
$\boldsymbol{Z}_{2}(S)$ are independent, that is%
\[
\lim_{n_{1},n_{2}\rightarrow\infty}\Pr\left(  S_{\phi}(\boldsymbol{p}%
(\widetilde{\boldsymbol{\theta}}),\boldsymbol{p}(\widehat{\boldsymbol{\theta}%
}))\leq x\right)  =\sum_{j=0}^{J-1}\Pr\left(  \chi_{(J-1)-j}^{2}\leq x\right)
w_{j}(\boldsymbol{\theta}_{0})
\]
where the expression of $w_{j}(\boldsymbol{\theta}_{0})$ is (\ref{eqw}). We
have also,
\[
\mathrm{Var}(\boldsymbol{Z}_{1}(S))=\boldsymbol{M}_{1}(\boldsymbol{\theta}%
_{0},S\mathbf{)}\boldsymbol{M}_{1}^{T}(\boldsymbol{\theta}_{0},S\mathbf{)}%
=\boldsymbol{Q}^{T}(\boldsymbol{\theta}_{0},S\mathbf{)}\mathcal{I}%
_{F}(\boldsymbol{\theta}_{0})\boldsymbol{Q}(\boldsymbol{\theta}_{0}%
,S\mathbf{)}=\left(  \boldsymbol{R}(S\mathbf{)}\mathcal{I}_{F}^{-1}%
(\boldsymbol{\theta}_{0})\boldsymbol{R}^{T}(S\mathbf{)}\right)  ^{-1}%
=\boldsymbol{H}^{-1}(S,S,\boldsymbol{\theta}_{0}),
\]%
\begin{align*}
\mathrm{Var}(\boldsymbol{Z}_{2}(S)) &  =\boldsymbol{M}_{2}(\boldsymbol{\theta
}_{0},S\mathbf{)}\boldsymbol{M}_{2}^{T}(\boldsymbol{\theta}_{0},S\mathbf{)}%
=\boldsymbol{R}(S^{C})\boldsymbol{\Gamma}(\boldsymbol{\theta}_{0}%
,S\mathbf{)}\mathcal{I}_{F}(\boldsymbol{\theta}_{0})\boldsymbol{\Gamma}%
^{T}(\boldsymbol{\theta}_{0},S\mathbf{)}\boldsymbol{R}^{T}(S^{C}%
)=\boldsymbol{R}(S^{C})\boldsymbol{\Gamma}(\boldsymbol{\theta}_{0}%
,S\mathbf{)}\boldsymbol{R}^{T}(S^{C})\\
&  =\boldsymbol{H}(S^{C},S^{C},\boldsymbol{\theta}_{0})-\boldsymbol{H}%
(S^{C},S,\boldsymbol{\theta}_{0})\boldsymbol{H}^{-1}(S,S,\boldsymbol{\theta
}_{0})\boldsymbol{H}^{T}(S^{C},S,\boldsymbol{\theta}_{0}).
\end{align*}
The proof of $T_{\phi}(\overline{\boldsymbol{p}},\boldsymbol{p}%
(\widetilde{\boldsymbol{\theta}}),\boldsymbol{p}(\widehat{\boldsymbol{\theta}%
}))$ is almost immediate from the proof for $S_{\phi}(\boldsymbol{p}%
(\widetilde{\boldsymbol{\theta}}),\boldsymbol{p}(\widehat{\boldsymbol{\theta}%
}))$ and taking into account that for some $S\in\mathcal{F}(E)$%
\[
T_{\phi}(\overline{\boldsymbol{p}},\boldsymbol{p}%
(\widetilde{\boldsymbol{\theta}}),\boldsymbol{p}(\widehat{\boldsymbol{\theta}%
}))=T_{\phi}(\overline{\boldsymbol{p}},\boldsymbol{p}%
(\widehat{\boldsymbol{\theta}}(S)),\boldsymbol{p}(\widehat{\boldsymbol{\theta
}}))+\mathrm{o}_{p}(1)=S_{\phi}(\boldsymbol{p}(\widetilde{\boldsymbol{\theta}%
}),\boldsymbol{p}(\widehat{\boldsymbol{\theta}})).
\]

\section{Fortran Code: example.f95}
\begin{verbatim}
!--------------------------------------------------------------------------------
! This program is only valid for 2 by 4 contingency tables
! (for other sizes some changes must be done:
! change the value of J and follow the formulas of the weights)
! To run it, the NAG library is required to have installed
! To change the sample go to line 18
! The FORTRAN program generates the outputs in 8 text files
!--------------------------------------------------------------------------------
MODULE ParGlob
INTEGER fail
INTEGER, PARAMETER :: I=2, J=4, nlam=9
DOUBLE PRECISION pr(I*J), W(I*J,I*J-1), RR((I-1)*(J-1),I*(J-1)), betatil(I*(J-1)), &
   pHat(I*J), zz((I-1)*(J-1)), tbt((I-1)*(J-1),(I-1)*(J-1)), bb((I-1)*(J-1),(I-1)*(J-1)), &
   we(0:(I-1)*(J-1)), k1((I-1),(I-1)), k2((J-1),(J-1)), hh((I-1)*(J-1),(I-1)*(J-1)), &
   hInv((I-1)*(J-1),(I-1)*(J-1)), ntt, nu(I), ppi(J), nn(I*J), ppit(I,J), un, sample(I*J),&
   odds(I-1,J-1), nt(I)
DOUBLE PRECISION, PARAMETER:: lamb(nlam)=(/-1.5d0,-1.d0,-0.5d0,0.d0,2.d0/3.d0,1.d0,1.5d0, &
   2.d0,3.d0/),del=0.0d0, pi=3.14159265358979323846264338327950d0, sample=(/11.d0,8.d0,    &
   8.d0,5.d0,6.d0,4.d0,10.d0,12.d0/)
END MODULE ParGlob
!--------------------------------------------------------------------------------

PROGRAM Example
USE ParGlob
IMPLICIT NONE

INTEGER n, m, ifail
DOUBLE PRECISION estT, estS, pval, table(I,J), contT(nlam), contS(nlam), iniTheta(I*J-1), &
   ro(3,2), marg(J), rank(J), wilc0, wilc, meanWilc, sdWilc, pValWilc, g01eaf

DO n=1,I
 DO m=1,J
  ppit(n,m)=(1.d0/3.d0)*((1.d0+n*(m-1.d0)*del)/(1.d0+n*del))
 ENDDO
ENDDO
DO n=1,I-1
 DO m=1,J-1
  odds(n,m)=ppit(n,m)*ppit(n+1,m+1)/(ppit(n+1,m)*ppit(n,m+1))
 ENDDO
ENDDO

marg=sample(1:J)+sample(J+1:2*J)
rank=0.d0
DO n=2,J
 rank(n)=rank(n-1)+marg(n-1)
ENDDO
rank=rank+(marg+1.d0)/2.d0
wilc0=SUM(rank*sample(1:J))
nt(1)=SUM(sample(1:J))
nt(2)=SUM(sample(J+1:2*J))
ntt=SUM(nt)
nu=nt/ntt
meanWilc=nt(1)*(nt(1)+nt(2)+1.d0)/2.d0
sdWilc=nt(1)*nt(2)*(nt(1)+nt(2)+1.d0)/12.d0
sdWilc=sdWilc-nt(1)*nt(2)*SUM(marg**3-marg)/(12.d0*(nt(1)+nt(2))*(nt(1)+nt(2)-1.d0))
sdWilc=SQRT(sdWilc)
wilc=(wilc0-meanWilc)/sdWilc
CALL DesignM()
CALL RestricM()

nn=sample
table=TRANSPOSE(RESHAPE(nn,(/J,I/)))
DO m=1,J
 ppi(m)=SUM(table(:,m))/ntt
ENDDO
iniTheta=0.d0
CALL emvH01(iniTheta)
IF (fail.NE.0) THEN
 iniTheta=0.1d0
 CALL emvH01(iniTheta)
 IF (fail.NE.0) THEN
  iniTheta=-0.1d0
  CALL emvH01(iniTheta)
 ENDIF
ENDIF

21 FORMAT (20F10.4)
22 FORMAT (20F15.10)

OPEN (10, FILE = "theta-Tilde.DAT", action="write",status="replace")
WRITE(10,*) "    **         Theta tilde         ** "
WRITE(10,*) "    --------------------------------- "
WRITE(10,21) (betatil(m), m=1,I*(J-1))
CLOSE(10)

OPEN (10, FILE = "P-Bar.DAT", action="write",status="replace")
WRITE(10,*) "    **    Probability Vector: P-Bar    ** "
WRITE(10,*) "    ------------------------------------- "
WRITE(10,21) (nn(n)/(SUM(nn)), n=1,I*J)
CLOSE(10)

OPEN (10, FILE = "P-theta-Tilde.DAT", action="write",status="replace")
WRITE(10,*) "    **    Probability Vector: P-theta-Tilde    ** "
WRITE(10,*) "    --------------------------------------------- "
WRITE(10,21) (pr(n), n=1,I*J)
CLOSE(10)

CALL ProbVector2(nu,ppi)
OPEN (10, FILE = "P-theta-Hat.DAT", action="write",status="replace")
WRITE(10,*) "    **    Probability Vector: P-theta-Hat    ** "
WRITE(10,*) "    ------------------------------------------- "
WRITE(10,21) (pHat(n), n=1,I*J)
CLOSE(10)

CALL Kmatrices()
CALL hMatrix()

ro(1,1)=hh(1,2)/SQRT(hh(1,1)*hh(2,2))
ro(2,1)=hh(1,3)/SQRT(hh(1,1)*hh(3,3))
ro(3,1)=hh(2,3)/SQRT(hh(2,2)*hh(3,3))
ro(1,2)=(ro(1,1)-ro(2,1))/SQRT((1.d0-ro(2,1)*ro(2,1))*(1.d0-ro(3,1)*ro(3,1)))
ro(2,2)=(ro(2,1)-ro(1,1)*ro(3,1))/SQRT((1.d0-ro(1,1)*ro(1,1))*(1.d0-ro(3,1)*ro(3,1)))
ro(3,2)=(ro(3,1)-ro(2,1)*ro(1,1))/SQRT((1.d0-ro(2,1)*ro(2,1))*(1.d0-ro(1,1)*ro(1,1)))
we(0)=(2.d0*pi-ACOS(ro(1,1))-ACOS(ro(2,1))-ACOS(ro(3,1)))/(4.d0*pi)
we(1)=(3.d0*pi-ACOS(ro(1,2))-ACOS(ro(2,2))-ACOS(ro(3,2)))/(4.d0*pi)
we(2)=0.5d0-we(0)
we(3)=0.5d0-we(1)

ifail=-1
pValWilc=g01eaf('L',wilc,ifail)

OPEN (10, FILE = "T-TESTS.DAT", action="write",status="replace")
WRITE(10,*) "    **     T-test Statistics     ** "
WRITE(10,*) "    -------------------------------- "
WRITE(10,21) (lamb(n), n=1,nlam)
WRITE(10,*) 'test-statistics'
WRITE(10,21) (estT(lamb(n)), n=1,nlam)
WRITE(10,*) 'p-values'
WRITE(10,22) (pval(estT(lamb(n))), n=1,nlam)
WRITE(10,*) "    **     Wilcoxon Statistics     ** "
WRITE(10,*) "    --------------------------------- "
WRITE(10,*) 'test-statistic'
WRITE(10,21) wilc0
WRITE(10,*) 'p-value'
WRITE(10,21) pValWilc
CLOSE(10)

OPEN (10, FILE = "S-TESTS.DAT", action="write",status="replace")
WRITE(10,*) "    **     S-test Statistics     ** "
WRITE(10,*) "    -------------------------------- "
WRITE(10,21) (lamb(n), n=1,nlam)
WRITE(10,*) 'test-statistics'
WRITE(10,21) (estS(lamb(n)), n=1,nlam)
WRITE(10,*) 'p-values'
WRITE(10,22) (pval(estS(lamb(n))), n=1,nlam)
WRITE(10,*) "    **     Wilcoxon Statistics     ** "
WRITE(10,*) "    --------------------------------- "
WRITE(10,*) 'test-statistic'
WRITE(10,21) wilc0
WRITE(10,*) 'p-value'
WRITE(10,21) pValWilc
CLOSE(10)

OPEN (10, FILE = "WEIGHTS.DAT", action="write",status="replace")
WRITE(10,*) "    **     Weights chi-bar     ** "
WRITE(10,*) "    ----------------------------- "
WRITE(10,*) " "
WRITE(10,22) (REAL(we(n)), n=0,(I-1)*(J-1))
WRITE(10,*) "    ---------------------------------------------------------- "
CLOSE(10)

END PROGRAM Example
!--------------------------------------------------------------------------------
! This soubrutine calculates the design matrix of a saturated log-linear model
! with canonical parametrization
!--------------------------------------------------------------------------------
SUBROUTINE DesignM()
USE ParGlob
IMPLICIT NONE

INTEGER h

DOUBLE PRECISION one_I(I), one_J(J), A(I,I-1), B(J,J-1), W12(I*J,(I-1)*(J-1)), &
                 W1(I*J,I-1), W2(I*J,J-1)

one_I=1.d0
one_J=1.d0
A=0.d0
DO h=1,I-1
 A(h,h)=1.d0
ENDDO
B=0.d0
DO h=1,J-1
 B(h,h)=1.d0
ENDDO

CALL Kronecker(I,I-1,A,J,1,one_J,W1)
CALL Kronecker(I,1,one_I,J,J-1,B,W2)
CALL Kronecker(I,I-1,A,J,J-1,B,W12)

W(:,1:I-1)=W1
W(:,I:I+J-2)=W2
W(:,I+J-1:I*J-1)=W12


END SUBROUTINE DesignM
!--------------------------------------------------------------------------------

!--------------------------------------------------------------------------------
! This soubrutines calculates the restriction matrix
!--------------------------------------------------------------------------------
SUBROUTINE RestricM()
USE ParGlob
IMPLICIT NONE

INTEGER h
DOUBLE PRECISION R2((I-1)*(J-1),J-1), R12((I-1)*(J-1),(I-1)*(J-1)), GI(I-1,I-1), &
                 GJ(J-1,J-1)

GI=0.d0
DO h=1,I-1
  GI(h,h)=1.d0
  IF (h.LT.I-1) THEN
    GI(h,h+1)=-1.d0
  ENDIF
ENDDO
GJ=0.d0
DO h=1,J-1
  GJ(h,h)=1.d0
  IF (h.LT.J-1) THEN
    GJ(h,h+1)=-1.d0
  ENDIF
ENDDO
R2 = 0.d0
CALL Kronecker(I-1,I-1,GI,J-1,J-1,GJ,R12)
RR(1:(I-1)*(J-1),1:J-1) = R2
RR(1:(I-1)*(J-1),J:I*(J-1)) = R12

END SUBROUTINE RestricM
!--------------------------------------------------------------------------------

!--------------------------------------------------------------------------------
! Given matrices A and B, this subroutines calculates C as the Kronecker product
! A's dimension n by m
! B's dimension p by q
!--------------------------------------------------------------------------------
SUBROUTINE Kronecker(n,m,A,p,q,B,C)
IMPLICIT NONE

INTEGER n, m, p, q
DOUBLE PRECISION A(n,m), B(p,q), C(n*p,m*q)
INTEGER i, j, k, d

DO i=1,n
 DO j=1,m
  DO k=1,p
   DO d=1,q
    C((i-1)*p+k,(j-1)*q+d) = A(i,j)*B(k,d)
   ENDDO
  ENDDO
 ENDDO
ENDDO

END SUBROUTINE Kronecker
!--------------------------------------------------------------------------------

!--------------------------------------------------------------------------------
! Given
! a) vector theta
! b) the design matrix X=(1,W)
! this subroutine calculates the probabilities of a log-linear model.
!--------------------------------------------------------------------------------
SUBROUTINE ProbVector(beta)
USE ParGlob
IMPLICIT NONE

INTEGER n
DOUBLE PRECISION beta(I*(J-1)), theta(I*J-1), u

theta(I:I*J-1)=beta
u=LOG(nt(I))-LOG(ntt)-LOG(1.d0+SUM(EXP(beta(1:J-1))))
DO n=1,I-1
 theta(n)=LOG(nt(n))-LOG(ntt)-u-LOG(1.d0+SUM(EXP(beta(1:J-1)+&
          beta(n*(J-1)+1:(n+1)*(J-1)))))
ENDDO

pr=EXP(MATMUL(W,theta))*EXP(u)

END SUBROUTINE ProbVector
!--------------------------------------------------------------------------------

!--------------------------------------------------------------------------------
! Subroutine to calculate p(theta-hat)
!--------------------------------------------------------------------------------
SUBROUTINE ProbVector2(nnu,pppi)
USE ParGlob
IMPLICIT NONE

INTEGER h, s
DOUBLE PRECISION nnu(I), pppi(J), aux(I,J)

DO h=1,I
 DO s=1,J
   IF (pppi(s).GT.0.d0) THEN
    aux(h,s)=nnu(h)*pppi(s)
   ELSE
    aux(h,s)=1.d-5
   ENDIF
 ENDDO
ENDDO
pHat=reshape(TRANSPOSE(aux),(/I*J/))


END SUBROUTINE ProbVector2
!--------------------------------------------------------------------------------

!--------------------------------------------------------------------------------
! Subroutine to calculate theta_tilde.
!--------------------------------------------------------------------------------

SUBROUTINE emvH01(x)
USE ParGlob
IMPLICIT NONE

INTEGER, PARAMETER:: n = I*J-1, nclin = (I-1)*(J-1), ncnln = 0, lda = nclin
INTEGER, PARAMETER:: ldcj = 1, ldr = n , liw= 3*n+nclin+2*ncnln, lw=530
INTEGER  iter, ifail, istate(n+nclin+ncnln), iwork(liw), iuser(1), nstate
DOUBLE PRECISION objf, A(nclin,n), user(1), work(lw), R(ldr,n), C(ncnln), CJAC(ldcj,n)
DOUBLE PRECISION clamda(n+nclin+ncnln), bl(n+nclin+ncnln), bu(n+nclin+ncnln), x(n), objgrd(n)
EXTERNAL confun, e04ucf, e04uef, objfun

A=0.d0
A(:,I:I*J-1)=RR
bl(1:n)=-1.d6
bl(n+1:n+nclin)=0.d0
bu=1.d6
ifail = -1
CALL e04uef ('INFINITE BOUND SIZE = 1.e5')
CALL e04uef ('ITERATION LIMIT = 250')
CALL e04uef ('PRINT LEVEL = 0')
CALL e04ucf(n, nclin, ncnln, lda, ldcj, ldr, A, bl, bu, confun, objfun, iter, istate, C,&
  CJAC,clamda,objf, objgrd, R, x, iwork, liw, work, lw, iuser, user, ifail)
betatil=x(I:I*J-1)
fail=ifail
END SUBROUTINE emvH01

SUBROUTINE objfun(mode, n, x, objf, objgrd, nstate, iuser, user)
USE ParGlob
IMPLICIT NONE
INTEGER  mode, n, iuser(1), nstate
DOUBLE PRECISION objf, objgrd(n), x(n), user(1)

CALL ProbVector(x(I:I*(J-1)))
IF (mode .EQ.0 .OR. mode .EQ.2) THEN
  objf =-SUM(nn*LOG(pr))
ENDIF
IF (mode .EQ.1 .OR. mode .EQ.2) THEN
  objgrd=MATMUL(TRANSPOSE(W),SUM(nn)*pr-nn)
ENDIF
END

SUBROUTINE confun (mode, ncnln, g, ldcj, needc, x, c, cjac, nstate, iuser, user)
INTEGER mode, ncnln, g, ldcj, needc(*), nstate, iuser(*)
DOUBLE PRECISION x(*), c(*), cjac(ldcj,*), user(*)

END

!--------------------------------------------------------------------------------
! Subroutine to calculate T-statistic.
!--------------------------------------------------------------------------------

FUNCTION estT(lan)
USE ParGlob
IMPLICIT NONE

DOUBLE PRECISION estT, lan, aux, n
INTEGER h

n=SUM(nn)
aux=0.d0
IF ((lan .GE. -1.d-9) .AND. (lan .LE. 1.d-9)) THEN    !lan=0
 DO h=1,I*J
  IF ((pr(h).GT.0.d0).AND.(pHat(h).GT.0.d0)) THEN
   aux=aux+nn(h)*LOG(pr(h)/pHat(h))
  ENDIF
 ENDDO
 estT=2.d0*aux
ELSE
 IF ((lan .GE. -1.d0-1.d-9) .AND. (lan .LE. -1.d0+1.d-9)) THEN    !lan=-1
  DO h=1,I*J
   IF ((pr(h).GT.0.d0).AND.(pHat(h).GT.0.d0).AND.(nn(h).GT.0.5d0)) THEN
    aux=aux+pHat(h)*LOG((n*pHat(h))/nn(h))
    aux=aux-pr(h)*LOG((n*pr(h))/nn(h))
   ENDIF
  ENDDO
  estT=2.d0*n*aux
 ELSE     !lan<>0, lan<>-1
  DO h=1,I*J
   IF ((pr(h).GT.0.d0).AND.(pHat(h).GT.0.d0).AND.(nn(h).GT.0.5d0)) THEN
    aux=aux+nn(h)*((nn(h)/(n*pHat(h)))**lan-(nn(h)/(n*pr(h)))**lan)
   ENDIF
  ENDDO
  estT=2.d0*aux/(lan*(1.d0+lan))
 ENDIF
ENDIF

END FUNCTION estT

!--------------------------------------------------------------------------------
! Subroutine to calculate S-statistic.
!--------------------------------------------------------------------------------

FUNCTION estS(lan)
USE ParGlob
IMPLICIT NONE

DOUBLE PRECISION estS, lan, aux, n
INTEGER h

n=SUM(nn)
aux=0.d0
IF ((lan .GE. -1.d-9) .AND. (lan .LE. 1.d-9)) THEN    !lan=0
 DO h=1,I*J
  IF ((pr(h).GT.0.d0).AND.(pHat(h).GT.0.d0)) THEN
   aux=aux+pr(h)*LOG(pr(h)/pHat(h))
  ENDIF
 ENDDO
 estS=2.d0*n*aux
ELSE
 IF ((lan .GE. -1.d0-1.d-9) .AND. (lan .LE. -1.d0+1.d-9)) THEN    !lan=-1
  DO h=1,I*J
   IF ((pr(h).GT.0.d0).AND.(pHat(h).GT.0.d0)) THEN
    aux=aux+pHat(h)*LOG(pHat(h)/pr(h))
   ENDIF
  ENDDO
  estS=2.d0*n*aux
 ELSE     !lan<>0, lan<>-1
  DO h=1,I*J
   IF ((pr(h).GT.0.d0).AND.(pHat(h).GT.0.d0)) THEN
    aux=aux+(pr(h)**(lan+1.d0))/(pHat(h)**lan)
   ENDIF
  ENDDO
  estS=2.d0*n*(aux-1.d0)/(lan*(1.d0+lan))
 ENDIF
ENDIF

END FUNCTION estS

!--------------------------------------------------------------------------------
! Subroutine to calculate matrix K.
!--------------------------------------------------------------------------------

SUBROUTINE KMatrices()
USE ParGlob
IMPLICIT NONE

INTEGER n

k1=0.d0
DO n=1,I-1
 k1(n,n)=(nu(n)+nu(n+1))/(nu(n)*nu(n+1))
 IF (n.GE.2) THEN
   k1(n,n-1)=-1.d0/nu(n)
 ENDIF
 IF (n.LE.I-2) THEN
   k1(n,n+1)=-1.d0/nu(n+1)
 ENDIF
ENDDO

k2=0.d0
DO n=1,J-1
 k2(n,n)=(ppi(n)+ppi(n+1))/(ppi(n)*ppi(n+1))
 IF (n.GE.2) THEN
   k2(n,n-1)=-1.d0/ppi(n)
 ENDIF
 IF (n.LE.J-2) THEN
   k2(n,n+1)=-1.d0/ppi(n+1)
 ENDIF
ENDDO

END SUBROUTINE KMatrices

!--------------------------------------------------------------------------------
! Subroutine to calculate matrix H.
!--------------------------------------------------------------------------------

SUBROUTINE HMatrix()
USE ParGlob
IMPLICIT NONE

CALL Kronecker(I-1,I-1,k1,J-1,J-1,k2,hh)

END SUBROUTINE HMatrix


!--------------------------------------------------------------------------------
! Soubrotine to calculate p-values in terms of a specific lambda: T(lam) o S(lam)
!--------------------------------------------------------------------------------

FUNCTION pval(est)
USE ParGlob
IMPLICIT NONE

INTEGER n, ifail
DOUBLE PRECISION pval, est, aux, g01ecf

IF (est.LE.0.d0) THEN
aux=1.d0
ELSE
 aux=0.d0
 DO n=1,(I-1)*(J-1)
  ifail=-1
  aux=aux+g01ecf('U',est,n*1.d0,ifail)*we((I-1)*(J-1)-n)
 ENDDO
 IF (est.LT.0) THEN
  aux=aux+we((I-1)*(J-1))
 ENDIF
ENDIF
pval=aux

END FUNCTION pval
\end{verbatim}

\section{Fortran code: simulation.f95}
\begin{verbatim}
!--------------------------------------------------------------------------------
! This program is only valid for 2 by 3 contingency tables
! (for other sizes some changes must be done:
! change the value of J and follow the formulas of the weights)
! To run it, the NAG library is required to have installed
! The FORTRAN program generates the outputs in several text files
!--------------------------------------------------------------------------------
MODULE ParGlob
INTEGER fail
INTEGER, PARAMETER :: I=2, J=3, nrr=25000, nlam=301
DOUBLE PRECISION pr(I*J), W(I*J,I*J-1), RR((I-1)*(J-1),I*(J-1)), betatil(I*(J-1)), &
  pHat(I*J), zz((I-1)*(J-1)), tbt((I-1)*(J-1),(I-1)*(J-1)), bb((I-1)*(J-1),(I-1)*(J-1)),&
  we(0:(I-1)*(J-1)), k1((I-1),(I-1)), k2((J-1),(J-1)), hh((I-1)*(J-1),(I-1)*(J-1)), &
  hInv((I-1)*(J-1),(I-1)*(J-1)), ntt, nu(I), ppi(J), nn(I*J), ppit(I,J), un,&
  sample(nrr,I*J), odds(I-1,J-1), lamb(nlam)
DOUBLE PRECISION, PARAMETER:: nt(I) = (/16.d0,20.d0/), starting=-1.5d0, ending=3.d0, &
  del=0.d0, pi=3.14159265358979323846264338327950d0
      !if nlam=1, the program only consideres the ending
END MODULE ParGlob
!--------------------------------------------------------------------------------

PROGRAM simulation
USE ParGlob
IMPLICIT NONE

INTEGER n, m, kk, rep, ifail
DOUBLE PRECISION estT, estS, pval, table(I,J), contT(nlam), contS(nlam), iniTheta(I*J-1),&
             marg(J), rank(J), wilc, meanWilc, sdWilc, pValWilc, g01eaf, contW

DO n=1,nlam-1
 lamb(n)=starting+(ending-starting)*(n*1.d0-1.d0)/(nlam*1.d0)
ENDDO
lamb(nlam)=ending
contT=0.d0
contS=0.d0
contW=0.d0
DO n=1,I
 DO m=1,J
  ppit(n,m)=(1.d0/3.d0)*((1.d0+n*(m-1.d0)*del)/(1.d0+n*del))
 ENDDO
ENDDO
DO n=1,I-1
 DO m=1,J-1
  odds(n,m)=ppit(n,m)*ppit(n+1,m+1)/(ppit(n+1,m)*ppit(n,m+1))
 ENDDO
ENDDO
ntt=SUM(nt)
nu=nt/ntt
CALL DesignM()

CALL RestricM()

CALL G05CBF(150)
CALL generaMult()
DO rep=1,nrr
 nn=sample(rep,:)
 DO n=1,I*J
  IF (nn(n).LE.0.d0) THEN
   nn(n)=1.d-5
  ENDIF
 ENDDO
 marg=nn(1:J)+nn(J+1:2*J)
 rank=0.d0
 DO kk=2,J
  rank(kk)=rank(kk-1)+marg(kk-1)
 ENDDO
 rank=rank+(marg+1.d0)/2.d0
 wilc=SUM(rank*nn(1:J))
 meanWilc=nt(1)*(nt(1)+nt(2)+1.d0)/2.d0
 sdWilc=nt(1)*nt(2)*(nt(1)+nt(2)+1.d0)/12.d0
 sdWilc=sdWilc-nt(1)*nt(2)*SUM(marg**3-marg)/(12.d0*(nt(1)+nt(2))*(nt(1)+nt(2)-1.d0))
 sdWilc=SQRT(sdWilc)
 wilc=(wilc-meanWilc)/sdWilc
 ifail=-1
 pValWilc=g01eaf('L',wilc,ifail)
 table=TRANSPOSE(RESHAPE(nn,(/J,I/)))
 DO m=1,J
  ppi(m)=SUM(table(:,m))/ntt
 ENDDO
 iniTheta=0.d0
 CALL emvH01(iniTheta)
 IF (fail.NE.0) THEN
  iniTheta=0.1d0
  CALL emvH01(iniTheta)
  IF (fail.NE.0) THEN
   iniTheta=-0.1d0
   CALL emvH01(iniTheta)
  ENDIF
 ENDIF

 21 FORMAT (20F10.4)
 22 FORMAT (20F15.10)

 CALL ProbVector2(nu,ppi)
 CALL Kmatrices()
 CALL hMatrix()
 we(2)=ACOS(hh(1,2)/SQRT(hh(1,1)*hh(2,2)))/(2.d0*pi)
 we(1)=0.5d0
 we(0)=0.5d0-we(2)

 IF (pValWilc.LE.0.05d0) THEN
  contW=contW+1.d0
 ENDIF
 DO n=1,nlam
  IF (pval(estT(lamb(n))).LE.0.05d0) THEN
   contT(n)=contT(n)+1.d0
  ENDIF
  IF (pval(estS(lamb(n))).LE.0.05d0) THEN
   contS(n)=contS(n)+1.d0
  ENDIF
 ENDDO
ENDDO
 OPEN (10, FILE = "SignLevT-2S.DAT", action="write",status="replace")
 WRITE(10,*) "    **     significance levels for T-test Statistics     ** "
 WRITE(10,*) "    ------------------------------------------------------- "
 DO n=1,nlam
  WRITE(10,21) REAL(lamb(n)),REAL(contT(n)/(nrr*1.d0))
 ENDDO
 CLOSE(10)

 OPEN (10, FILE = "SignLevS-2S.DAT", action="write",status="replace")
 WRITE(10,*) "    **     significance levels for S-test Statistics     ** "
 WRITE(10,*) "    ------------------------------------------------------- "
 DO n=1,nlam
  WRITE(10,21) REAL(lamb(n)),REAL(contS(n)/(nrr*1.d0))
 ENDDO
 CLOSE(10)

 OPEN (10, FILE = "Wilcoxon-2S.DAT", action="write",status="replace")
 WRITE(10,*) "    **     significance level for Wilcoxon Statistics     ** "
 WRITE(10,*) "    ------------------------------------------------------- "
 WRITE(10,*) REAL(contW/(nrr*1.d0))
 CLOSE(10)

END PROGRAM simulation
!--------------------------------------------------------------------------------
! This soubrutine calculates the design matrix of a saturated log-linear model
! with canonical parametrization
!--------------------------------------------------------------------------------
SUBROUTINE DesignM()
USE ParGlob
IMPLICIT NONE
INTEGER h
DOUBLE PRECISION one_I(I), one_J(J), A(I,I-1), B(J,J-1), W12(I*J,(I-1)*(J-1)), &
                 W1(I*J,I-1), W2(I*J,J-1)

ONE_I=1.d0
ONE_J=1.d0
A=0.d0
DO h=1,I-1
 A(h,h)=1.d0
ENDDO
B=0.d0
DO h=1,J-1
 B(h,h)=1.d0
ENDDO

CALL Kronecker(I,I-1,A,J,1,ONE_J,W1)
CALL Kronecker(I,1,ONE_I,J,J-1,B,W2)
CALL Kronecker(I,I-1,A,J,J-1,B,W12)

W(:,1:I-1)=W1
W(:,I:I+J-2)=W2
W(:,I+J-1:I*J-1)=W12


END SUBROUTINE DesignM
!-------------------------------------------------------------------------------


!--------------------------------------------------------------------------------
!--------------------------------------------------------------------------------
! This soubrutines calculates the restriction matrix
!--------------------------------------------------------------------------------
SUBROUTINE RestricM()
USE ParGlob
IMPLICIT NONE
INTEGER h
DOUBLE PRECISION R2((I-1)*(J-1),J-1), R12((I-1)*(J-1),(I-1)*(J-1)), GI(I-1,I-1), &
                 GJ(J-1,J-1)

GI=0.d0
DO h=1,I-1
  GI(h,h)=1.d0
  IF (h.LT.I-1) THEN
    GI(h,h+1)=-1.d0
  ENDIF
ENDDO
GJ=0.d0
DO h=1,J-1
  GJ(h,h)=1.d0
  IF (h.LT.J-1) THEN
    GJ(h,h+1)=-1.d0
  ENDIF
ENDDO
R2 = 0.d0
CALL Kronecker(I-1,I-1,GI,J-1,J-1,GJ,R12)
RR(1:(I-1)*(J-1),1:J-1) = R2
RR(1:(I-1)*(J-1),J:I*(J-1)) = R12

END SUBROUTINE RestricM
!--------------------------------------------------------------------------------
!--------------------------------------------------------------------------------
! Given matrices A and B, this subroutines calculates C as the Kronecker product
! A's dimension n by m
! B's dimension p by q
!--------------------------------------------------------------------------------
SUBROUTINE Kronecker(n,m,A,p,q,B,C)
IMPLICIT NONE

INTEGER n, m, p, q
DOUBLE PRECISION A(n,m), B(p,q), C(n*p,m*q)
INTEGER i, j, k, d

DO i=1,n
 DO j=1,m
  DO k=1,p
   DO d=1,q
    C((i-1)*p+k,(j-1)*q+d) = A(i,j)*B(k,d)
   ENDDO
  ENDDO
 ENDDO
ENDDO

END SUBROUTINE Kronecker
!--------------------------------------------------------------------------------
 !--------------------------------------------------------------------------------
! Given
! a) vector theta
! b) the design matrix X=(1,W)
! this subroutine calculates the probabilities of a log-linear model.
!--------------------------------------------------------------------------------
SUBROUTINE ProbVector(beta)
USE ParGlob
IMPLICIT NONE

INTEGER n
DOUBLE PRECISION beta(I*(J-1)), theta(I*J-1), u

theta(I:I*J-1)=beta
u=LOG(nt(I))-LOG(ntt)-LOG(1.d0+SUM(EXP(beta(1:J-1))))
DO n=1,I-1
 theta(n)=LOG(nt(n))-LOG(ntt)-u  &
         -LOG(1.d0+SUM(EXP(beta(1:J-1)+beta(n*(J-1)+1:(n+1)*(J-1)))))
ENDDO

pr=EXP(MATMUL(W,theta))*EXP(u)

END SUBROUTINE ProbVector
!--------------------------------------------------------------------------------
!--------------------------------------------------------------------------------
! Subroutine to calculate p(theta-hat)
!--------------------------------------------------------------------------------
SUBROUTINE ProbVector2(nnu,pppi)
USE ParGlob
IMPLICIT NONE

INTEGER h, s
DOUBLE PRECISION nnu(I), pppi(J), aux(I,J)

DO h=1,I
 DO s=1,J
   IF (pppi(s).GT.0.d0) THEN
    aux(h,s)=nnu(h)*pppi(s)
   ELSE
    aux(h,s)=1.d-5
   ENDIF
 ENDDO
ENDDO
!Nuestros vectores est\'{a}n en orden lexicogr\'{a}fico, por eso trasponemos
pHat=reshape(TRANSPOSE(aux),(/I*J/))


END SUBROUTINE ProbVector2
!--------------------------------------------------------------------------------
!--------------------------------------------------------------------------------
! Subroutine to calculate theta_tilde.
!--------------------------------------------------------------------------------

SUBROUTINE emvH01(x)
USE ParGlob
IMPLICIT NONE

INTEGER, PARAMETER:: n = I*J-1, nclin = (I-1)*(J-1), ncnln = 0, lda = nclin
INTEGER, PARAMETER:: ldcj = 1, ldr = n , liw= 3*n+nclin+2*ncnln, lw=530
INTEGER  iter, ifail, istate(n+nclin+ncnln), iwork(liw), iuser(1), nstate
DOUBLE PRECISION objf, A(nclin,n), user(1), work(lw), R(ldr,n), C(ncnln), CJAC(ldcj,n)
DOUBLE PRECISION clamda(n+nclin+ncnln), bl(n+nclin+ncnln), bu(n+nclin+ncnln), x(n), &
     objgrd(n)
EXTERNAL confun, e04ucf, e04uef, objfun

A=0.d0
A(:,I:I*J-1)=RR
bl(1:n)=-1.d6
bl(n+1:n+nclin)=0.d0
bu=1.d6
ifail = -1
CALL e04uef ('INFINITE BOUND SIZE = 1.e5')
CALL e04uef ('ITERATION LIMIT = 250')
CALL e04uef ('PRINT LEVEL = 0')
CALL e04ucf(n, nclin, ncnln, lda, ldcj, ldr, A, bl, bu, confun, objfun, iter, istate, C,&
 CJAC, clamda, objf, objgrd, R, x, iwork, liw, work, lw, iuser, user, ifail)
betatil=x(I:I*J-1)
fail=ifail
END SUBROUTINE emvH01

SUBROUTINE objfun(mode, n, x, objf, objgrd, nstate, iuser, user)
USE ParGlob
IMPLICIT NONE
INTEGER  mode, n, iuser(1), nstate
DOUBLE PRECISION objf, objgrd(n), x(n), user(1)

CALL ProbVector(x(I:I*(J-1)))
IF (mode .EQ.0 .OR. mode .EQ.2) THEN
  objf =-SUM(nn*LOG(pr))
ENDIF
IF (mode .EQ.1 .OR. mode .EQ.2) THEN
  objgrd=MATMUL(TRANSPOSE(W),SUM(nn)*pr-nn)
ENDIF
END

SUBROUTINE confun (mode, ncnln, g, ldcj, needc, x, c, cjac, nstate, iuser, user)
INTEGER mode, ncnln, g, ldcj, needc(*), nstate, iuser(*)
DOUBLE PRECISION x(*), c(*), cjac(ldcj,*), user(*)

END



!--------------------------------------------------------------------------------
! Subroutine to calculate T-statistic.
!--------------------------------------------------------------------------------

FUNCTION estT(lan)
USE ParGlob
IMPLICIT NONE

DOUBLE PRECISION estT, lan, aux, n
INTEGER h

n=SUM(nn)
aux=0.d0
IF ((lan .GE. -1.d-9) .AND. (lan .LE. 1.d-9)) THEN    !lan=0
 DO h=1,I*J
  IF ((pr(h).GT.0.d0).AND.(pHat(h).GT.0.d0).AND.(nn(h).GT.0.d0)) THEN
   aux=aux+nn(h)*LOG(pr(h)/pHat(h))
  ENDIF
 ENDDO
 estT=2.d0*aux
ELSE
 IF ((lan .GE. -1.d0-1.d-9) .AND. (lan .LE. -1.d0+1.d-9)) THEN    !lan=-1
  DO h=1,I*J
   IF ((pr(h).GT.0.d0).AND.(pHat(h).GT.0.d0).AND.(nn(h).GT.0.5d0)) THEN
    aux=aux+pHat(h)*LOG((n*pHat(h))/nn(h))
    aux=aux-pr(h)*LOG((n*pr(h))/nn(h))
   ENDIF
  ENDDO
  estT=2.d0*n*aux
 ELSE     !lan<>0, lan<>-1
  DO h=1,I*J
   IF ((pr(h).GT.0.d0).AND.(pHat(h).GT.0.d0).AND.(nn(h).GT.0.5d0)) THEN
    aux=aux+nn(h)*((nn(h)/(n*pHat(h)))**lan-(nn(h)/(n*pr(h)))**lan)
   ENDIF
  ENDDO
  estT=2.d0*aux/(lan*(1.d0+lan))
 ENDIF
ENDIF

END FUNCTION estT


!--------------------------------------------------------------------------------
! Subroutine to calculate S-statistic.
!--------------------------------------------------------------------------------

FUNCTION estS(lan)
USE ParGlob
IMPLICIT NONE

DOUBLE PRECISION estS, lan, aux, n
INTEGER h

n=SUM(nn)
aux=0.d0
IF ((lan .GE. -1.d-9) .AND. (lan .LE. 1.d-9)) THEN    !lan=0
 DO h=1,I*J
  IF ((pr(h).GT.0.d0).AND.(pHat(h).GT.0.d0)) THEN
   aux=aux+pr(h)*LOG(pr(h)/pHat(h))
  ENDIF
 ENDDO
 estS=2.d0*n*aux
ELSE
 IF ((lan .GE. -1.d0-1.d-9) .AND. (lan .LE. -1.d0+1.d-9)) THEN    !lan=-1
  DO h=1,I*J
   IF ((pr(h).GT.0.d0).AND.(pHat(h).GT.0.d0)) THEN
    aux=aux+pHat(h)*LOG(pHat(h)/pr(h))
   ENDIF
  ENDDO
  estS=2.d0*n*aux
 ELSE     !lan<>0, lan<>-1
  DO h=1,I*J
   IF ((pr(h).GT.0.d0).AND.(pHat(h).GT.0.d0)) THEN
    aux=aux+(pr(h)**(lan+1.d0))/(pHat(h)**lan)
   ENDIF
  ENDDO
  estS=2.d0*n*(aux-1.d0)/(lan*(1.d0+lan))
 ENDIF
ENDIF

END FUNCTION estS

!--------------------------------------------------------------------------------
! Subroutine to calculate matrix K.
!--------------------------------------------------------------------------------

SUBROUTINE KMatrices()
USE ParGlob
IMPLICIT NONE

INTEGER n

k1=0.d0
DO n=1,I-1
 k1(n,n)=(nu(n)+nu(n+1))/(nu(n)*nu(n+1))
 IF (n.GE.2) THEN
   k1(n,n-1)=-1.d0/nu(n)
 ENDIF
 IF (n.LE.I-2) THEN
   k1(n,n+1)=-1.d0/nu(n+1)
 ENDIF
ENDDO

k2=0.d0
DO n=1,J-1
 k2(n,n)=(ppi(n)+ppi(n+1))/(ppi(n)*ppi(n+1))
 IF (n.GE.2) THEN
   k2(n,n-1)=-1.d0/ppi(n)
 ENDIF
 IF (n.LE.J-2) THEN
   k2(n,n+1)=-1.d0/ppi(n+1)
 ENDIF
ENDDO

END SUBROUTINE KMatrices

!--------------------------------------------------------------------------------
! Subroutine to calculate matrix H.
!--------------------------------------------------------------------------------
SUBROUTINE HMatrix()
USE ParGlob
IMPLICIT NONE
CALL Kronecker(I-1,I-1,k1,J-1,J-1,k2,hh)
END SUBROUTINE HMatrix
!--------------------------------------------------------------------------------
! Soubrotine to calculate p-values in terms of a specific lambda: T(lam) o S(lam)
!--------------------------------------------------------------------------------

FUNCTION pval(est)
USE ParGlob
IMPLICIT NONE

INTEGER n, ifail
DOUBLE PRECISION pval, est, aux, g01ecf

IF (est.LE.0.d0) THEN
aux=1.d0
ELSE
 aux=0.d0
 DO n=1,(I-1)*(J-1)
  ifail=-1
  aux=aux+g01ecf('U',est,n*1.d0,ifail)*we((I-1)*(J-1)-n)
 ENDDO
 IF (est.LT.0) THEN
  aux=aux+we((I-1)*(J-1))
 ENDIF
ENDIF
pval=aux


END FUNCTION pval

!--------------------------------------------------------------------------------
! Soubrotine to generate Multinomial samples with the parameters specified as
! global parameters (first lines of this program)
!--------------------------------------------------------------------------------

SUBROUTINE generaMult()
USE ParGlob
IMPLICIT NONE

INTEGER n, m, h, s
DOUBLE PRECISION c(I,0:J)
REAL G05CAF

c=0.d0
sample=0.d0
DO n=1,I
 DO h=1,J
  c(n,h)=c(n,h-1)+ppit(n,h)
 ENDDO
ENDDO
DO s=1,nrr
 DO n=1,I
  DO m=1,INT(nt(n))
   un=G05CAF(un)
   h=1
   DOWHILE (.NOT.((un.GE.c(n,h-1)).AND.(un.LT.c(n,h))))
    h=h+1
   ENDDO
   sample(s,(n-1)*J+h)=sample(s,(n-1)*J+h)+1.d0
  ENDDO
 ENDDO
ENDDO

END SUBROUTINE
\end{verbatim}

\end{document}